\documentclass[%
reprint,
amsmath,amssymb,aps,pra,
]{revtex4-2}

\usepackage{float,xcolor,soul,braket,hyperref,tikz,amssymb,amsthm,amsmath,mathtools,multirow,booktabs,bbm}
\usepackage[normalem]{ulem}
\usetikzlibrary{arrows.meta}
\usepackage[caption=false]{subfig}
\usepackage{ragged2e} 
\DeclareCaptionJustification{justified}{\justifying}
\sethlcolor{cyan}

\usepackage{graphicx}
\usepackage{dcolumn}
\usepackage{bm}
\newtheorem{theorem}{Theorem}

\newtheorem{lemma}[theorem]{Lemma}
\newtheorem{prop}[theorem]{Proposition}
\newtheorem*{prop*}{Proposition}
\theoremstyle{remark} 
\newtheorem{remark}[theorem]{Remark}

\renewcommand{\le}{\leqslant}

\renewcommand{\ge}{\geqslant}
\newcommand{\strat}{\mathfrak{S}}
\newcommand{\pA}{\tilde{A}}
\newcommand{\prbP}{\mathbb{P}}

\newcommand{\Rmnum}[1]{\MakeUppercase{\romannumeral#1}}
\newcommand{\SPs}[1]{\textcolor{blue}{\sout{#1}}}
\usepackage{xcolor}

\newcommand{\SPc}[1]{\textcolor{blue}{[SP: #1]}}

\newcommand{\PBc}[1]{\textcolor{teal}{[PB: #1]}}
\newcommand{\PB}[1]{\textcolor{teal}{#1}}

\usepackage{xcolor}

\begin{document}


\title{Minimal Scenarios Witnessing Genuine Multipartite Nonlocality Possibilistically}

\author{Soumyadip Patra}
\email{spatra@lsuneworleans.edu}
\author{Peter Bierhorst} 
\email{plbierho@lsuneworleans.edu}

\affiliation{Department of Mathematics, LSU New Orleans, Louisiana, USA}

\date{\today}

\begin{abstract}
We study a strong form of genuine multipartite nonlocality (GMNL), where an \(n\)-partite scenario admits a quantum strategy generating outcomes belonging to a restricted set with certainty, whereas every strategy built from nonclassical resources shared among at most \(k<n\) parties fails this criterion with positive probability. This generalizes what is referred to as strong contextuality/nonlocality, or pseudotelepathy, beyond ruling out local models. For the simplest five-party Bell scenario of two binary-outcome measurement settings per party, we use the inflation technique to demonstrate that a game with a perfect quantum strategy witnesses a robust separation from models built from bipartite resources under local operations and shared randomness (LOSR), establishing the behavior is LOSR-GMNL. We provide broader context for this result by investigating its simulability with resources shared among three or four parties under different resource models and alternative definitions of GMNL. A central open question is whether \(n\)-party quantum resources can achieve such a separation from models built from \((n-1)\)-party resources, for \(n\ge3\). We make progress here by constructing a family of robustly LOSR-GMNL games with \(\epsilon\)-optimal quantum strategies experimentally indistinguishable in the limit from true probability-one-winning strategies for any desired level of precision.
\end{abstract}

\maketitle

Nonlocality, a phenomenon in which an observed experimental behavior is probabilistically incompatible with any local hidden variable theory, has been demonstrated according to the most demanding experimental specifications~\cite{hensen:2015,abellan:2018,giustina:2015,shalm:2015}. Furthermore, by translating this incompatibility into quantitative limits on an adversary's knowledge, this foundational discovery can be applied to tasks like private randomness generation~\cite{pironio:2010,pironio:2013,bierhorst:2018}, secure key distribution~\cite{acin:2007}, and position verification~\cite{Fan-Yuan2026,Kavuri2026}.

A stronger form of nonlocality exhibits a \textit{possibilistic incompatibility}: only a certain subset of potential experimental outcomes are observed with positive probability, whereas any local hidden variable theory predicts at least one event assigned zero probability. This form of nonlocality is observable in many different configurations \cite{Greenberger1989,Mermin90,Aravind2002,Brassard99}, and the phenomenon has been referred to as \textit{strong contextuality}~\cite{Abramsky_2011,Mansfield17}, or 
\textit{pseudotelepathy}~\cite{Brassard2003,Brassard_2005} when quantum-achievable. This more powerful version of quantum nonlocality holds promise for applications such as zero-error communication through a noisy classical channel~\cite{Cubitt10} (see Proposition 3), certification of entangled states and measurements~\cite{Wu16} (see Appendix C), device-independent quantum key distribution~\cite{Zhen23}, and questions in quantum foundations~\cite{Barbosa26}.

For multipartite experiments, a natural question is whether an observed strongly nonlocal behavior rules out not only local hidden variable theories, but also more capable models allowing bipartite nonclassical resources but not higher-order multipartite nonclassical resources. Such behaviors are considered \textit{genuinely multipartite nonlocal}. A strongest-possible certificate of genuine multipartite nonlocality (GMNL) would be one following strong contextuality in which the observed behavior is restricted to a certain set of outcomes, while every model with at most bipartite nonclassical resources assigns a positive probability to some outcome outside the restricted set. This can be cast as a nonlocal game by considering outcomes in the restricted set to be ``success,'' such that the probability of success (or win probability) of a multipartite quantum strategy is one, whereas the restricted models have win probability less than one. Pursuing strongest certification of a network's underlying resource structure motivates asking whether quantum behaviors can enable such certification, and what operational advantage the certified structure might confer.

While the possibility of extending strong contextuality to GMNL has been explored before, such efforts have been somewhat disparate due to the different definitions of GMNL. The original Svetlichny definition~\cite{svetlichny:1987} classifies as GMNL  behaviors indecomposable as a probabilistic mixture of distributions that each factor into independent lower-partite distributions, originally even allowing signaling in these factors, but later refined to consider only no-signaling ones~\cite{Bancal13}. Such a decomposition was shown to exist in~\cite{Broadbent2006} for the perfect quantum-achievable correlations of the GHZ-Mermin game~\cite{Mermin90}, thus classifying it as non-GMNL. However, the absence of such a decomposition would not necessarily indicate GMNL under more stringent interpretations~\cite{Bierhorst21,schmid:2023,coiteux:2021}, as certain bipartite models, e.g., underlying ``wired" networks of bipartite Popescu-Rohrlich (PR) boxes~\cite{barrett:2005}, or parallel independent bipartite CHSH violations~\cite{schmid:2023}, can actually manifest Svetlichny-GMNL. Important previous work in~\cite{barrettpironio:2005} addresses this by exhibiting that a certain \(5\)-party nonlocal game, which has a quantum winning probability of one, is not perfectly winnable for a broad class of bipartite nonclassical resource models encompassing the examples above that are left out by the Svetlichny definition. The 
argument was later generalized in~\cite{Anshu13} to exhibit a family of pseudotelepathic behaviors that could rule out arbitrarily-large-partite ``wiring'' models---specifically, for any value of $k$ there is a $k^22^{2k-2}$-partite nonlocal game with a quantum winning probability of one, that is incompatible with any underlying network of $k$-partite no-signaling resources~\cite{Anshu13}. These results, however, were not noise-robust and do not rule out entangled measurements of bipartite quantum resources; such models can unlock new behaviors inaccessible to underlying wired networks of PR boxes~\cite{Bierhorst26} and can be considered non-GMNL \cite{coiteux:2021}. In contrast, a recent work~\cite{Brassard26} leverages the inflation technique~\cite{wolfe:2019} to prove that a different five-party pseudotelepathic game is local-operations-shared randomness (LOSR) GMNL. Introduced in~\cite{coiteux:2021}, LOSR-GMNL classifies as GMNL any behavior that cannot be replicated with any underlying network of at-most-bipartite nonclassical resources (possibly superquantum), allowing arbitrary measurements including entangled measurements or even non-quantum generalizations of the concept, and also permitting globally shared classical randomness; this encompasses all lower-partite models discussed above, and can be considered the most comprehensive definition of GMNL. In the scenario of~\cite{Brassard26}, each of the five parties has two settings, while two of them have eight outcomes and the others have two outcomes each. A nine-qubit quantum strategy judiciously measuring three Greenberger-Horne-Zeilinger (GHZ) states can win the game with probability one, while lower-partite models cannot, though
Ref.~\cite{Brassard26} did not derive an explicit robustness bound; robustness is necessary to distinguish the behavior from non-GMNL models experimentally. This work also leaves open the question as to whether simpler scenarios can witness the separation, a question of both practical and conceptual significance.



We demonstrate here that the simplest nontrivial 5-party scenario---the (5,2,2) scenario of five parties, two settings per party, and two outcomes per setting---admits what we will henceforth refer to as a \textit{strongly LOSR-GMNL} behavior: a pseudotelepathic game for which the corresponding behavior is LOSR-GMNL. We do this with a new inflation-technique argument applied to the 5-party game in~\cite{barrett:2006}, which we will refer to as the 5-cycle game following the convention of~\cite{Mathieu2018,Chao17}. We also derive explicit robustness bounds---a winning probability bound of \(121/126\approx0.960\) which improves on previously reported results for the 5-cycle game both in degree (lowering a bound of \(53/54\approx 0.981\) on wirings of bipartite nonclassical resources that was derived in \cite{Mathieu2018}) and scope (ruling out the larger, full class of theories demanded by the LOSR-GMNL definition). The explicit proof techniques are notable since the difficulty of a linear program that might \textit{in principle} generate robustness bounds from inflation technique arguments grows exponentially for multi-partite scenarios, quickly becoming intractable with standard computing resources. We also provide broader context for our results by performing a comprehensive study of the levels of GMNL for this game, finding that four-party nonclassical resources \textit{can} simulate the behavior, so that it is not genuinely \textit{five}-party nonlocal, while also finding it is not simulable with three-party resources at least according to a no-signaling-Svetlichny model. Aspects of our ladder of GMNL results for the 5-cycle game demonstrate intriguing parallels to Mermin's strongly contextual/pseudotelepathic GHZ game~\cite{Mermin90,Mermin93}, and our detailed study of a quantum strategy for optimizing the winning probability of the reduced 4-party version of the 5-cycle game demonstrates interesting properties that may be of independent interest.

The natural follow-up question, recognized in~\cite{Brassard26}, remains whether there exist $n$-party pseudotelepathic games for which no $(n-1)$-party resource model exists, for $n\ge3$. 
We close our paper by demonstrating a family of 3-party nonlocal games whose members admit quantum strategies with winning probabilities arbitrarily close to one, while proving that each exhibits LOSR-GMNL robustly. The argument is straightforwardly generalizable to an $\epsilon$-close-to-strong separation of $n$-party games from $(n-1)$-party strategies. While this does not completely resolve the question, these behaviors are in principle experimentally indistinguishable from true strong LOSR-GMNL in the sense that $\epsilon$ can be chosen small enough so that a losing trial is effectively never observed.

\medskip

\noindent\textit{Genuine multipartite nonlocality of the 5-cycle game}. In the 5-cycle game of Ref.~\cite{DiVincenzoPeres97}, there are five parties \(\mathsf A\), \(\mathsf B\), \(\mathsf C\), \(\mathsf D\), and \(\mathsf E\); each makes one of two measurements labelled \(X\) or \(Z\) and receives an outcome in the set \(\{0,1\}\). With \(R_Q\) denoting outcome \(R\) given setting \(Q\) for party \(\mathsf R\), the success conditions for six measurement configurations are given as the following predicates, where $\oplus$ denotes addition modulo 2:
\begin{equation}\label{e:5game}
\begin{alignedat}{2}
P_1\colon & A_X \oplus B_Z \oplus C_X = 1\,\,\,\,\, & P_2\colon & B_X \oplus C_Z \oplus D_X = 1\\
P_3\colon & C_X \oplus D_Z \oplus E_X = 1\,\,\,\,\, & P_4\colon & D_X \oplus E_Z \oplus A_X = 1\\
P_5\colon & E_X \oplus A_Z \oplus B_X = 1\\
P_6\colon & \rlap{$A_Z \oplus B_Z \oplus C_Z \oplus D_Z \oplus E_Z = 0.$}
\end{alignedat}
\end{equation}
 The game is highly symmetric among the parties, as predicates \(P_2\)--\(P_5\) are all cyclic permutations of \(P_1\). There is a five-qubit quantum strategy that satisfies all predicates with certainty~\cite{DiVincenzoPeres97} which we provide in the Supplemental Material (SM)~\ref{s:qstrat}. Conversely, the game is nonlocal~\cite{DiVincenzoPeres97} and more generally non-simulable with wirings of bipartite arbitrary no-signaling resources~\cite{barrettpironio:2005,Mathieu2018}, in that the winning probability for such models---the average probability of satisfying each of the six predicates when the corresponding measurement configurations are measured with equal probability---is less than one. 
A useful preliminary result, which illustrates the nonlocality of the game, is as follows:
\begin{prop}\label{p:2DetPlayers}
The maximum winning probability of the game when two players respond deterministically on input \(X\) is \(5/6\).
\end{prop}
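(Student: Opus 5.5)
The plan is to prove the two inequalities separately. For the lower bound $5/6$, I would exhibit a fully deterministic local strategy satisfying five of the six predicates. Setting all $Z$ outputs to $0$ satisfies $P_6$. It then remains to choose the five $X$ outputs so that four of the cyclic predicates $P_1,\dots,P_5$ hold; for instance $A_X=1$ and the other $X$ outputs $0$ should work after adjusting if necessary. Such a strategy is in particular deterministic on $X$ for every pair of players.

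The substance is the upper bound. The underlying obstruction is a parity count. In $P_1\oplus\cdots\oplus P_5$ each $X$ variable appears exactly twice and each $Z$ variable exactly once. Hence, whenever a single joint assignment exists, it yields $A_Z\oplus B_Z\oplus C_Z\oplus D_Z\oplus E_Z=1$, contradicting $P_6$. Therefore, if the six context distributions arise as marginals of one global distribution over all ten variables, the union bound gives $\sum_{i=1}^6 \prbP(\lnot P_i)\ge 1$. Averaging over the six configurations then bounds the winning probability by $5/6$.

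The work is thus to show that, when two players respond deterministically on $X$, the remaining (no-signaling) behavior always admits such a global extension. A deterministic $X$ output is a constant, so it can be deleted from every context in which it appears, and the two players' $Z$ outputs stay as ordinary random variables. By the cyclic symmetry there are only two cases to treat: the deterministic players are adjacent on the cycle (say $\mathsf A,\mathsf B$), or non-adjacent (say $\mathsf A,\mathsf C$).

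In each case I would write down the reduced family of contexts. For $\mathsf A,\mathsf C$ this is $\{B_Z\}$, $\{B_X,C_Z,D_X\}$, $\{D_Z,E_X\}$, $\{D_X,E_Z\}$, $\{E_X,A_Z,B_X\}$ and $\{A_Z,\dots,E_Z\}$, and the $\mathsf A,\mathsf B$ case is analogous. I would then glue the context distributions together Vorob'ev-style. Start from the $P_6$ context distribution of the five $Z$ outputs. Attach the remaining $X$ variables one at a time, each conditionally on the $Z$ outputs of the contexts it shares, choosing the order so that at each step no-signaling guarantees that the overlapping marginals agree.

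The main obstacle is exactly this gluing step. After deletion, the surviving $X$ variables still link several contexts, and a naive join may fail because the reduced hypergraph need not be acyclic. If it fails, my fallback is to avoid a global distribution altogether and prove $\sum_i \prbP(\lnot P_i)\ge 1$ directly. The idea is that each predicate involving a deterministic player becomes a two-variable parity constraint, and chaining these through shared marginals should reduce the problem to a CHSH-like chain among the three free players. For that chain, a failure mass of at least one follows from the triangle inequality for disagreement probabilities, $\prbP(U\ne W)\le \prbP(U\ne V)+\prbP(V\ne W)$, applied along the cycle of predicates.
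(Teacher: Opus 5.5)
Your skeleton is sound: the parity count, the union bound over a global distribution, and the reduction to one adjacent and one non-adjacent case are all correct. The gap is the step that carries the whole upper bound, the existence of a global extension. You leave it open, and the recipe you give for it fails.

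The reduced predicate supports form a cyclic hypergraph. For $\mathsf A,\mathsf C$ fixed on $X$, drop the support $\{B_Z\}\subset\{A_Z,\dots,E_Z\}$ and then the variable $B_Z$, which now lies in only one support. Each of the seven remaining variables then lies in exactly two of the five supports, and no support contains another. So by Vorob'ev's criterion, gluing from those marginals alone is not guaranteed.

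Your rule of attaching each $X$ variable ``conditionally on the $Z$ outputs of the contexts it shares'' already breaks for a local strategy. Take $B_X=D_Z=E_X=r$ with $r$ a shared uniform bit and all other outputs constant. Your rule attaches $B_X$ given only $(A_Z,C_Z)$, hence independently of $D_Z$. Then no choice of $E_X$ reproduces both the $(D_Z,E_X)$ and the $(A_Z,B_X,E_X)$ marginals.

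The missing idea is to condition on larger sets. Their joint laws come from input configurations of the full no-signaling behavior that are not predicate supports.
\begin{itemize}
\item For $\mathsf A,\mathsf C$: attach $B_X$ to the all-$Z$ law conditionally on all of $(A_Z,C_Z,D_Z,E_Z)$, using the configuration in which only $\mathsf B$ inputs $X$. Then attach $E_X$ given $(A_Z,B_X,D_Z)$, and $D_X$ given $(B_X,C_Z,E_Z)$.
\item For $\mathsf A,\mathsf B$: attach $D_X$ given $(C_Z,E_Z)$. Then attach $E_X$ given $(A_Z,B_Z,D_Z)$, using the configuration with $\mathsf A,\mathsf B,\mathsf D$ on $Z$ and $\mathsf E$ on $X$. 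Finally attach $C_X$ given $(B_Z,D_Z,E_X)$.
\end{itemize}
At each step no-signaling makes the conditioning marginal agree with the one already built. Every predicate support then lies inside a set whose marginal is exact.

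Completed this way, your argument is a genuinely different route from the paper's. The paper never builds a global law. It applies its parity union-bound lemma repeatedly, each time eliminating one $X$ variable inside one of these same joint configurations, and closes in the all-$Z$ context. Your route instead gives the stronger structural fact that the game marginals are classical. The deterministic count of at most five satisfied predicates then transfers in one step.

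Your fallback is misdirected as phrased. A CHSH-like cycle among the free players is exactly what no-signaling boxes can win perfectly. The inequality $\mathbb{P}(U\neq W)\le\mathbb{P}(U\neq V)+\mathbb{P}(V\neq W)$ needs $U,V,W$ measured jointly in one input configuration. That fails for any cycle through both $D_X$ and $D_Z$, or both $E_X$ and $E_Z$. The chain must eliminate only $X$ variables and close through $P_6$.

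Your explicit lower-bound strategy is also wrong. With all $Z$ outputs $0$, $A_X=1$ and the other $X$ outputs $0$, only $P_1,P_4,P_6$ hold, a win probability of $1/2$. The $X$ constraints form the odd cycle $\mathsf A\mathsf C\mathsf E\mathsf B\mathsf D$, and $(A_X,B_X,C_X,D_X,E_X)=(0,1,1,0,0)$ violates only $P_4$. The paper itself proves only the upper bound.
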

\noindent Note that the bound on winning probability is \textit{not} claimed if we fix only a single player's output; we will see later that the proposition in fact does not hold under this weaker condition. A full proof of Proposition~\ref{p:2DetPlayers} is in SM~\ref{s:prop1proof}, but the basic idea is illustrated by the following argument that rules out simultaneous satisfaction of all predicates when adjacent parties \(\mathsf{A, B}\) deterministically output \(A_X=\alpha, B_X=\beta\). In this case, satisfaction of predicates \(P_1\) and \(P_5\) requires that \(B_Z\oplus C_X=\alpha\oplus 1\) and \(E_X\oplus A_Z=\beta\oplus 1\) hold with certainty, and since \(A_Z,B_Z,C_X,D_Z,E_X\) can be jointly measured, these facts combine with predicate \(P_3\) to imply \(B_Z\oplus D_Z\oplus A_Z=1\oplus\alpha\oplus\beta\) always holds. Moreover, predicates \(P_2\) and \(P_4\) respectively imply \(C_Z\oplus D_X = 1 \oplus \beta\) and \(D_X\oplus E_Z=1\oplus \alpha\), whereby the joint measureability of \(C_Z,D_X,E_Z\) implies \(C_Z\oplus E_Z=\alpha\oplus \beta\). The two results obtained so far combine to imply \(A_Z\oplus B_Z\oplus C_Z\oplus D_Z\oplus E_Z=1\), but this is incompatible with \(P_6\).

\begin{figure}[!htb]
\centering
\includegraphics[width=1.0\linewidth]{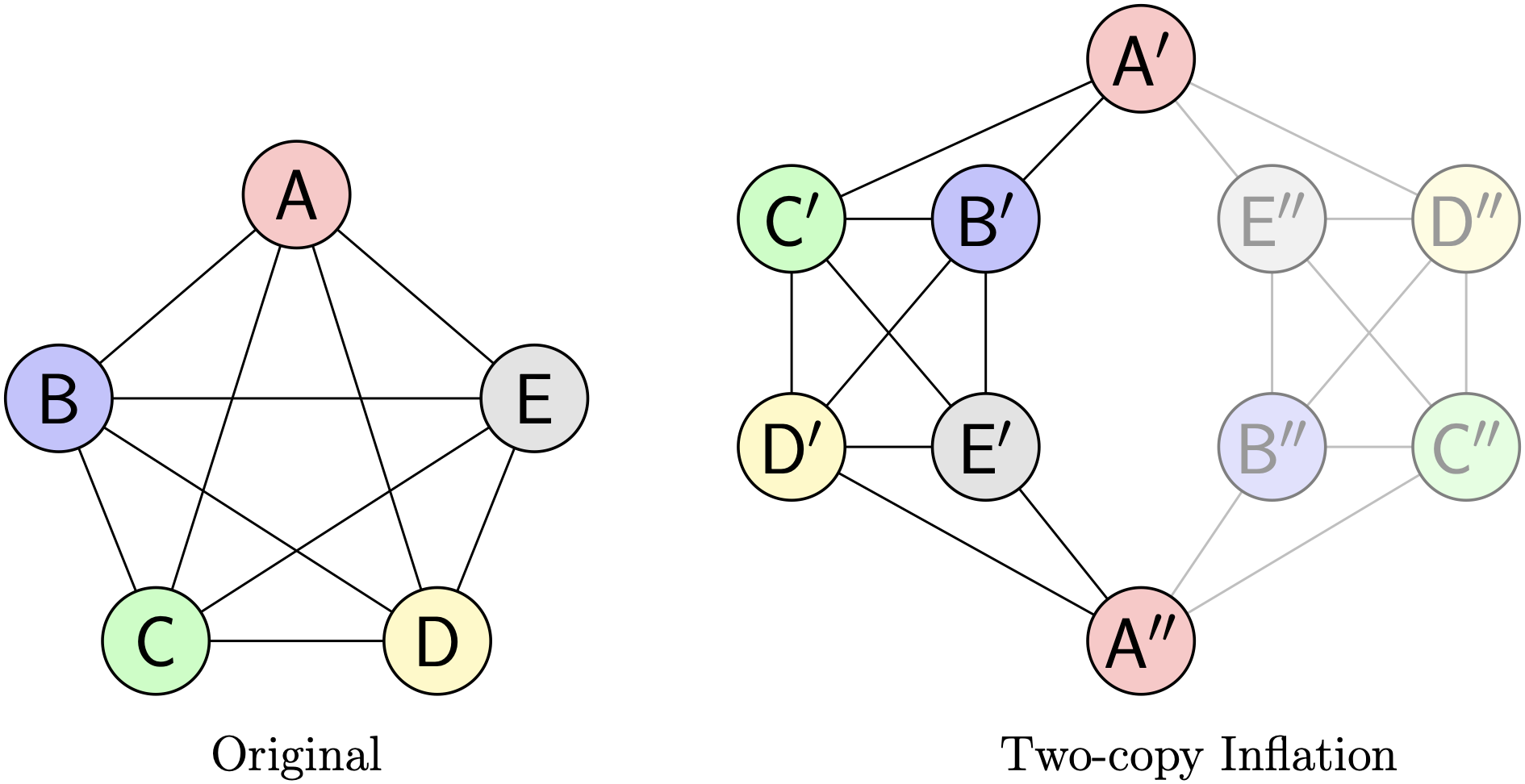}
\caption{The pairwise-resource network for the \(5\)-cycle game, and its two-copy non-fanout inflation used in our proof of Proposition~\ref{prop:ImprovedBndBipartiteRes}. Nodes and edges represent players and bipartite no-signaling resources, respectively. All players have access to globally shared classical randomness (not depicted). In the inflation network on the right, the left square with nodes \(\mathsf{B}',\mathsf{C}',\mathsf{D}',\mathsf{E}'\) gives a \((\mathsf{B}',\mathsf{C}',\mathsf{D}',\mathsf{E}')\)-marginal which is importable from the original network in the sense that its underlying resource network is the same and so generates the same marginal distribution. The triples \((\mathsf{A}',\mathsf{B}',\mathsf{C}')\) and \((\mathsf{D}',\mathsf{E}',\mathsf{A}'')\) are also importable, while \(\mathsf{A}'\) and \(\mathsf{A}''\) have disjoint source ancestry. The faded part of the diagram contains copies of players (nodes) and resources (edges) not used in our proof.}
\label{fig:Two-copyInflation}
\end{figure}
We can use Proposition~\ref{p:2DetPlayers} to show that any correlation satisfying all predicates in~\eqref{e:5game} with probability one is not just nonlocal but exhibits LOSR-GMNL, strengthening the results of~\cite{barrettpironio:2005,Mathieu2018} to rule out a broader class of bipartite-only-resource models. 
We use the inflation technique~\cite{wolfe:2019} to prove our result as follows. Assume, by way of contradiction, that all predicates in~\eqref{e:5game} hold with certainty for a model compatible with the underlying resource network on the left side of Figure~\ref{fig:Two-copyInflation}. Since this stipulation holds for the average over any globally shared classical randomness, it must hold for each individual value of the classical randomness, so we effectively ignore this aspect by assuming that the classical randomness takes one of the fixed values, and that this value is instantiated for both the original and inflated network. Now observe that predicates \(P_1\) and \(P_4\) can be simultaneously queried since \(A_X,B_Z,C_X,D_X,E_Z\) are jointly measurable; their simultaneous satisfaction then implies a key observation that \(B_Z\oplus C_X \oplus D_X \oplus E_Z=0\), since, depending on the value of \(A_X\), either \(B_Z\oplus C_X\) and \(D_X \oplus E_Z\) are both \(0\) or both \(1\). Now, moving from the left side of Figure~\ref{fig:Two-copyInflation} to the inflated copy on the right, we have $A'_X\oplus B'_Z \oplus C'_X=1$, $B'_Z\oplus C'_X \oplus D'_X \oplus E'_Z=0$, and $D'_X \oplus E'_Z\oplus A''_X=1 $ with certainty, because the underlying subnetworks are isomorphic to the original graph for these subsets of parties. As all three of these primed predicates are simultaneously observable in the two-copy inflation, they result in the implications \(A'_X=\alpha \Rightarrow B'_Z\oplus C'_X =1\oplus \alpha \Rightarrow D'_X\oplus E'_Z=1\oplus\alpha \Rightarrow A''_X=\alpha\), so that \(A'_X=A''_X\) with probability one. However, since \(A'_X\) and \(A''_X\) are causally disconnected, they must be independent; their equality then implies that they must be constant, and so by one-party-subnetwork isomorphism \(A_X\) must be constant as well. By repeating the above argument with an analogous two-copy inflation centered at \(B_X\), we find \(B_X\) is also constant, and thus, by Proposition~\ref{p:2DetPlayers}, at most five of the six predicates in~\eqref{e:5game} can be satisfied.

The above ``obstruction'' argument shows that multipartite nonclassical resources are necessary to simultaneously satisfy all predicates in~\eqref{e:5game}, but does not provide an explicit bound on the winning probability for the pairwise-resource model. For this we have the following: 


\begin{prop}\label{prop:ImprovedBndBipartiteRes}
The average winning probability of the 5-cycle game for every behavior in the pairwise-resource network model is bounded above by $121/126\approx0.960$.
\end{prop}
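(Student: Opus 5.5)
We will make the obstruction argument above quantitative. Let $\delta_i$ denote the probability that predicate $P_i$ fails, so the average winning probability is $1-\frac16\sum_{i=1}^6\delta_i$, and write $\delta=\sum_i\delta_i$. As in the obstruction argument, we first condition on a fixed value of the globally shared randomness. The bound we prove will be linear in the failure probabilities, so it survives averaging, and we may treat the randomness as fixed. The goal is an inequality of the form $\sum_i c_i\delta_i\ge 1$ with weights satisfying $\max_i c_i\le 126/5\cdot\frac{1}{6}\cdot\ldots$; more precisely, we need $\delta\ge 5/21$ in the worst case, which gives $1-\frac{5}{126}=\frac{121}{126}$.

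\textbf{Step 1 (approximate determinism of $A_X$).} Work in the two-copy inflation of Figure~\ref{fig:Two-copyInflation}. By importability, the triple $(\mathsf A',\mathsf B',\mathsf C')$ violates $P_1'$ with probability $\delta_1$, the pair $(\mathsf D',\mathsf E',\mathsf A'')$ violates $P_4'$ with probability $\delta_4$, and the four-party marginal on $\mathsf B',\mathsf C',\mathsf D',\mathsf E'$ violates $B'_Z\oplus C'_X\oplus D'_X\oplus E'_Z=0$ with probability at most $\delta_1+\delta_4$, by a union bound in the original network. A union bound in the inflation then gives $\Pr(A'_X\neq A''_X)\le 2(\delta_1+\delta_4)$. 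Since $A'_X$ and $A''_X$ are independent and identically distributed with $\Pr(A_X=1)=p$, we have $2p(1-p)\le 2(\delta_1+\delta_4)$, so $\min(p,1-p)\le 2(\delta_1+\delta_4)$ up to the factor coming from $p(1-p)\ge \min(p,1-p)/2$. Hence $A_X$ equals a constant $\alpha$ except with probability $\eta_A\le c(\delta_1+\delta_4)$. The same argument with the rotated inflations gives analogous bounds for every party. For example, $\eta_B\le c(\delta_2+\delta_5)$, with the predicate pairs obtained by cyclic shift.

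\textbf{Step 2 (robust Proposition~\ref{p:2DetPlayers}).} Replace $A_X,B_X$ by the constants $\alpha,\beta$. The modified strategy differs from the true one on each configuration involving $A_X$ or $B_X$ with probability at most $\eta_A$ or $\eta_B$. By Proposition~\ref{p:2DetPlayers}, the modified strategy has total failure at least $1$ summed over the six predicates; in fact its chain argument shows that the sum of its predicate failures is at least $1$. So $\delta+k_A\eta_A+k_B\eta_B\ge 1$, where $k_A$ and $k_B$ count the predicates involving $A_X$ and $B_X$ respectively ($A_X$ appears in $P_1,P_4$, and $B_X$ in $P_2,P_5$). Combining this with Step 1 gives a linear inequality in the $\delta_i$.

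\textbf{Step 3 (optimizing).} Average the resulting inequality over the five cyclic choices of the adjacent pair, and possibly also non-adjacent pairs, to symmetrize the coefficients. Then read off $\delta\ge 5/21$.

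The main obstacle is tightening the constants so that they land exactly at $121/126$. The crude factor in Step 1, from $\min(p,1-p)\le 2p(1-p)$, is likely to be lossy. I would instead bound directly the probability that the modified strategy differs from the original, using $\Pr(A_X\neq\alpha)=\min(p,1-p)$, and also exploit the joint measurability of the configurations more carefully. If these hand bounds do not reach $121/126$, I would set up a small linear program over the weights $c_i$ and the union-bound slack variables to find the best combination.
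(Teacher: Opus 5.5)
Your proposal is correct and is essentially the paper's proof. It uses the same two-copy inflation union bound $\Pr(A'_X\neq A''_X)\le 2(\delta_1+\delta_4)$, the same i.i.d.\ step giving $\Pr(A_X\neq\alpha)\le 2(\delta_1+\delta_4)$ (so $c=2$ exactly, since $\min(p,1-p)\le 2p(1-p)$ with no further loss), and the same overwrite argument with Proposition~\ref{p:2DetPlayers} giving $\delta+2\eta_u+2\eta_v\ge 1$. Your worry about the constants is unnecessary: $\sum_u\eta_u\le 4\delta_T$ with $\delta_T=\sum_{i=1}^5\delta_i$, so averaging over the five adjacent pairs (the paper instead takes the best of all ten pairs, which gives the same factor) yields $\delta+\tfrac{16}{5}\delta_T\ge 1$, hence $\delta\ge 5/21$ and exactly $121/126$, with no linear program needed.
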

\noindent The proof of Proposition~\ref{prop:ImprovedBndBipartiteRes} is in SM~\ref{s:prop2proof}. It uses a higher-order inflation, and so the proof's methods for elevating an obstruction argument to a robustness bound may be useful for future studies of strong LOSR-GMNL, due to the inherently larger numbers of parties causing higher order inflations to grow exponentially in complexity, thereby constraining automated linear programming approaches. 

\medskip

\noindent\textit{Comprehensive analysis of genuine-multipartite nonlocality of 5-cycle game.} 
As mentioned earlier, if only a single player's output is fixed in Proposition~\ref{p:2DetPlayers}, a no-signaling nonlocal distribution over the other four parties \textit{does} allow for the predicates in~\eqref{e:5game} to be satisfied with probability \(1\). This  demonstrates the 5-cycle game is not genuinely five-partite nonlocal (i.e., does not require at least five-partite nonclassical resources). To elaborate, if we fix one player's outputs as \(A_X=r,A_Z=s\oplus 1\) for some choice of \(r,s\in\{0,1\}\), the predicates in~\eqref{e:5game} reduce in the following manner (for the example of \(r=s\oplus1=0\)):
\begin{equation}\label{e:4game}
\begin{split}
&P_1\colon B_Z\oplus C_X=1\,\,\,\,\,\,\,\,\,\,\,\,\,\,\,\,\,\,\,\,\,\, P_2\colon B_X\oplus C_Z\oplus D_X=1 \\
&P_3\colon C_X\oplus D_Z\oplus E_X=1\,\,\,\,\, P_4\colon D_Z\oplus E_Z=1 \\
&P_5\colon E_X\oplus B_X=1\,\,\,\,\,\,\,
P_6\colon B_Z\oplus C_Z\oplus D_Z\oplus E_Z=0.
\end{split}
\end{equation}
Exhibiting a 4-party no-signaling distribution that can simultaneously satisfy the above conditions is a tractable linear program which we solved; the results are in SM
Section~\ref{s:4-1split}. There we also describe how the four solutions to the linear program for each of the four choices of \(r,s\) can be adapted into four five-party distributions, each with constant $A_X=r$ and $A_Z=s\oplus 1$, such that when these are averaged together the mixture distribution does not just win the 5-cycle game but furthermore replicates the full statistics of the winning quantum distribution given in SM Section \ref{s:qstrat}. This is a no-signaling Svetlichny-type model, and so it refutes full five-partite GMNL according to both the Svetlichny as well as the more expansive LOSR-GMNL definitions. 

We follow up on the above finding by asking whether \textit{quantum} strategies can ``win" the four-player game of~\eqref{e:4game}, which is nonlocal with a local bound of \(5/6\)---if this were so, it would allow a simpler four-player game to witness strong LOSR-GMNL with a quantum realization. 
Intriguingly, we are able to bound the maximal quantum winning probability strictly below 1 (via the NPA hierarchy~\cite{navascues:2007,navascues:2008}), while exhibiting that quantum mechanics can beat the local bound with a value of \((4+\sqrt{2})/6\) (via an explicit example). Our NPA bound matches the explicit quantum value to eight decimal places, indicating this is the true quantum optimum. We describe these findings in fuller detail in SM Section~\ref{s:qtest}, where we illustrate compelling features of the quantum winning strategy that may be
of independent interest. 

The above results exhibit an intriguing parallel to the GHZ game of Mermin \cite{Mermin90}, which we recall is a nonlocal game for the simplest nontrivial three-party scenario of \(3\) parties, \(2\) settings, and \(2\) outcomes, or \((3,2,2)\), and can be won with probability \(1\) with a three-way quantum strategy measuring the GHZ state. If superquantum two-party nonclassical resources are allowed, only \textit{bipartite} no-signaling resources (PR boxes) are needed to exhibit this behavior~\cite{Broadbent2006}---but quantum requires the ``extra" third player to exhibit pseudotelepathy, since the \((2,2,2)\) PR box distribution is nonquantum. In parallel, the 5-cycle game involves the simplest nontrivial \textit{five}-party scenario \((5,2,2)\), the game can be won with probability \(1\) with a five-party quantum strategy, but only four-party no-signaling resources (albeit superquantum) are needed---once again, quantum requires the ``extra player," but now to demonstrate strong LOSR-GMNL as opposed to (only) the strong nonlocality exhibited by the GHZ-Mermin game. 

Rounding out the above considerations is the observation that \textit{three}-party resources are insufficient to win the 5-cycle game, at least for a no-signaling Svetlichny-type model, and so quantum mechanics does only require just the single extra player beyond the minimal-partite \(4\)-party no-signaling resource capable of winning the 5-cycle game. To see this, recall a three-party no-signaling Svetlichny decomposition of the behavior is a convex sum of factoring distributions, all of the form $P(AB)Q(CDE)$ or another analogous $2|3$ split among the parties, where each $P$ and $Q$ is a no-signaling distribution. Since a two-party no-signaling distribution $P$ can be expressed as convex combination of behaviors which are either local deterministic or PR boxes \cite{barrett:2005}, Proposition-\ref{p:2DetPlayers}-style reasoning shows that each factoring $P(AB)Q(CDE)$-type distribution obeys the winning probability bound of 5/6, and, therefore, so does the sum; the full argument is given in SM Section~\ref{s:svet}. We summarize our findings on the multipartite nonlocality of the 5-cycle game in Table~\ref{tab:Overview}, using ``$\Rightarrow$" to indicate GMNL results that imply other ones.
\begin{table}[!htb]
\centering
\caption{\label{tab:Overview}Overview of simulability of the 5-cycle game for different resource classes.}
\begin{tabular}{cccc}
\toprule
Resources & LOSR-GMNL & & Svetlichny-GMNL \\
\midrule
Bipartite & Not Simulable & \(\Longrightarrow\) & Not Simulable \\
 & & & \(\Uparrow\) \\
Tripartite & Open & & Not Simulable \\
 & & & \\
Quadpartite & Simulable & \(\Longleftarrow\) & Simulable \\
\bottomrule
\end{tabular}
\end{table}

\noindent\textit{A family of tripartite LOSR-GMNL games with \(\epsilon\)-close-to-winning quantum strategies, and generalizations to $n|(n-1)$-party separations.} 
While we do not know if there is a quantum-achievable 3-party strongly LOSR-GMNL behavior, we are able to construct $\epsilon$-close instances. 
Conceptual motivation for the significance of this partial result follows from the observation that while quantum pseudotelepathy is ostensibly the ability to exhibit a winning probability of one in a nonlocal game, there are no perfect experiments: 
for example, the recent pseudotelepathy experiment of Ref.~\cite{GoogleQuantumAI2025} observes a winning probability of \(0.983\) for the Mermin-Peres game~\cite{Mermin1990,Peres1990,Cabello2001_AvN} which exceeds the local bound of $8/9 \approx 0.889$---but in some sense, this observed separation is not qualitatively different from (say) a standard CHSH experiment observing a winning probability of 85\% in excess of the local bound of 75\% (where quantum mechanics cannot exceed Tsirelson's bound of $\sim 85.4\%$). 
Though loss probability can never be eliminated completely, future pseudotelepathy experiments---for which 
the defining characteristic is winning with certainty---can still aim to exhibit a winning probability very 
close to one, perhaps aspirationally 
at par with phenomena taken more or less to ``always" occur, like the probability that a pocket calculator correctly solves the query ``2+2'', for which there is \textit{some} nonzero probability of malfunction. 
Following this aspiration, if a family of games is only $\epsilon$-close to pseudotelepathic, $\epsilon$ can be chosen to make a game indistinguishable from pseudotelepathy as an experimentally observable phenomenon. 
Of course, limits in precision may make this choice of $\epsilon$ ultimately infeasible for achieving very-near-unit probability of winning, but ``true" pseudotelepathy scenarios, which may involve experimentally challenging complications of their own (such as two additional measuring parties for the example of this paper), are also unavoidably subject to precision considerations inevitably pushing the probability of winning away from 100\%. 

With this motivation in mind, we present an $N$-indexed family of LOSR-GMNL 3-party games with bipartite-resource-network winning probability bounded above by $1-\gamma/N$, and a tripartite quantum strategy with winning probability exceeding $1-\delta/N^2$, for explicit constants $\gamma=1/6$ and $\delta=\pi^2/32$. As we show in SM~\ref{s:limitingbehavior}, this implies that for any desired completeness parameter $\epsilon_c>0$ and soundness parameter $\epsilon_s>0$, there is a choice of game index $N$ and number of trials $t$ for which $1-\epsilon_c$ is a lower bound on probability that the honest implementation yields pseudotelepathic behavior of exclusively ``win" outcomes over $t$ trials, and $\epsilon_s$ is an upper bound on $t$ wins in $t$ trials for the foil theory (bipartite resource models). This in principle allows experiments observationally indistinguishable from true quantum telepathy, constrained only by the ultimate precision limits of the standard quantum mechanical model for Bell experiments.


\begin{figure}[!htb]
\centering
\includegraphics[width=1.0\linewidth]{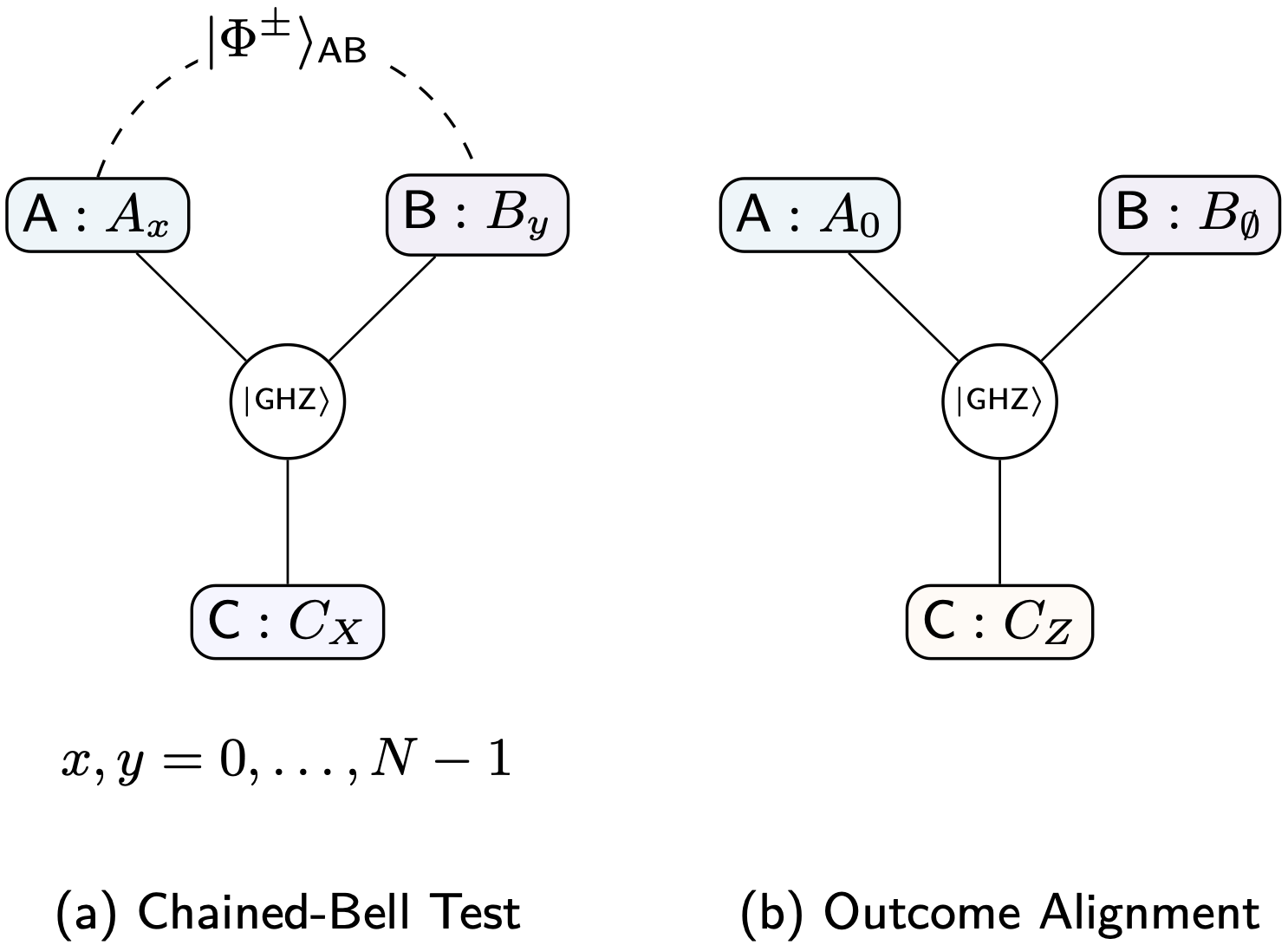}
\caption{Schematic depiction of the $N$-indexed family of chained Bell-GHZ games. The three parties $\mathsf {A,B,C}$ share a GHZ state; party $\mathsf C$ either (a) measures in basis $X$, which probabilistically swaps one of two Bell states $|\Phi^\pm\rangle=(|00\rangle\pm |11\rangle)/\sqrt{2}$ to parties $\mathsf A$ and $\mathsf B$, or (b) measures in basis $Z$, which allows perfect outcome alignment when the other parties also measure in $Z$. Depending on $N$, $\mathsf A$ and $\mathsf B$ perform measurements $\{0,...,N-1\}$ simultaneously encompassing two variants of the $N$-fold chained-Bell inequality that are violated in the $C_X$-scenario of part (a); we configure these chained-Bell measurements to naturally include $Z$ for $\mathsf A$ as $A_0=A_Z$ while appending an extra measurement $B_\emptyset=B_Z$ for $\mathsf B$, allowing perfect outcome agreement in the $C_Z$ scenario of part (b).}
\label{fig:ChainBellGHZ}
\end{figure}

The setup for our family of games is depicted in Figure~\ref{fig:ChainBellGHZ}. The intuition, reminiscent of the CHSH approaches of~\cite{coiteux:2021,mao:2022,cao:2022,Chao17} that bear some similarity to our game's $N=2$ case, is that for an underlying bipartite network model, $\mathsf A$ and $\mathsf B$ cannot effectively win a chained Bell game on $C_X$ while simultaneously aligning outcomes on $C_Z$. 
Concretely, the game is to simultaneously satisfy the following $4N+2$ predicates:
\begin{align}\label{e:cbgame}
\{C_X=0, A_i=B_{i-j}\}&\text{ OR }\{C_X=1, \text{ any }A_i,B_{i-j}  \};\nonumber\\
\{C_X=1, A_i\ne B_{N-i-j}\}&\text{ OR }\{C_X=0, \text{ any }A_i,B_{N-i-j}\};\nonumber\\
A_0=B_0;\;\; A_0\ne B_{N-1};&\;\;A_0=B_\emptyset; \;\; \;\;B_\emptyset=C_Z
\end{align}
for $i\in{1,...,N-1}$ and $j\in\{0,1\}$. We prove the claim that this is $\epsilon$-strongly LOSR-GMNL in SM Sections~\ref{s:cbinflationproof} and~\ref{ss:QuantumVal_ChainedBell}, demonstrating how \eqref{e:cbgame} accommodates two chained Bell variants tailored to $\ket{\Phi^+}$ and $\ket{\Phi^-}$ by defaulting to a universal $\mathsf {A} /\mathsf {B}$-win condition when the ``wrong" chained-Bell game is queried.
 We also describe in SM \ref{s:nsep} how to generalize to an $n|(n-1)$ LOSR-GMNL separation by measuring higher-order GHZ states and distributing the role of Charlie to multiple distinct Charlie-like players. 

\medskip

\noindent \textit{Conclusion}. We have robustly demonstrated a strongly LOSR-GMNL behavior in the simplest nontrivial five-party Bell scenario and provided context for this result with a comprehensive analysis of its full spectrum of multipartite nonlocality which reveals interesting parallels to the party-irreducibility of the (3,2,2) Mermin-GHZ game---perhaps suggesting an impediment to further party-number reductions for a demonstration of strongly LOSR-GMNL quantum behaviors in a minimal 2-setting, 2-outcome setup. If strong LOSR-GMNL is weakened to an $\epsilon$-close condition, we demonstrate that \(n|(n-1)\) resource separation can be robustly observed, theoretically allowing experiments observationally indistinguishable from true pseudotelepathy. By lifting the powerful all-or-nothing nonlocality of pseudotelepathic correlations to a stronger genuinely-multipartite form across a broad class of minimal scenarios, these results lay the groundwork for a deeper understanding of this phenomenon and its potential applications in near-future multipartite quantum networks.  

\begin{acknowledgments}
The formulation of ideas, proving/checking of all results, and all written text are due solely to the authors. The only use of AI was in designing the figures. The authors acknowledge helpful conversations with Arkaprabha Ghosal, Mehdi Mhalla and Elie Wolfe. This work was partially supported by NSF Award No.~2547582.
\end{acknowledgments}

\bibliography{metabib}

\onecolumngrid
\appendix

\section{The Game, and the Resource 
Models}\label{a:5CycleGame}

\subsection{A Perfect Quantum Strategy}\label{s:qstrat}
For the predicates defined in~\eqref{e:5game} in the main text, we specify the quantum state shared by the five players and the local measurements that are applied by the players to their respective systems to win the associated 5-cycle game with probability \(1\), thereby defining the quantum behavior including the probabilities beyond the six winning correlations. The shared state is the normalised zero-codeword state used by DiVincenzo and Peres in~\cite{DiVincenzoPeres97}:
\begin{align*}
|\psi\rangle = \frac{1}{4}\big(&-|00000\rangle - |11000\rangle - |01100\rangle - |00110\rangle - |00011\rangle - |10001\rangle + |10010\rangle + |10100\rangle \\
&+ |01001\rangle + |01010\rangle + |00101\rangle + |11110\rangle + |11101\rangle + |11011\rangle + |10111\rangle + |01111\rangle\big).
\end{align*}
We label the five qubits as \(1,2,3,4,5\), corresponding to the players \(\mathsf{A}_1, \mathsf{A}_2, \mathsf{A}_3, \mathsf{A}_4, \mathsf{A}_5\). These correspond to the parties \(\mathsf{A}, \mathsf{B}, \mathsf{C}, \mathsf{D}, \mathsf{E}\) in the main text; using instead numbered parties makes it easier to notate some manipulations in the SM exploiting the cyclic symmetry among parties in both the above state and the predicates in~\eqref{e:5game}. The qubits are indexed modulo \(5\). On receiving input \(X\) or \(Z\), each player measures the corresponding Pauli observable \(\sigma_{\mathsf X}\) or \(\sigma_{\mathsf Z}\) on their own qubit. The measurement result is recorded as a bit: Eigenvalues \(+1\) and \(-1\) are recorded as bit values \(0\) and \(1\), respectively, i.e., an output bit \(a\) represents the measurement result \((-1)^a\). In particular, an odd parity of output bits corresponds to a product of the measurement results---the eigenvalues---equal to \(-1\), while an even parity corresponds to a product equal to \(+1\).

To see why this strategy wins, write:
\begin{equation}
\mathsf{X}_i\coloneqq\sigma^{(i)}_{\mathsf X},\,\,\,\mathsf{Z}_i\coloneqq\sigma^{(i)}_{\mathsf Z},\,\,\,H_i\coloneqq\mathsf{X}_i\mathsf{Z}_{i+1}\mathsf{X}_{i+2},
\end{equation}
with identity understood on all unused qubits. For instance, \(H_1 = \mathsf{X}_1\otimes\mathsf{Z}_2\otimes\mathsf{X}_3\otimes I\otimes I\), \(H_2=I\otimes\mathsf{X}_1\otimes\mathsf{Z}_2\otimes\mathsf{X}_3\otimes I\), and so on. The shared state satisfies \(H_i|\psi\rangle=-|\psi\rangle\) for every \(i\in\{1,2,\ldots,5\}\). This can be checked from the fact that \(\sigma_{\mathsf X}|a\rangle=|a\oplus 1\rangle\) and \(\sigma_{\mathsf Z}|a\rangle=(-1)^a|a\rangle\). And so, for example, 
\begin{equation*}
H_1|abcde\rangle=(-1)^b|a\oplus 1\rangle|b\rangle|c\oplus 1\rangle|d\rangle|e\rangle.
\end{equation*}
Applying this rule to the above state gives \(H_1|\psi\rangle=-|\psi\rangle\). The other four equations follow from cyclic permutation, under which the state is invariant. The eigenvalue equations can be understood operationally as follows: For the three-player predicates, when the players \(\mathsf{A}_i, \mathsf{A}_{i+1}, \mathsf{A}_{i+2}\) measure \(\sigma_{\mathsf X}, \sigma_{\mathsf Z}, \sigma_{\mathsf X}\), respectively, the product of their measurement results is \(-1\) with certainty because \(|\psi\rangle\) is an eigenvector of \(H_i\) corresponding to eigenvalue \(-1\), for \(i\in\{1,\ldots,5\}\). Consequently, \((-1)^{A^{(i)}_X+A^{(i+1)}_Z+A^{(i+2)}_X}=-1\), i.e., \(A^{(i)}_X\oplus A^{(i+1)}_Z\oplus A^{(i+2)}_X = 1\).

The satisfaction of the all-\(Z\) predicate can be checked directly, but it also follows from the identity \(\prod_{i=1}^5 H_i = -(\mathsf{Z}_1\otimes\mathsf{Z}_2\otimes\mathsf{Z}_3\otimes\mathsf{Z}_4\otimes\mathsf{Z}_5)\). Since \(\prod_{i=1}^5 H_i|\psi\rangle=(-1)^5|\psi\rangle=-|\psi\rangle\), we have
\begin{equation}
 (\mathsf{Z}_1\otimes\mathsf{Z}_2\otimes\mathsf{Z}_3\otimes\mathsf{Z}_4\otimes\mathsf{Z}_5)|\psi\rangle = |\psi\rangle.
\end{equation}
Consequently, \((-1)^{A^{(1)}_Z+\cdots+A^{(5)}_Z}=1\), i.e., \(A^{(1)}_Z\oplus\cdots\oplus A^{(5)}_Z=0\). The full five-party distribution is thus given by local projective measurements of the Pauli observables \(\sigma_{\mathsf X}\) and \(\sigma_{\mathsf Z}\) on the shared state \(|\psi\rangle\). Each party's setting specifies which observable is measured, and the binary outcomes \(0\) and \(1\) label its eigenvalues \(+1\) and \(-1\), respectively.

\subsection{Proof of Proposition 1}\label{s:prop1proof}
\begin{prop*}
The maximum winning probability of the \(5\)-cycle game when two players respond deterministically on input \(X\) is \(5/6\).
\end{prop*}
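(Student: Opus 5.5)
The approach is to turn the obstruction argument sketched after Proposition~\ref{p:2DetPlayers} in the main text into a union-bound (``failure-budget'') argument. Write $f_k$ for the probability that predicate $P_k$ fails in its own measurement context. Since the winning probability is $1-\frac16\sum_k f_k$, it suffices to show $\sum_{k=1}^6 f_k\ge 1$ for every no-signaling behavior in which the two chosen players output constants on input $X$. For the matching lower bound, the deterministic local strategy that fails only $P_6$ (which exists because $P_1,\dots,P_5$ are jointly satisfiable by constants) achieves $5/6$.

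The key tool is no-signaling. Any event that depends only on a subset of parties' outputs, such as a parity of some $Z$-outcomes, has the same probability in every context containing that subset of settings. So an inequality derived in one context can be transported to another. By the cyclic symmetry of \eqref{e:5game}, the two deterministic players are either adjacent (take $\mathsf A,\mathsf B$ with $A_X=\alpha$, $B_X=\beta$) or at distance two (take $\mathsf A,\mathsf C$ with $A_X=\alpha$, $C_X=\gamma$). I treat these two cases separately.

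\emph{Adjacent case.} Work in the context $K_1=(A_Z,B_Z,C_X,D_Z,E_X)$. After substituting $\alpha,\beta$, the failure events of $P_1$, $P_5$ and $P_3$ are all events in $K_1$, because their marginals on $(B_Z,C_X)$, $(E_X,A_Z)$ and $(C_X,D_Z,E_X)$ agree with their own contexts. If none of them fails, then $A_Z\oplus B_Z\oplus D_Z=1\oplus\alpha\oplus\beta$. Hence
\[
\Pr[A_Z\oplus B_Z\oplus D_Z\ne 1\oplus\alpha\oplus\beta]\le f_1+f_3+f_5 .
\]
Similarly, in the context $(C_Z,D_X,E_Z)$, predicates $P_2$ and $P_4$ give
\[
\Pr[C_Z\oplus E_Z\ne\alpha\oplus\beta]\le f_2+f_4 .
\]
Both parities are marginals of the all-$Z$ context. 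If neither parity is ``off,'' the total $Z$-parity equals $1$, which violates $P_6$. Therefore $1-f_6\le f_1+\dots+f_5$.

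\emph{Distance-two case.} Here $P_1$ reduces to a condition on $B_Z$ alone, namely $B_Z=1\oplus\alpha\oplus\gamma$. The remaining predicates are grouped by shared variables:
\begin{itemize}
\item $P_3$ and $P_5$ share $E_X$. In the context $(A_Z,B_X,D_Z,E_X)$ they give $A_Z\oplus B_X\oplus D_Z=\gamma$, except with probability at most $f_3+f_5$.
\item $P_2$ and $P_4$ share $D_X$. In the context $(B_X,C_Z,D_X,E_Z)$ they give $B_X\oplus C_Z\oplus E_Z=\alpha$, except with probability at most $f_2+f_4$.
\end{itemize}
Both three-variable parities are marginals of the context $(A_Z,B_X,C_Z,D_Z,E_Z)$, so $A_Z\oplus C_Z\oplus D_Z\oplus E_Z=\alpha\oplus\gamma$ holds except with probability at most $f_2+f_3+f_4+f_5$. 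This four-party parity does not involve $\mathsf B$, so by no-signaling it is also a marginal of the all-$Z$ context. There, together with $B_Z=1\oplus\alpha\oplus\gamma$ (which fails with probability at most $f_1$), it forces the total parity to be $1$, contradicting $P_6$. This again yields $\sum_k f_k\ge1$.

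The step I expect to be the main obstacle is the distance-two case. The naive chain would have to mix $B_X$ and $B_Z$ in a single context, which is not allowed. The fix is to eliminate $B_X$ first, inside a context where $\mathsf B$ measures $X$, and only then transport the resulting four-party $Z$-parity, which no longer involves $\mathsf B$, into the all-$Z$ context. Before writing this up I would double-check each context's setting assignment, and confirm that every transported event depends only on outputs of parties whose settings agree across the two contexts.
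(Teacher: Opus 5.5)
Your proposal is correct and follows essentially the same route as the paper's proof. It uses the same adjacent versus distance-two case split and the same union-bound chaining of parities (the paper's Lemma~\ref{l:Lemma_for_Prop1}, grouping $\{P_1,P_3,P_5\}$ and $\{P_2,P_4\}$ in the adjacent case), and the same ordering in the distance-two case, where $B_X$ is eliminated in a context with $\mathsf B$ measuring $X$ before the four-party $Z$-parity is transported to the all-$Z$ context. Your explicit no-signaling bookkeeping and the achievability half (a deterministic strategy failing only $P_6$), both left implicit in the paper, are worthwhile additions.
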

The proof of this proposition relies on a simple fact stated and proved in the following lemma.
\begin{lemma}\label{l:Lemma_for_Prop1}
For binary-valued random variables \(R,S,T\) and any two bits \(\gamma,\delta\), we have: 
\begin{equation}
\mathbb{P}(R\oplus T = \gamma\oplus\delta\oplus 1)\le\mathbb{P}(R\oplus S=\gamma) + \mathbb{P}(S\oplus T=\delta).
\end{equation}
\end{lemma}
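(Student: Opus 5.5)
The plan is a union bound driven by the additivity of parity over $\mathbb{Z}_2$. Since $S\oplus S=0$,
\begin{equation*}
R\oplus T=(R\oplus S)\oplus(S\oplus T)
\end{equation*}
holds pointwise on the underlying probability space. So I will show that the event on the left-hand side forces at least one of the two events on the right.

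First I show the event inclusion
\begin{equation*}
\{R\oplus T=\gamma\oplus\delta\oplus 1\}\subseteq\{R\oplus S\neq\gamma\}\cup\{S\oplus T\neq\delta\}
\end{equation*}
by contraposition. If both $R\oplus S=\gamma$ and $S\oplus T=\delta$ hold, then $R\oplus T=\gamma\oplus\delta$, which differs from $\gamma\oplus\delta\oplus 1$.

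Since the variables are binary, $\{R\oplus S\neq\gamma\}=\{R\oplus S=\gamma\oplus 1\}$, and likewise for the other event. Monotonicity and subadditivity of $\mathbb{P}$ then give
\begin{equation*}
\mathbb{P}(R\oplus T=\gamma\oplus\delta\oplus1)\le\mathbb{P}(R\oplus S=\gamma\oplus1)+\mathbb{P}(S\oplus T=\delta\oplus1).
\end{equation*}

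Finally, the bits $\gamma,\delta$ are arbitrary. I can therefore rename $\gamma\oplus1\mapsto\gamma$ and $\delta\oplus1\mapsto\delta$. This substitution leaves the left-hand side unchanged, because $(\gamma\oplus1)\oplus(\delta\oplus1)\oplus1=\gamma\oplus\delta\oplus1$, and it turns the right-hand side into exactly the form stated in the lemma.

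There is no real obstacle here. The one point that needs care is that $R,S,T$ must be defined on a common probability space, i.e.\ jointly distributed, so that $S$ means the same random variable in both events. This is precisely why the lemma is applied in the proof of Proposition~\ref{p:2DetPlayers} only to jointly measurable collections of outcomes.
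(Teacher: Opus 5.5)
Your proof is correct and is essentially the paper's argument: a parity-additivity event inclusion, complemented, followed by the union bound. The only difference is cosmetic: the paper starts from the flipped events $\{R\oplus S=\gamma\oplus 1\}$ and $\{S\oplus T=\delta\oplus 1\}$, so their complements are already the events in the statement, which makes your final relabeling of $\gamma,\delta$ unnecessary.
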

\begin{proof}
If the events \(A\coloneqq\{R\oplus S=\gamma\oplus 1\}\) and \(B\coloneqq\{S\oplus T=\delta\oplus 1\}\) both occur, then the event \(C\coloneqq\{R\oplus T=\gamma\oplus\delta\}\) occurs as well. So \(A\cap B\subseteq C\). But the reverse does not necessarily hold, since \(R=\gamma, T=\delta\) with \(S=0\) satisfies \(C\) but not \(A\cap B\). Taking the complement of both sides reverses the inclusion: \(C^{\mathrm{c}}\subseteq(A\cap B)^{\mathrm{c}}=A^{\mathrm{c}}\cup B^{\mathrm{c}}\). Then applying the union bound on probabilities yields \(\mathbb{P}(C^{\mathrm{c}})\le\mathbb{P}(A^{\mathrm{c}})+\mathbb{P}(B^{\mathrm{c}})\), which is the required inequality in the statement of the lemma.
\end{proof}
We now proceed with the proof of the proposition.
\begin{proof}
Consider a strategy where two players respond deterministically on input \(X\). Define $W_i$ to be the event corresponding to satisfaction of the predicate defined in~\eqref{e:5game} for which the $i$-th player is the (sole) player measuring $Z$; we use $W_f$ to distinguish the final all-$Z$ predicate from the cyclically equivalent first five. We prove \(\mathcal{S}:=\mathbb{P}(W_f)+\sum_{i=1}^{5}\mathbb{P}(W_i)\le 5\), which implies the average predicate-winning probability $\omega$ satisfies \(\omega\le 5/6\). Let \(e_i\coloneqq\mathbb{P}(W^{\mathrm{c}}_i)\), for \(1\le i\le 5\), and \(e_f\coloneqq\mathbb{P}(W^{\mathrm{c}}_f)\). Let the sum of the failure probabilities of the cyclic three-player predicates be \(e_T=\sum_{i=1}^{5}e_i\). The total probability of failure is then \(e=e_f+e_T\). We first prove the result for a pair of adjacent players \((\mathsf{A}_1,\mathsf{A}_2)\), following which we prove it for a non-adjacent pair \((\mathsf{A}_1,\mathsf{A}_3)\). Due to the cyclic symmetry of the predicates, this covers all distinct cases.

Suppose \(A^{(1)}_X=\alpha\) and \(A^{(2)}_X=\beta\) deterministically. Then \(\mathbb{P}(W_2)=\mathbb{P}(A^{(2)}_Z\oplus A^{(3)}_X=1\oplus\alpha)=1-e_2\) and \(\mathbb{P}(W_4)=\mathbb{P}(A^{(3)}_X\oplus A^{(4)}_Z\oplus A^{(5)}_X=1)=1-e_4\), or $\mathbb{P}(W_2^{\mathrm{c}})=\mathbb{P}(A^{(2)}_Z\oplus A^{(3)}_X=\alpha)=e_2$ and \(\mathbb{P}(W_4^{\mathrm{c}})=\mathbb{P}(A^{(3)}_X\oplus A^{(4)}_Z\oplus A^{(5)}_X=0)=e_4\). Using Lemma~\ref{l:Lemma_for_Prop1} (with $R=A^{(2)}_Z$, $S=A^{(3)}_X$, and $T=A^{(4)}_Z\oplus A^{(5)}_X$) we have 
\begin{equation*}
\mathbb{P}(A^{(2)}_Z\oplus A^{(4)}_Z\oplus A^{(5)}_X=1\oplus\alpha)\le\mathbb{P}(A^{(2)}_Z\oplus A^{(3)}_X=\alpha)+\mathbb{P}(A^{(3)}_X\oplus A^{(4)}_Z\oplus A^{(5)}_X=0)=e_2+e_4.
\end{equation*}
Again, because of the fixed outcomes, we have \(\mathbb{P}(W_3^{\mathrm{c}})=\mathbb{P}(A^{(3)}_Z\oplus A^{(4)}_X=\beta)=e_3\) and \(\mathbb{P}(W_5^{\mathrm{c}})=\mathbb{P}(A^{(4)}_X\oplus A^{(5)}_Z=\alpha)=e_5\), and following a similar usage of Lemma~\ref{l:Lemma_for_Prop1}, we get \(\mathbb{P}(A^{(3)}_Z\oplus A^{(5)}_Z=\alpha\oplus\beta\oplus 1)\le e_3 + e_5\). For the remainder of the proof, ``combining'' two parity probabilities involves the usage of Lemma~\ref{l:Lemma_for_Prop1} in this manner. Combining \(\mathbb{P}(W^{\mathrm c}_1)=\mathbb{P}(A^{(5)}_X\oplus A^{(1)}_Z=\beta)=e_1\) and \(\mathbb{P}(A^{(2)}_Z\oplus A^{(4)}_Z\oplus A^{(5)}_X=1\oplus\alpha)\le e_2+e_4\), we get \(\mathbb{P}(A^{(1)}_Z\oplus A^{(2)}_Z\oplus A^{(4)}_Z=\alpha\oplus\beta)\le e_1+e_2+e_4\). Combining this with the expression \(\mathbb{P}(A^{(3)}_Z\oplus A^{(5)}_Z=\alpha\oplus\beta\oplus 1)\le e_3 + e_5\) derived earlier (while using the constant random variable $S=0$ in the application of Lemma~\ref{l:Lemma_for_Prop1}), we get
\begin{equation*}
\mathbb{P}\Big(\bigoplus_{i=1}^{5}A^{(i)}_Z=0\Big)\le \sum_{i=1}^{5}e_i \Rightarrow \mathbb{P}\Big(\bigoplus_{i=1}^{5}A^{(i)}_Z=1\Big)\ge 1-\sum_{i=1}^{5}e_i=1-e_T,
\end{equation*}
comparing which with \(\mathbb{P}(W_f)=\mathbb{P}(\bigoplus_{i=1}^{5}A^{(i)}_Z=0)=1-e_f\), we find that the target parities are opposite. Therefore, summing the two expressions gives us:
\begin{equation*}
1 = \mathbb{P}\Big(\bigoplus_{i=1}^{5}A^{(i)}_Z=0\Big) + \mathbb{P}\Big(\bigoplus_{i=1}^{5}A^{(i)}_Z=1\Big) \ge 1 - e_f + 1-e_T=2-e,
\end{equation*}
implying that \(e\ge 1\). This shows that \(\mathcal{S}=6-e\le 5\), equivalently, \(\omega\le 5/6\).

Now assume that \(A^{(1)}_X=\gamma\) and \(A^{(3)}_X=\delta\). Since \(\mathbb{P}(W^{\mathrm c}_2)=e_2\), we have \(\mathbb{P}(A^{(2)}_Z=\gamma\oplus\delta)=e_2\). Combining \(\mathbb{P}(W^{\mathrm c}_4)=\mathbb{P}(A^{(4)}_Z\oplus A^{(5)}_X=\delta)=e_4\) and \(\mathbb{P}(W^{\mathrm c}_1)=\mathbb{P}(A^{(1)}_Z\oplus A^{(2)}_X\oplus A^{(5)}_X=0)=e_1\): 
\begin{equation}\label{eq:2FixedPlayers_1}
\mathbb{P}(A^{(1)}_Z\oplus A^{(2)}_X\oplus A^{(4)}_Z=\delta\oplus 1)\le e_1 + e_4.
\end{equation}
Combining \(\mathbb{P}(W^{\mathrm{c}}_3)=\mathbb{P}(A^{(2)}_X\oplus A^{(3)}_Z\oplus A^{(4)}_X=0)=e_3\) and \(\mathbb{P}(W^{\mathrm{c}}_5)=\mathbb{P}(A^{(4)}_X\oplus A^{(5)}_Z=\gamma)=e_5\):
\begin{equation}\label{eq:2FixedPlayers_2}
\mathbb{P}(A^{(2)}_X\oplus A^{(3)}_Z\oplus A^{(5)}_Z=\gamma\oplus 1)\le e_3 + e_5.
\end{equation}
Combining~\eqref{eq:2FixedPlayers_1} and~\eqref{eq:2FixedPlayers_2}, we have \(\mathbb{P}(A^{(1)}_Z\oplus A^{(3)}_Z\oplus A^{(4)}_Z\oplus A^{(5)}_Z=\gamma\oplus\delta\oplus1)\le e_1+e_3+e_4+e_5\), which in combination with \(\mathbb{P}(A^{(2)}_Z=\gamma\oplus\delta)=e_2\) gives:
\begin{equation}
\mathbb{P}\Big(\bigoplus_{i=1}^{5}A^{(i)}_Z=0\Big)\le\sum_{i=1}^{5}e_i \Rightarrow \mathbb{P}\Big(\bigoplus_{i=1}^{5}A^{(i)}_Z=1\Big)\ge 1-\sum_{i=1}^{5}e_i=1-e_T.
\end{equation}
As in the earlier case, comparing this with \(\mathbb{P}(W_f)=\mathbb{P}(\bigoplus_{i=1}^{5}A^{(i)}_Z=0)=1-e_f\), we find that the target parities are opposite. Summing the expressions gives:
\begin{equation}
1 = \mathbb{P}\Big(\bigoplus_{i=1}^{5}A^{(i)}_Z=0\Big) + \mathbb{P}\Big(\bigoplus_{i=1}^{5}A^{(i)}_Z=1\Big) \ge 1-e_f + 1-e_T = 2-e,
\end{equation}
implying that \(e\ge 1\), which proves the necessary result following the same arguments as in the previous case.
\end{proof}

\subsection{Proof of Proposition 2}\label{s:prop2proof}
The main text describes an exact obstruction (by means of inflation) to the pairwise-resource model winning the \(5\)-cycle game, but this does not establish a bound on the winning probability. Here we provide this by proving Proposition~\ref{prop:ImprovedBndBipartiteRes}. Prior to that, we state the underlying resource model explicitly which identifies the independences the inflation proof may use.

\textit{Pairwise-Resource Model}: For each unordered pair of players \((\mathsf{A}_i,\mathsf{A}_j)\) in an \(n\)-player setup, let \(R_{ij}\) be a bipartite arbitrary no-signaling resource. Distinct resources are independent conditional on globally shared classical randomness (a random variable) \(\lambda\). Each player may apply an arbitrary local wiring to all incident resources and to \(\lambda\), and/or apply quantum entangled measurements (or even generalized measurements under alternate theories, so long as they are compatible with the two assumptions below), but the players are not allowed to communicate once they receive their inputs.

Inflation duplicates the resources and parties, and reconnects the copies so that every observable duplicate party is connected to the same local resource types as its original. Two assumptions of an inflation network are:
\begin{enumerate}
\item Any set of copied parties with the same resource ancestry as an original set has the same marginal distribution.
\item Copies with disjoint resource ancestry are independent conditional on the shared randomness \(\lambda\).
\end{enumerate}
These are the \textit{marginal isomorphism} and \textit{ancestral independence} constraints, respectively~\cite{wolfe:2019}; they are satisfied by quantum mechanics as well as other theories including wirings of PR boxes~\cite{bierhorst:2024}. No-signaling is needed when unobserved copies are discarded: their inputs and local wirings cannot alter the retained marginal. The inflation in Figure~\ref{fig:Two-copyInflation} contains precisely the marginals needed below. For example, \((\mathsf{A}',\mathsf{B}',\mathsf{C}')\), \((\mathsf{D}',\mathsf{E}',\mathsf{A}'')\) and \((\mathsf{B}',\mathsf{C}',\mathsf{D}', \mathsf{E}')\) reproduce their corresponding original marginals, whereas \(\mathsf{A}'\) and \(\mathsf{A}''\) share no resource. The same construction applies after a cyclic relabeling of the five players. 

In the main text, we demonstrated that no pairwise-resource-network-strategy can win the game with certainty. An exact contradiction, however, does not by itself exclude strategies whose winning probability can get arbitrarily close to one. Here we show that the same inflation has a quantitative content: near-perfect parities make each output close to deterministic, and the deterministic parity contradiction then forces a non-zero total failure probability.

For our proof of Proposition~\ref{p:RobustBound}, we will use the following lemma.

\begin{lemma}\label{l:PerfectParityAcrossCondIndSources_Robust}
Let \(U,V\) be binary random variables that are independent and identically distributed. If \(\prbP(U\neq V)\le\eta\), then there exists \(u_\ast\in\{0,1\}\) such that \(\prbP(U\neq u_\ast)\le\eta\). 
\end{lemma}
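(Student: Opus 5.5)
Let \(p\coloneqq\prbP(U=1)\), so by identical distribution \(\prbP(V=1)=p\) as well. By independence,
\begin{equation*}
\prbP(U\neq V)=\prbP(U=1)\prbP(V=0)+\prbP(U=0)\prbP(V=1)=2p(1-p).
\end{equation*}
Choose \(u_\ast\) to be the more likely value of \(U\): take \(u_\ast=1\) if \(p\ge 1/2\) and \(u_\ast=0\) otherwise. Then \(\prbP(U\neq u_\ast)=\min(p,1-p)\eqqcolon m\), and since \(m\le 1/2\) we have \(\max(p,1-p)=1-m\ge 1/2\).

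The key step is the elementary inequality
\begin{equation*}
2p(1-p)=2m(1-m)\ge 2m\cdot\tfrac12=m,
\end{equation*}
which uses only \(1-m\ge 1/2\). Chaining it with the hypothesis gives \(\prbP(U\neq u_\ast)=m\le 2p(1-p)=\prbP(U\neq V)\le\eta\), which is the claim.

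There is no genuine obstacle here. The only points needing care are that both independence and identical distribution are used, so that the mismatch probability factors as \(2p(1-p)\) with the same \(p\) on both sides, and that \(u_\ast\) must be the majority value rather than an arbitrary one. The bound \(m\le 2m(1-m)\) in fact leaves a factor of up to 2 of slack near \(m=0\), but we do not need to exploit it.

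In the application to Proposition~\ref{prop:ImprovedBndBipartiteRes}, the lemma will be applied to \(U=A'_X\) and \(V=A''_X\) conditional on each fixed value of \(\lambda\). These are identically distributed by marginal isomorphism and independent by ancestral independence, so the lemma applies pointwise in \(\lambda\). The resulting near-determinism bounds are then averaged over \(\lambda\), which is where linearity of the \(\eta\) bound matters.
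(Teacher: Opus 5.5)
Your proof is correct and follows essentially the paper's argument: write $\prbP(U\neq V)=2p(1-p)$ using independence and identical distribution, take $u_\ast$ to be the majority value, and use $2p(1-p)\ge\min\{p,1-p\}$. One caveat on your closing remark, which is not part of the lemma's proof: in the paper the value $\alpha_u$ may depend on $\lambda$, so the near-determinism bounds are not what gets averaged; the whole argument runs at fixed $\lambda$, and only the final winning-probability bound, which is linear in the probabilities, is averaged over $\lambda$.
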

\begin{proof}
Let \(p\coloneqq \prbP(U=0)\). Since \(U\text{ and }V\) are independent and identically distributed:
\begin{equation*}
\eta\ge \prbP(U\neq V) = p(1-p) + (1-p)p = 2p(1-p)\ge\delta,
\end{equation*}
where \(\delta\coloneqq\min\{p,1-p\}\). Choose \(u_\ast=0\) with \(p\ge1/2\), and \(u_\ast=1\) otherwise. Then \(\prbP(U\neq u_\ast)=\delta\le\eta\).
\end{proof}

\begin{prop*}\label{p:RobustBound}The average winning probability of the game for every behavior in the pairwise-resource network satisfies \(\omega\le 1-\frac{5}{126}\).
\end{prop*}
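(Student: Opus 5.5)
The plan is to prove the equivalent statement that the total failure probability $e=e_f+\sum_{i=1}^5 e_i$ satisfies $e\ge 5/21$. Since $\omega=1-e/6$, this gives $\omega\le 1-5/126$. As in the main-text obstruction argument, I first fix a value of the shared randomness $\lambda$. The bound $e(\lambda)\ge 5/21$ is linear in the failure probabilities, so proving it for each $\lambda$ and averaging over $\lambda$ gives the claim. Throughout, I write $f_k$ for the failure probability of predicate $P_k$ in \eqref{e:5game}.

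\textbf{Step 1 (robust near-determinism of $A_X$).} Work in the two-copy inflation of Figure~\ref{fig:Two-copyInflation}.
\begin{itemize}
\item Applying Lemma~\ref{l:Lemma_for_Prop1} to $P_1$ and $P_4$ in the original network (they are jointly measurable) gives $\prbP(B_Z\oplus C_X\oplus D_X\oplus E_Z=1)\le f_1+f_4$.
\item By marginal isomorphism, the three primed events $A'_X\oplus B'_Z\oplus C'_X=1$, $B'_Z\oplus C'_X\oplus D'_X\oplus E'_Z=0$ and $D'_X\oplus E'_Z\oplus A''_X=1$ fail with probabilities at most $f_1$, $f_1+f_4$ and $f_4$, respectively.
\item Chaining these events (a union bound, or two further uses of Lemma~\ref{l:Lemma_for_Prop1}) gives $\prbP(A'_X\neq A''_X)\le 2(f_1+f_4)$.
\item $A'_X$ and $A''_X$ have disjoint ancestry, so they are independent given $\lambda$. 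Each is isomorphic to $A_X$, so they are identically distributed.
\item Lemma~\ref{l:PerfectParityAcrossCondIndSources_Robust} then yields a constant $\alpha$ with $\prbP(A_X\ne\alpha)\le\eta_A:=2(f_1+f_4)$.
\end{itemize}
The cyclically shifted inflation centred at $\mathsf B$ (whose $X$-output appears in $P_2$ and $P_5$) gives a constant $\beta$ with $\prbP(B_X\ne\beta)\le\eta_B:=2(f_2+f_5)$.

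\textbf{Step 2 (robust Proposition~\ref{p:2DetPlayers}).}
\begin{itemize}
\item Replace $A_X$ by $\alpha$ and $B_X$ by $\beta$ in every configuration where they are measured. The proof of Proposition~\ref{p:2DetPlayers} only manipulates the joint distributions of jointly measurable sets of outputs, so it applies verbatim to this modified behavior. Its failure probabilities therefore sum to at least $1$.
\item $A_X$ appears only in $P_1$ and $P_4$, and $B_X$ only in $P_2$ and $P_5$. Each substitution can change a predicate's failure probability by at most the corresponding $\eta$.
\item Hence $e+2\eta_A+2\eta_B\ge 1$, that is, $e+4(f_1+f_2+f_4+f_5)\ge 1$.
\end{itemize}

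\textbf{Step 3 (symmetrize).} Repeat Steps 1 and 2 for each of the five cyclic rotations of the adjacent pair $(\mathsf A,\mathsf B)$. Each rotation omits exactly one of $P_1,\dots,P_5$ from its correction term, so each $f_k$ appears in four of the five inequalities. Summing the five inequalities gives $5e+16\sum_{k=1}^5 f_k\ge 5$. Since $\sum_k f_k\le e$, this yields $21e\ge 5$, as required.

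\textbf{Main obstacle.} I expect Step 1 to need the most care. Two points must be checked: that the three primed marginals really are importable with exactly the original failure probabilities once unused copies are discarded (this is where no-signaling is needed), and that $A'_X$ and $A''_X$ are identically distributed as well as independent given $\lambda$. The substitution in Step 2 also needs to be phrased as a coupling on each jointly measurable configuration, to justify the per-predicate $\eta$ correction.
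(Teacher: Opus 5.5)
Your proposal is correct and follows essentially the same route as the paper. The shared steps are the inflation-based near-determinism bound $\prbP(A_X\ne\alpha)\le 2(f_1+f_4)$ via the i.i.d.\ lemma, then overwriting two $X$-outputs to invoke Proposition~\ref{p:2DetPlayers} at a cost of twice the two mismatch bounds, then averaging over pairs. The only difference is that you sum the inequalities over the five adjacent pairs, so you need only the adjacent case of Proposition~\ref{p:2DetPlayers}, whereas the paper picks a minimizing pair among all ten. Both yield the same averaging factor $\tfrac{2}{5}\sum_u d^X_u$ and hence the same bound $e\ge 5/21$.
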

\begin{proof}
All probabilities in the proof are conditional on a fixed value of the globally shared classical randomness \(\lambda\). The derived bound on winning probability holds for this fixed value, but as the bound is linear in probabilities, it will also hold for the complete behavior after averaging over the shared randomness. Hence we omit explicit mention of \(\lambda\) for the rest of the proof. The players are indexed cyclically with indices modulo \(5\). Recall we notate the events corresponding to winning the (cyclic) three-player predicates and the all-\(Z\)-input predicate as 
\begin{equation}\label{eq:Predicates_BP}
\begin{split}
W_i&\coloneqq A^{(i-1)}_X\oplus A^{(i)}_Z\oplus A^{(i+1)}_X=1,\,1\le i\le 5,\\
W_f&\coloneqq A^{(1)}_Z\oplus A^{(2)}_Z\oplus A^{(3)}_Z\oplus A^{(4)}_Z\oplus A^{(5)}_Z = 0.
\end{split}
\end{equation}
For a given strategy \(\mathfrak S\), the probability of the event \(\{\mathfrak{S}\text{ wins on } Q_k\}\) is denoted as the probability \(\mathbb{P}_{\strat}(W_k)\) of the corresponding winning predicate given \(\mathfrak S\). Denote the probability of the event \(\{\mathfrak{S}\text{ loses on }Q_k\}\) as \(e_k\coloneqq \mathbb{P}_\strat(W^{\mathrm c}_k)\), for \(1\le k\le 5\) and \(e_f\coloneqq\mathbb{P}_\strat(W^{\mathrm c}_f)\). Also, let the sum of the loss probabilities on the five cyclic questions (those where there are exactly three active players) be \(e_T = \sum_{k=1}^5 e_k\). The total probability of losing the game is then \(e=e_T + e_f\).

We first establish approximate determinism for the players choosing \(X\) as their input. Fix \(u\in\{1,\ldots,5\}\). The two three-player predicates involving \(A^{(u)}_X\) are \(A^{(u)}_X\oplus A^{(u+1)}_Z\oplus A^{(u+2)}_X=1\) and \(A^{(u+3)}_X\oplus A^{(u+4)}_Z\oplus A^{(u)}_X=1\), the respective failure probabilities of which are \(e_{u+1}\) and \(e_{u+4}\). In the global input context \((X,Z,X,X,Z)\) for the parties \((\mathsf{A}_u,\mathsf{A}_{u+1},\mathsf{A}_{u+2},\mathsf{A}_{u+3},\mathsf{A}_{u+4})\), the two predicates can hold simultaneously due to no-signaling, and so \(A^{(u+1)}_Z\oplus A^{(u+2)}_X\oplus A^{(u+3)}_X\oplus A^{(u+4)}_Z = 0\) holds, except with probability at most \(e_{u+1} + e_{u+4}\), which follows from the union bound for probabilities--- explicitly, an application of Lemma \ref{l:Lemma_for_Prop1} with $R=A^{(u+1)}_Z\oplus A^{(u+2)}_X$, $S=A^{(u)}_X$, and $T=A^{(u+3)}_X\oplus A^{(u+4)}_Z$.

Moving to the inflated network as depicted on the right side of Figure~\ref{fig:Two-copyInflation}, the relevant inflation marginal contains:
\begin{equation*}
A'^{(u)}_X, A'^{(u+1)}_Z, A'^{(u+2)}_X, A'^{(u+3)}_X, A'^{(u+4)}_Z, A''^{(u)}_X,
\end{equation*}
where \(A'^{(k)}_s,A''^{(k)}_s\) are the source-disjoint copies of \(A^{(k)}_s\), for \(k\in\{1,\ldots,5\}\) and \(s\in\{X,Z\}\). By the inflation isomorphism constraint, the inflation marginal satisfies:
\begin{equation}
\begin{split}
\mathbb{P}(A'^{(u)}_X\oplus A'^{(u+1)}_Z\oplus A'^{(u+2)}_X\neq 1) &= e_{u+1},\\
\mathbb{P}(A'^{(u+3)}_X\oplus A'^{(u+4)}_Z\oplus A''^{(u)}_X\neq 1) &= e_{u+4},\\
\mathbb{P}(A'^{(u+1)}_Z\oplus A'^{(u+2)}_X\oplus A'^{(u+3)}_X\oplus A'^{(u+4)}_Z\neq 0) &\le e_{u+1}+e_{u+4}.
\end{split}
\end{equation}
If the three ``good events'', namely, 
\begin{equation*}
\{A'^{(u)}_X\oplus A'^{(u+1)}_Z\oplus A'^{(u+2)}_X = 1\}, \{A'^{(u+3)}_X\oplus A'^{(u+4)}_Z\oplus A''^{(u)}_X = 1\}, \{A'^{(u+1)}_Z\oplus A'^{(u+2)}_X\oplus A'^{(u+3)}_X\oplus A'^{(u+4)}_Z = 0\},
\end{equation*}
occur, then \(A'^{(u)}_X = A''^{(u)}_X\), so the event \(\{A'^{(u)}_X = A''^{(u)}_X\}\) implies the the negation of at least one of the above events, and so by the union bound
\begin{equation}
\mathbb{P}(A'^{(u)}_X \neq A''^{(u)}_X)\le 2(e_{u+1}+e_{u+4}).
\end{equation}
Since \(A'^{(u)}_X\) and \(A''^{(u)}_X\) are source-disjoint, they are independent, and since they are locally isomorphic copies of \(A^{(u)}_X\), they are identically distributed. By Lemma~\ref{l:PerfectParityAcrossCondIndSources_Robust}, there exists some \(\alpha_u\in\{0,1\}\) such that \(\mathbb{P}(A^{(u)}_X\neq\alpha_u)\le 2(e_{u+1}+e_{u+4})\). Define \(d^X_u\coloneqq\mathbb{P}(A^{(u)}_X\neq\alpha_u)\) for the choice of $\alpha_u$ provided by the lemma for party $u$. Then \(d^X_u\le 2(e_{u+1}+e_{u+4})\). Since this holds for any \(u\in\{1,\ldots,5\}\), summing over \(u\) gives:
\begin{equation}\label{eq:SumMismatchProb_Bnd}
\sum_{u=1}^{5}d^X_u \le 2\sum_{u=1}^{5}(e_{u+1}+e_{u+4})=4e_T,
\end{equation}
where equality follows from the fact that \(i+1\mapsto i+4\) is a permutation of the indices, and so \(\sum_{u=1}^{5}(e_{u+1}+e_{u+4})=2e_T\).  

Consider now a strategy \(\mathfrak{S}'\) where any two players \(\mathsf{A}_u,\mathsf{A}_v\), for \(u\neq v\), behave deterministically on one of their inputs, say \(X\). The way we ``construct'' such a \(\mathfrak{S}'\) is to keep its internal protocol exactly the same as \(\mathfrak{S}\), but with the exception that players \(\mathsf{A}_u\) and \(\mathsf{A}_v\), instead of declaring their outputs as a function of the outcomes of all bipartite no-signaling resources and \(X\), now choose to ``overwrite'' it deterministically with \(A^{(u)}_X=\alpha_u\) and \(A^{(v)}_X=\alpha_v\) obtained via Lemma~\ref{l:PerfectParityAcrossCondIndSources_Robust} as described above. Fix a question \(Q_i\). As shown in Table~\ref{tab:BPGame_Questions_ActiveSet}, the \textit{active set} \(I(Q_i)\) consists of the outputs of players involved in the winning predicate \(W_i\) corresponding to \(Q_i\).
\begin{table}[!htb]
\caption{\label{tab:BPGame_Questions_ActiveSet}The six questions defining the five-player parity game. For each cyclic question \(Q_i\), \(i\in\{1,\ldots,5\}\), player \(\mathsf{A}_i\) receives input \(Z\), its two cyclic neighbours receive input \(X\), and the remaining two players receive an inactive input denoted \(\perp\), whose corresponding outputs do not enter the winning predicate. The set \(I(Q_i)\) lists the outputs of the three active players, and \(W_i\) requires them to have odd parity. For the all-\(Z\) question \(Q_a\), every player in \(\mathcal P=\{\mathsf{A}_1,\ldots,\mathsf{A}_5\}\) is active, and the players win when the parity of all five outputs is even. All player indices are understood cyclically modulo \(5\), and \(\oplus\) denotes addition modulo \(2\). The winning predicates are \(W_i\coloneqq \bigoplus_{v\in I(Q_i)}v=1\), for \(1\le i\le 5\), and \(W_f\coloneqq\bigoplus_{i=1}^{6}A^{(i)}_Z=0\).}
\centering
\begin{tabular}{ccccccc}
\toprule
\multirow{2}{*}{\(Q_i\)} & \multicolumn{5}{c}{Player Inputs} & \multirow{2}{*}{\(I(Q_i)\)} \\
\cmidrule{2-6}
 & \(\mathsf{A}_1\) & \(\mathsf{A}_2\) & \(\mathsf{A}_3\) & \(\mathsf{A}_4\) & \(\mathsf{A}_5\) & \\
\midrule
\(Q_1\) & \(Z\) & \(X\) & \(\perp\) & \(\perp\) & \(X\) & \(\{A^{(5)}_{X},A^{(1)}_{Z},A^{(2)}_{X}\}\) \\
\(Q_2\) & \(X\) & \(Z\) & \(X\) & \(\perp\) & \(\perp\) & \(\{A^{(1)}_{X},A^{(2)}_{Z},A^{(3)}_{X}\}\) \\
\(Q_3\) & \(\perp\) & \(X\) & \(Z\) & \(X\) & \(\perp\) & \(\{A^{(2)}_{X},A^{(3)}_{Z},A^{(4)}_{X}\}\) \\
\(Q_4\) & \(\perp\) & \(\perp\) & \(X\) & \(Z\) & \(X\) & \(\{A^{(3)}_{X},A^{(4)}_{Z},A^{(5)}_{X}\}\) \\
\(Q_5\) & \(X\) & \(\perp\) & \(\perp\) & \(X\) & \(Z\) & \(\{A^{(4)}_{Z},A^{(5)}_{X},A^{(1)}_{X}\}\) \\
\(Q_a\) & \(Z\) & \(Z\) & \(Z\) & \(Z\) & \(Z\) & \(\{A^{(1)}_Z,A^{(2)}_Z,A^{(3)}_Z,A^{(4)}_Z,A^{(5)}_Z\}\) \\
\bottomrule
\end{tabular}
\end{table}
For any fixed \(Q_i\), the following possibilities arise: (1) neither \(A^{(u)}_X\) nor \(A^{(v)}_X\) belongs to the active set, or (2) either \(A^{(u)}_X\) or \(A^{(v)}_X\) (or both) belongs to the active set. These possibilities can be formalized as the events \(\{A^{(u)}_X\notin I(Q_i)\}\cap\{A^{(v)}_X\notin I(Q_i)\}\), for the former, and \(\{A^{(u)}_X\in I(Q_i)\}\cup\{A^{(v)}_X\in I(Q_i)\}\), for the latter. When neither \(A^{(u)}_X\) nor \(A^{(v)}_X\) belongs to the active set, the probability of the event \(\{\mathfrak{S}\text{ wins on }Q_i\)\} is the same as that of \(\{\mathfrak{S}'\text{ wins on }Q_i\)\}, i.e., \(\mathbb{P}_\strat(W_i)=\mathbb{P}_{\strat'}(W_i)\), since the two strategies \(\mathfrak{S}\) and \(\mathfrak{S}'\) are identical in this case. Now suppose that \(A^{(u)}_X\in I(Q_i)\), but \(A^{(v)}_X\notin I(Q_i)\). If \(A^{(u)}_X=\alpha_u\) according to the original strategy \(\strat\), then the strategy \(\strat'\) does not contribute anything relevant to the winning predicate of \(Q_i\). (This is because, by construction, \(\strat'\) differs from \(\strat\) only in the replacement of the outcome \(A^{(u)}_X\) to \(\alpha_u\)). Then whenever \(\strat\) wins on \(Q_i\) but \(\strat'\) does not, it must be that \(A^{(u)}_X\neq \alpha_u\) in the original strategy. Equivalently, we have:
\begin{equation}\label{eq:Au_X_belongsto_Qi}
\{\strat\text{ wins on }Q_i\}\subseteq\{\strat'\text{ wins on }Q_i\}\cup\{A^{(u)}_X\neq \alpha_u\text{ on }\strat\}\Rightarrow \mathbb{P}_\strat(W_i) \le \mathbb{P}_{\strat'}(W_i)+\mathbb{P}_{\strat}(A^{(u)}_X\neq\alpha_u).
\end{equation}
Next, suppose \(A^{(v)}_X\in I(Q_i)\) but \(A^{(u)}_X\notin I(Q_i)\). Using the same reasoning as for the earlier case, we have:
\begin{equation}\label{eq:Av_X_belongsto_Qi}
\mathbb{P}_{\strat}(W_i) \le \mathbb{P}_{\strat'}(W_i)+\mathbb{P}_{\strat}(A^{(v)}_X\neq\alpha_v).
\end{equation}
For the case \(A^{(u)}_X,A^{(v)}_X\in I(Q_i)\), we can adapt the above reasoning and obtain:
\begin{equation}\label{eq:Au_X_Av_X_belongsto_Qi}
\begin{split}
\{\strat\text{ wins on }Q_i\}&\subseteq\{\strat'\text{ wins on }Q_i\}\cup\{A^{(u)}_X\neq\alpha_u\text{ on }\strat\}\cup\{A^{(v)}_X\neq\alpha_v\text{ on }\strat\}\\
\Rightarrow \mathbb{P}_{\strat}(W_i)&\le\mathbb{P}_{\strat'}(W_i)+\mathbb{P}_{\strat}(A^{(u)}_X\neq\alpha_u)+\mathbb{P}_{\strat}(A^{(v)}_X\neq\alpha_v).
\end{split}
\end{equation}
All the cases in~\eqref{eq:Au_X_belongsto_Qi},~\eqref{eq:Av_X_belongsto_Qi}, and~\eqref{eq:Au_X_Av_X_belongsto_Qi} together with the case \(A^{(u)}_X\notin I(Q_i),A^{(v)}_X\notin I(Q_i)\) can be expressed in a unified manner as shown below, where ``\([\![\cdots]\!]\)" evaluates to 1/0 if the enclosed expression is true/false:
\begin{equation}\label{eq:ProbIneq_S_S'}
\mathbb{P}_{\strat}(W_i)\le\mathbb{P}_{\strat'}(W_i)+[\![A^{(u)}_X\in I(Q_i)]\!]\mathbb{P}_{\strat}(A^{(u)}_X\neq\alpha_u)+[\![A^{(v)}_X\in I(Q_i)]\!]\mathbb{P}_{\strat}(A^{(v)}_X\neq\alpha_v),
\end{equation}
which holds as an equality if \(A^{(u)}_X\notin I(Q_i)\) and \(A^{(v)}_X\notin I(Q_i)\). Recalling the definitions \(d^X_u=\mathbb{P}_{\strat}(A^{(u)}_X\neq\alpha_u)\) and \(d^X_v=\mathbb{P}_{\strat}(A^{(v)}_X\neq\alpha_v)\) for the mismatch probabilities, the inequality in~\eqref{eq:ProbIneq_S_S'} can be expressed equivalently as:
\begin{equation}\label{eq:ProbIneq_S_S'_equiv}
\mathbb{P}_{\strat}(W_i)\le\mathbb{P}_{\strat'}(W_i)+[\![A^{(u)}_X\in I(Q_i)]\!]d^X_u+[\![A^{(v)}_X\in I(Q_i)]\!]d^X_v.   
\end{equation}
Then summing over \(i\), and noting that \(\sum_{i=1}^6\mathbb{P}_{\strat}(W_i)=\mathcal{S}_{\strat}\), and similarly for \(\strat'\), we get:
\begin{equation}\label{eq:BP_value_bound}
\mathcal{S}_{\strat}\le \mathcal{S}_{\strat'}+\sum_{i=1}^{6}[\![A^{(u)}_X\in I(Q_i)]\!]d^X_u + \sum_{i=1}^{6}[\![A^{(v)}_X\in I(Q_i)]\!]d^X_v \le \mathcal{S}_{\strat'}+2(d^X_u + d^X_v)\le 5 + 2(d^X_u + d^X_v),
\end{equation}
where the second inequality follows from the fact that for any given player \(\mathsf{A}_k\), the \(X\)-input outcome \(A^{(k)}_X\) appears in exactly two out of the five cyclic predicates, and the third inequality follows from Proposition~\ref{p:2DetPlayers} as the strategy $\mathcal{S}_{\strat'}$ is deterministic on two parties' $X$ outcomes. 

We now prove that there is a choice of $u$ and $v$ for which \(2(d^X_u + d^X_v)\le \frac{16}{5}e_T\), where \(e_T=\sum_{i=1}^{5}e_i\) is the sum of the failure probabilities of the five three-player predicates under strategy \(\mathfrak{S}\). This result combined with~\eqref{eq:BP_value_bound} will establish the required winning probability bound. There are \(\binom{5}{2}=10\) unordered pairs of players. Moreover, each mismatch probability \(d^X_u\) occurs in exactly four of the ten pair-sums \(d^X_u+d^X_v\), for \(u<v\). Therefore,
\begin{equation}
\frac{1}{10}\sum_{1\le u<v\le 5}(d^X_u + d^X_v) = \frac{4}{10}\sum_{i=1}^{5}d^X_i\le\frac{4}{10}(4e_T)=\frac{8}{5}e_T,
\end{equation}
where the inequality follows from the bound on the sum of mismatch probabilities in~\eqref{eq:SumMismatchProb_Bnd}. At least one pair-sum must be less than or equal to the average. Consequently, there exists a pair \(\{u,v\}\) such that
\begin{equation}
d^X_u + d^X_v \le \frac{1}{10}\sum_{1\le u<v\le 5}(d^X_u + d^X_v) \le \frac{8}{5}e_T, 
\end{equation}
which implies that \(2(d^X_u + d^X_v)\le\frac{16}{5}e_T\). Notice that this need not hold for a pair \(\{u,v\}\) fixed beforehand. However, since every pair is either adjacent or non-adjacent, and our two-players-deterministic-on-\(X\) obstruction covers both cases, the minimizing pair can be used. Therefore, combining this with~\eqref{eq:BP_value_bound} we get:
\begin{equation}
\mathcal{S}_{\strat}\le 5 + \frac{16}{5}e_T\quad\Rightarrow\quad 6-(e_T + e_f)\le 5 + \frac{16}{5}e_T \quad\Rightarrow\quad \frac{21}{5}e_T + e_f \ge 1.    
\end{equation}
Since \(0\le e_f\), it follows that \(\frac{21}{5}(e_T+e_f)\ge\frac{21}{5}e_T+e_f\ge1\), which implies \(e=e_T + e_f \ge \frac{5}{21}\). Since each question is sampled uniformly, this implies \(\frac{e}{6}\ge\frac{5}{126}\). Equivalently, \(\omega=\frac{1}{6}\mathcal{S}_{\strat}=\frac{1}{6}(6-e)\le 1-\frac{5}{126}\).
\end{proof}

\section{A Four-Player No-Signaling Simulation}\label{s:4-1split}
The preceding results rule out bipartite no-signaling resources, but they do not show that five-player irreducibility is necessary. To locate the next point in the resource hierarchy, we freeze one player’s response and ask whether a single four-way no-signaling resource can supply the remaining correlations.

\subsection{One Deterministic Player}
Enforcing one party, say \(\mathsf{A_1}\), to respond deterministically on its inputs, by prescribing \(A_{X}^{(1)}=r\) and \(A_{Z}^{(1)}=1\oplus s\), for \(r,s\in\{0,1\}\), transforms the six predicates in~\eqref{e:5game} in the manner as shown below in~\eqref{eq:OneDetPlayerParityCond}; this corresponds to Equation \eqref{e:4game} in the main text when $r=0$, $s=1$. Whether there exists a \((4,2,2)\) input-conditional outcome distribution for the remaining parties \(\mathsf{A}_2,\mathsf{A}_3,\mathsf{A}_4,\mathsf{A}_5\) that satisfies the predicates in~\eqref{eq:OneDetPlayerParityCond}---thereby winning the \(5\)-cycle game with certainty---is then a linear feasibility program. (Here the numbers in \((4,2,2)\) denote the number of parties, inputs, and outputs, respectively.) The program checks to see if there exists a distribution \(P(A^{(2)}_u,A^{(3)}_v,A^{(4)}_w,A^{(5)}_y)\), for \(u,v,w,y\in\{X,Z\}\), satisfying the \((4,2,2)\) no-signaling conditions in addition to the predicates below (which translate to affine conditions on the optimization variable \(P\)). 
\begin{align}\label{eq:OneDetPlayerParityCond}
&A^{(2)}_Z\oplus A^{(3)}_X = A^{(4)}_X\oplus A^{(5)}_Z = 1\oplus r,\nonumber\\
&A^{(2)}_X\oplus A^{(3)}_Z\oplus A^{(4)}_X = A^{(3)}_X\oplus A^{(4)}_Z\oplus A^{(5)}_X = 1,\\
&A^{(2)}_X\oplus A^{(5)}_X = 1 \oplus A^{(2)}_Z\oplus A^{(3)}_Z\oplus A^{(4)}_Z\oplus A^{(5)}_Z = s.\nonumber
\end{align}
When implementing the program we encoded the inputs as \(X\rightarrow0\) and \(Z\rightarrow1\). Notice that in the implementation one needs to also include the non-negativity and normalization conditions on \(P\). The following proposition, whose proof is witnessed by an affirmative outcome of our linear feasibility program, formalizes the fact that a deterministic player is compatible with perfect play.
\begin{prop}\label{prop:OneDetPlayer_PerfectPlay}
For every \(r,s\in\{0,1\}\), there exists a \((4,2,2)\) behavior \(P_{r,s}\) satisfying the conditions in~\eqref{eq:OneDetPlayerParityCond}. Hence, the \(5\)-cycle game can be won with certainty when \(\mathsf{A}_1\) answers according to \(A^{(1)}_X=r, A^{(1)}_Z=1\oplus s\) and the other players share one four-party no-signaling resource with the distribution \(P_{r,s}\).
\end{prop}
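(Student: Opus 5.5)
The statement is a feasibility claim for each of the four pairs \((r,s)\). I would first settle it by direct computation, and then try to replace the computation with an explicit construction that can be checked by hand.

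\textbf{The linear program.} For fixed \((r,s)\), the unknown is the table \(P(a_2a_3a_4a_5\mid u,v,w,y)\): sixteen input contexts with sixteen outcomes each. The constraints are non-negativity, normalization in each context, and the \((4,2,2)\) no-signaling equalities. The six conditions in~\eqref{eq:OneDetPlayerParityCond} become linear constraints. Each condition refers to at most one context, or to a subset of parties whose marginal is well defined by no-signaling. It says that every outcome string violating the parity has zero probability.

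I would solve this exactly in rational arithmetic for each \((r,s)\). The output would be a rational behavior \(P_{r,s}\) with every constraint checkable by substitution. That explicit list is the proof of the first sentence of Proposition~\ref{prop:OneDetPlayer_PerfectPlay}.

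A shortcut is available: the relabeling \(A^{(k)}_Z\mapsto A^{(k)}_Z\oplus1\) for a single party, or flipping an \(X\)-output, permutes the constants in~\eqref{eq:OneDetPlayerParityCond}. This may reduce the four cases to one, but I would verify that before relying on it.

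\textbf{An explicit hand construction.} To see the solution by hand, I would look for a generalized PR-type box. In each context, take \(P\) uniform over the affine subspace of \(\{0,1\}^4\) cut out by the parity conditions that apply there. Each context contains at most two of the conditions, for example \(\{B_Z\oplus C_X,\;D_X\oplus E_Z\}\) or \(\{B_X\oplus C_Z\oplus D_X,\;B_X\oplus E_X\}\). These pairs are always linearly independent, so no context is contradictory.

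\textbf{Main obstacle: no-signaling.} A uniform-over-affine-subspace box is no-signaling only if, for every subset of parties, the projected constraint set is the same in every context extending a given reduced input. This can fail. For instance, in context \((Z,X,Z,X)\) the conditions \(B_Z\oplus C_X\) and \(C_X\oplus D_Z\oplus E_X\) add to a constraint on \((B_Z,D_Z,E_X)\). In context \((Z,Z,Z,X)\), which has the same \((\mathsf B,\mathsf D,\mathsf E)\) inputs, no such constraint appears.

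My fix is a closure step. Add to every context every parity implied for a subset of parties in any context sharing that subset's inputs, and iterate to a fixed point. I would then check that no context acquires an inconsistent system, that is, a derived \(0=1\); I expect this check to be the hardest part. If the closure succeeds, uniform distributions over the closed affine spaces are no-signaling by construction and satisfy~\eqref{eq:OneDetPlayerParityCond}. If it fails, I would fall back on the exact LP certificate, allowing non-uniform weights.

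\textbf{The second sentence.} This is then immediate. Let \(\mathsf A_1\) output \(A^{(1)}_X=r\) and \(A^{(1)}_Z=1\oplus s\), and let the others sample from \(P_{r,s}\). Substituting these values into~\eqref{e:5game} gives exactly~\eqref{eq:OneDetPlayerParityCond}, so every predicate holds with probability one.
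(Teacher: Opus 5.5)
Your plan is correct and is essentially the paper's argument: the paper also certifies feasibility with an LP-generated explicit box (Table~\ref{tab:Two4_wayNSboxes}), whose rows are exactly of the form your hand construction produces. They are uniform, with weight $1/8$ or $1/4$, on affine subspaces cut out by the given and derived parities; for example, row $1110$ enforces your derived parity $A^{(2)}_Z\oplus A^{(4)}_Z\oplus A^{(5)}_X=r$ with $r=0$. The paper also uses your relabeling idea to pass from $r=0$ to $r=1$ by flipping the $Z$-outputs of $\mathsf A_2$ and $\mathsf A_5$. Your hoped-for reduction to a single case also works, because flipping $A^{(2)}_X$ and $A^{(3)}_Z$ toggles $s$ in the third line of~\eqref{eq:OneDetPlayerParityCond} while leaving the first two lines intact, so the separate box $Q_1$ is not strictly needed. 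Your non-uniform fallback is never required either: every parity produced by your closure must hold with certainty in any no-signaling solution, so a closure contradiction would make the LP infeasible as well.
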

\begin{proof}
Table~\ref{tab:Two4_wayNSboxes} defines two no-signaling boxes \(Q_0,Q_1\). Set
\begin{equation}
P_{r,s}(A^{(2)}_u,A^{(3)}_v,A^{(4)}_w,A^{(5)}_y) = Q_s(A^{(2)}_u\oplus ru,A^{(3)}_v,A^{(4)}_w,A^{(5)}_y\oplus ry).
\end{equation}
Thus the \(r=1\) branch flips \(\mathsf{A}_2\)'s and \(\mathsf{A}_5\)'s outputs precisely when their input is \(Z\). The no-signaling constraints are preserved by (local) relabelling of output. The affine support equations in Table~\ref{tab:Two4_wayNSboxes} verify~\eqref{eq:OneDetPlayerParityCond} for \(r=0\); the two \(Z\)-conditioned flips change the first line exactly as required for \(r=1\), preserve both three-player parities, and cancel in the four-player \(Z\)-parity. Substitution of \(\mathsf{A}_1\)'s output strategy then verifies all six predicates.
\end{proof}

\begin{table}[htb!]
\centering
\caption{\label{tab:Two4_wayNSboxes}The two base boxes \(Q_s(A^{(2)}_u,A^{(3)}_v,A^{(4)}_w,A^{(5)}_y)\) for \(s=0\) (top) and \(s=1\) (bottom). We have used the labelling \(X\to 0\) and \(Z\to 1\) for the questions, and \(\alpha=1/8\). These were generated by a computer executing a linear program, with an independent check that the~\eqref{eq:OneDetPlayerParityCond} predicates, no-signaling conditions, and normalisation are satisfied.}
\begin{tabular}{rrcccccccccccccccc}
\toprule
\multicolumn{1}{r}{} & \multicolumn{1}{r}{} & \multicolumn{16}{c}{Outcome Combinations}\\
\cmidrule{2-18}
\multicolumn{1}{r}{} & \multicolumn{1}{r}{} & \multicolumn{1}{c}{$0000$} & \multicolumn{1}{c}{$0001$} & \multicolumn{1}{c}{$0010$} & \multicolumn{1}{c}{$0011$} & \multicolumn{1}{c}{$0100$} & \multicolumn{1}{c}{$0101$} & \multicolumn{1}{c}{$0110$} & \multicolumn{1}{c}{$0111$} & \multicolumn{1}{c}{$1000$} & \multicolumn{1}{c}{$1001$} & \multicolumn{1}{c}{$1010$} & \multicolumn{1}{c}{$1011$} & \multicolumn{1}{c}{$1100$} & \multicolumn{1}{c}{$1101$} & \multicolumn{1}{c}{$1110$} & \multicolumn{1}{c}{$1111$} \\
\cmidrule{2-18}
\multirow{16}{*}{\rotatebox[origin=c]{90}{Question Combinations}} & $0000$ & $\alpha$ & 0 & $\alpha$ & 0 & $\alpha$ & 0 & $\alpha$ & 0 & 0 & $\alpha$ & 0 & $\alpha$ & 0 & $\alpha$ & 0 & $\alpha$ \\
 & $0001$ & 0 & $\alpha$ & $\alpha$ & 0 & 0 & $\alpha$ & $\alpha$ & 0 & 0 & $\alpha$ & $\alpha$ & 0 & 0 & $\alpha$ & $\alpha$ & 0 \\
 & $0010$ & 0 & 0 & $2\alpha$ & 0 & $2\alpha$ & 0 & 0 & 0 & 0 & $2\alpha$ & 0 & 0 & 0 & 0 & 0 & $2\alpha$ \\
 & $0011$ & 0 & 0 & $\alpha$ & $\alpha$ & $\alpha$ & $\alpha$ & 0 & 0 & $\alpha$ & $\alpha$ & 0 & 0 & 0 & 0 & $\alpha$ & $\alpha$ \\
 & $0100$ & 0 & 0 & $2\alpha$ & 0 & $2\alpha$ & 0 & 0 & 0 & 0 & $2\alpha$ & 0 & 0 & 0 & 0 & 0 & $2\alpha$ \\
 & $0101$ & 0 & 0 & $2\alpha$ & 0 & 0 & $2\alpha$ & 0 & 0 & 0 & $2\alpha$ & 0 & 0 & 0 & 0 & $2\alpha$ & 0 \\
 & $0110$ & $\alpha$ & 0 & $\alpha$ & 0 & $\alpha$ & 0 & $\alpha$ & 0 & 0 & $\alpha$ & 0 & $\alpha$ & 0 & $\alpha$ & 0 & $\alpha$ \\
 & $0111$ & $\alpha$ & 0 & $\alpha$ & 0 & 0 & $\alpha$ & 0 & $\alpha$ & 0 & $\alpha$ & 0 & $\alpha$ & $\alpha$ & 0 & $\alpha$ & 0 \\
 & $1000$ & 0 & 0 & 0 & 0 & $\alpha$ & $\alpha$ & $\alpha$ & $\alpha$ & $\alpha$ & $\alpha$ & $\alpha$ & $\alpha$ & 0 & 0 & 0 & 0 \\
 & $1001$ & 0 & 0 & 0 & 0 & 0 & $2\alpha$ & $2\alpha$ & 0 & 0 & $2\alpha$ & $2\alpha$ & 0 & 0 & 0 & 0 & 0 \\
 & $1010$ & 0 & 0 & 0 & 0 & $2\alpha$ & 0 & 0 & $2\alpha$ & 0 & $2\alpha$ & $2\alpha$ & 0 & 0 & 0 & 0 & 0 \\
 & $1011$ & 0 & 0 & 0 & 0 & $\alpha$ & $\alpha$ & $\alpha$ & $\alpha$ & $\alpha$ & $\alpha$ & $\alpha$ & $\alpha$ & 0 & 0 & 0 & 0 \\
 & $1100$ & 0 & $\alpha$ & $\alpha$ & 0 & $\alpha$ & 0 & 0 & $\alpha$ & 0 & $\alpha$ & $\alpha$ & 0 & $\alpha$ & 0 & 0 & $\alpha$ \\
 & $1101$ & 0 & $\alpha$ & $\alpha$ & 0 & 0 & $\alpha$ & $\alpha$ & 0 & 0 & $\alpha$ & $\alpha$ & 0 & 0 & $\alpha$ & $\alpha$ & 0 \\
 & $1110$ & $\alpha$ & 0 & 0 & $\alpha$ & $\alpha$ & 0 & 0 & $\alpha$ & 0 & $\alpha$ & $\alpha$ & 0 & 0 & $\alpha$ & $\alpha$ & 0 \\
 & $1111$ & 0 & $\alpha$ & $\alpha$ & 0 & $\alpha$ & 0 & 0 & $\alpha$ & $\alpha$ & 0 & 0 & $\alpha$ & 0 & $\alpha$ & $\alpha$ & 0 \\
\bottomrule
\end{tabular}
\vspace{1em}
\begin{tabular}{rrcccccccccccccccc}
\toprule
\multicolumn{1}{r}{} & \multicolumn{1}{r}{} & \multicolumn{16}{c}{Outcome Combinations}\\
\cmidrule{2-18}
\multicolumn{1}{r}{} & \multicolumn{1}{r}{} & \multicolumn{1}{c}{$0000$} & \multicolumn{1}{c}{$0001$} & \multicolumn{1}{c}{$0010$} & \multicolumn{1}{c}{$0011$} & \multicolumn{1}{c}{$0100$} & \multicolumn{1}{c}{$0101$} & \multicolumn{1}{c}{$0110$} & \multicolumn{1}{c}{$0111$} & \multicolumn{1}{c}{$1000$} & \multicolumn{1}{c}{$1001$} & \multicolumn{1}{c}{$1010$} & \multicolumn{1}{c}{$1011$} & \multicolumn{1}{c}{$1100$} & \multicolumn{1}{c}{$1101$} & \multicolumn{1}{c}{$1110$} & \multicolumn{1}{c}{$1111$} \\
\cmidrule{2-18}
\multirow{16}{*}{\rotatebox[origin=c]{90}{Question Combinations}} & $0000$ & 0 & $\alpha$ & 0 & $\alpha$ & 0 & $\alpha$ & 0 & $\alpha$ & $\alpha$ & 0 & $\alpha$ & 0 & $\alpha$ & 0 & $\alpha$ & 0 \\
 & $0001$ & 0 & $\alpha$ & $\alpha$ & 0 & 0 & $\alpha$ & $\alpha$ & 0 & 0 & $\alpha$ & $\alpha$ & 0 & 0 & $\alpha$ & $\alpha$ & 0 \\
 & $0010$ & 0 & $2\alpha$ & 0 & 0 & 0 & 0 & 0 & $2\alpha$ & 0 & 0 & $2\alpha$ & 0 & $2\alpha$ & 0 & 0 & 0 \\
 & $0011$ & $\alpha$ & $\alpha$ & 0 & 0 & 0 & 0 & $\alpha$ & $\alpha$ & 0 & 0 & $\alpha$ & $\alpha$ & $\alpha$ & $\alpha$ & 0 & 0 \\
 & $0100$ & 0 & 0 & 0 & $2\alpha$ & 0 & $2\alpha$ & 0 & 0 & $2\alpha$ & 0 & 0 & 0 & 0 & 0 & $2\alpha$ & 0 \\
 & $0101$ & 0 & 0 & $2\alpha$ & 0 & 0 & $2\alpha$ & 0 & 0 & 0 & $2\alpha$ & 0 & 0 & 0 & 0 & $2\alpha$ & 0 \\
 & $0110$ & 0 & $\alpha$ & 0 & $\alpha$ & 0 & $\alpha$ & 0 & $\alpha$ & $\alpha$ & 0 & $\alpha$ & 0 & $\alpha$ & 0 & $\alpha$ & 0 \\
 & $0111$ & $\alpha$ & 0 & $\alpha$ & 0 & 0 & $\alpha$ & 0 & $\alpha$ & 0 & $\alpha$ & 0 & $\alpha$ & $\alpha$ & 0 & $\alpha$ & 0 \\
 & $1000$ & 0 & 0 & 0 & 0 & $\alpha$ & $\alpha$ & $\alpha$ & $\alpha$ & $\alpha$ & $\alpha$ & $\alpha$ & $\alpha$ & 0 & 0 & 0 & 0 \\
 & $1001$ & 0 & 0 & 0 & 0 & 0 & $2\alpha$ & $2\alpha$ & 0 & 0 & $2\alpha$ & $2\alpha$ & 0 & 0 & 0 & 0 & 0 \\
 & $1010$ & 0 & 0 & 0 & 0 & $2\alpha$ & 0 & 0 & $2\alpha$ & 0 & $2\alpha$ & $2\alpha$ & 0 & 0 & 0 & 0 & 0 \\
 & $1011$ & 0 & 0 & 0 & 0 & $\alpha$ & $\alpha$ & $\alpha$ & $\alpha$ & $\alpha$ & $\alpha$ & $\alpha$ & $\alpha$ & 0 & 0 & 0 & 0 \\
 & $1100$ & $\alpha$ & 0 & 0 & $\alpha$ & 0 & $\alpha$ & $\alpha$ & 0 & $\alpha$ & 0 & 0 & $\alpha$ & 0 & $\alpha$ & $\alpha$ & 0 \\
 & $1101$ & 0 & $\alpha$ & $\alpha$ & 0 & 0 & $\alpha$ & $\alpha$ & 0 & 0 & $\alpha$ & $\alpha$ & 0 & 0 & $\alpha$ & $\alpha$ & 0 \\
 & $1110$ & $\alpha$ & 0 & 0 & $\alpha$ & $\alpha$ & 0 & 0 & $\alpha$ & 0 & $\alpha$ & $\alpha$ & 0 & 0 & $\alpha$ & $\alpha$ & 0 \\
 & $1111$ & $\alpha$ & 0 & 0 & $\alpha$ & 0 & $\alpha$ & $\alpha$ & 0 & 0 & $\alpha$ & $\alpha$ & 0 & $\alpha$ & 0 & 0 & $\alpha$ \\
\bottomrule
\end{tabular}
\end{table}
\subsection{Reproducing the Five-Player Quantum Behavior}
Perfect play constrains only the six predicates in~\eqref{e:4game}, so Proposition~\ref{prop:OneDetPlayer_PerfectPlay} does not imply that the four-player no-signaling boxes \(P_{r,s}\) reproduce the full five-player quantum behavior induced by the quantum state and measurements described in Section~\ref{s:qstrat}. The following decomposition shows that a convex mixture of the same boxes can reproduce every input-conditional output probability of the five-player quantum behavior.
\begin{prop}\label{prop:FourBoxRepQuantum}
The five-player quantum-permissible behavior winning the game with certainty is simulable using two bits of shared randomness and four four-way no-signaling boxes \(P_{r,s}\).
\end{prop}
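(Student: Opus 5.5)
The plan is to show that the uniform mixture
\[
\tfrac14\sum_{r,s\in\{0,1\}} \delta_{A^{(1)}_X=r}\,\delta_{A^{(1)}_Z=1\oplus s}\,P_{r,s}
\]
reproduces the quantum behavior $P_Q$ of Section~\ref{s:qstrat} exactly. The two bits of shared randomness are $(r,s)$. Player $\mathsf A_1$ outputs $r$ on $X$ and $1\oplus s$ on $Z$. The other four players query the box $P_{r,s}$.

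\textbf{Step 1: single-party marginals of $\mathsf A_1$.} First I check that $\mathsf A_1$'s outputs in $P_Q$ are uniform on each input. I also check that, in every five-party input context, the two bits $A^{(1)}_X$ and $A^{(1)}_Z$ that $\mathsf A_1$ would produce behave consistently with being independent uniform labels $(r,1\oplus s)$. Only one of them is ever realized, so all that is needed is that each realized marginal is uniform; this follows from the Pauli stabilizer structure.

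\textbf{Step 2: conditional equality.} The statement then reduces to showing that, for every input string $(q_1,\dots,q_5)\in\{X,Z\}^5$ and every outcome, the conditional quantum distribution of $\mathsf A_2,\dots,\mathsf A_5$ given $\mathsf A_1$'s input and outcome matches the mixture. Concretely:
\begin{itemize}
\item if $q_1=X$ with outcome $r$, the quantum conditional must equal $\tfrac12\sum_s P_{r,s}$;
\item if $q_1=Z$ with outcome $1\oplus s$, it must equal $\tfrac12\sum_r P_{r,s}$.
\end{itemize}
The other bit is marginalized out in each case.

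To compute the quantum side I post-measure qubit 1. Measuring $\sigma_{\mathsf X}$ or $\sigma_{\mathsf Z}$ on $|\psi\rangle$ leaves a four-qubit stabilizer state on qubits $2$--$5$. Its stabilizer group is obtained from $\langle H_1,\dots,H_5\rangle$ by keeping the elements that act on qubit 1 as $I$ or as the measured Pauli, with signs fixed by the outcome. All marginals of such a state for Pauli measurements are determined by this group: each observed parity is either fixed or unbiased, and each joint distribution is uniform on an affine subspace.

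\textbf{Step 3: verify the boxes.} I then verify that the tabulated entries of $Q_0,Q_1$, together with the $r$-dependent output flips on $Z$ inputs of $\mathsf A_2,\mathsf A_5$, have exactly these supports. Each row of the table is uniform on a coset of an $\mathbb F_2$-affine space, with entries $\alpha$ or $2\alpha$, which is consistent with this stabilizer picture. Averaging over the unrevealed bit must also produce the correct support; this step is a finite check over 16 input rows, which I would do by comparing affine constraint sets rather than by entry-wise arithmetic.

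\textbf{Main obstacle.} The hardest point is Step 3 in the averaged form. Each individual $P_{r,s}$ only enforces the game predicates. Matching the quantum state requires that the mixture over the hidden bit reproduce the correlations that the quantum state fixes between $\mathsf A_1$'s unmeasured observable and the rest, for example $\langle \mathsf Z_1 \mathsf X_2\cdots\rangle$-type stabilizers.

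If the plain uniform mixture fails this check, I would instead determine the four-party conditional states explicitly from the stabilizers. I would show that each conditional equals one box $P_{r,s}$ up to the relabeling, and adjust the weights on $(r,s)$ if necessary. By the symmetry $\mathsf A_1$'s outcomes are uniform, so I expect the weights to remain $1/4$ each.
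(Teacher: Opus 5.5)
Your proposal is the paper's construction: uniform shared bits \((r,s)\), \(\mathsf A_1\) answering \(r\) on \(X\) and \(1\oplus s\) on \(Z\), and \(\mathsf A_2,\dots,\mathsf A_5\) querying \(P_{r,s}\); your reduction to comparing the post-measurement stabilizer conditionals with \(\tfrac12\sum_s P_{r,s}\) and \(\tfrac12\sum_r P_{r,s}\) is a sound way to carry out the finite check that the paper's proof only asserts. The check does go through on every row (e.g.\ with \(\mathsf A_1\) on \(Z\) and inputs \(XXZZ\), the only nontrivial surviving stabilizer is \(H_5H_3=\mathsf Z_1\mathsf X_2\mathsf X_3\mathsf Z_4\), forcing \(A^{(2)}_X\oplus A^{(3)}_X\oplus A^{(4)}_Z=1\oplus s\), which is exactly the support of row \(0011\) of \(Q_s\) and is not a game predicate), so your fallback is unnecessary, and as stated it would not have worked anyway, because the conditional given \(A^{(1)}_X=r\) alone is necessarily an average over \(s\) rather than a single box, and the boxes are fixed by the statement.
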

\begin{proof}
Suppose the five players share two uniform bits \((r,s)\). Relabel the inputs \(X,Z\) as \(0,1\), respectively. Player \(\mathsf{A}_1\) outputs deterministically as
\begin{equation}\label{eq:A_outcome_rule}
A^{(1)}_t = (1-t)r\oplus t(1\oplus s),\,\text{for input }t\in\{0,1\},
\end{equation}
and players \(\mathsf{A}_2,\mathsf{A}_3,\mathsf{A}_4,\mathsf{A}_5\) use \(P_{s}^{(r)}\), then the resulting behavior is exactly \(P_{\mathrm Q}\):
\begin{equation}
P_{\mathrm Q}(A^{(1)}_t,A^{(2)}_u,A^{(3)}_v,A^{(4)}_w,A^{(5)}_y) = \frac{1}{4}\sum_{r,s=0}^{1}[\![A^{(1)}_t = (1-t)r\oplus t(1\oplus s)]\!]P_{r,s}(A^{(2)}_u,A^{(3)}_v,A^{(4)}_w,A^{(5)}_y),
\end{equation}
where the Iverson bracket \([\![\Phi]\!]\) serves as an indicator function, evaluating to \(1\) when the condition \(\Phi\) holds and \(0\) otherwise.
\end{proof}
\begin{remark}
Player \(\mathsf{A}_1\) is deterministic only after conditioning on \((r,s)\). Averaging the four branches
restores the non-deterministic marginals of \(P_{\mathrm Q}\).
\end{remark}
Following the so-called \textit{All-versus-Nothing} arguments, using techniques like the \textit{table of zeroes} (see, for instance, Definition.~\Rmnum{2}.~7 in~\cite{Mansfield17} and Appendix A in~\cite{LiuFNS_FN_AVN_PT}), we can verify that the distributions \(Q_0, Q_1\) are strongly nonlocal. Let \(\lambda\) be the register listing the assignments of outcomes for all possible inputs; there are \((2^2)^4=256\) such assignments.
\begin{equation}
\lambda = \begin{Bmatrix}
A^{(2)}_X = 0, A^{(3)}_X = 0, A^{(4)}_X = 0, A^{(5)}_X = 0; A^{(2)}_Z = 0, A^{(3)}_Z = 0, A^{(4)}_Z = 0, A^{(5)}_Z = 0\\
A^{(2)}_X = 0, A^{(3)}_X = 0, A^{(4)}_X = 0, A^{(5)}_X = 0; A^{(2)}_Z = 0, A^{(3)}_Z = 0, A^{(4)}_Z = 0, A^{(5)}_Z = 1\\
\vdots\\
A^{(2)}_X = 1, A^{(3)}_X = 0, A^{(4)}_X = 0, A^{(5)}_X = 1; A^{(2)}_Z = 0, A^{(3)}_Z = 1, A^{(4)}_Z = 1, A^{(5)}_Z = 1\\
\vdots\\
A^{(2)}_X = 1, A^{(3)}_X = 1, A^{(4)}_X = 1, A^{(5)}_X = 1; A^{(2)}_Z = 1, A^{(3)}_Z = 1, A^{(4)}_Z = 1, A^{(5)}_Z = 0\\
A^{(2)}_X = 1, A^{(3)}_X = 1, A^{(4)}_X = 1, A^{(5)}_X = 1; A^{(2)}_Z = 1, A^{(3)}_Z = 1, A^{(4)}_Z = 1, A^{(5)}_Z = 1\\
\end{Bmatrix}.
\end{equation}
We denote by \(A^{(i)}_s(\lambda)\) the output for party \(\mathsf{A}_i\) corresponding to input \(s\) according to the register \(\lambda\). We wrote a small Python script to verify that the distribution \(Q_s(A^{(2)}_u,A^{(3)}_v,A^{(4)}_w,A^{(5)}_y)\) satisfies the following condition for a strongly nonlocal behavior:
\begin{equation}\label{e:snq}
Q_s\text{ is strongly nonlocal }\Leftrightarrow \text{For all }\lambda,\text{ there exists }u,v,w,y,\text{ s.t., }Q_s(A^{(2)}_u(\lambda),A^{(3)}_v(\lambda),A^{(4)}_w(\lambda),A^{(5)}_y(\lambda))=0,
\end{equation}
meaning that every local deterministic strategy generates an outcome with positive probability that the behavior \(Q_{s}\) assigns zero probability. Equivalently, the set defined as:
\begin{equation}
K(Q_s)\coloneqq\{\lambda\colon Q_s(A^{(2)}_u(\lambda),A^{(3)}_v(\lambda),A^{(4)}_w(\lambda),A^{(5)}_y(\lambda))>0,\,\forall u,v,w,y\}    
\end{equation}
is empty. Since every local behavior is a convex combination of local deterministic behaviors this condition will hold for such a convex combination.

\section{Testing Membership in the Quantum Set}\label{s:qtest}
\subsection{Membership via Outer Approximation}

We want to test if the behavior \(Q_0(A^{(2)}_u,A^{(3)}_v,A^{(4)}_w,A^{(5)}_y)\)---one of the four behaviors that simulate the full five-player quantum behavior that wins the game perfectly---is itself quantum-admissible. One way to perform this test is by means of a Bell inequality violation. We check to see if there exists an outer-approximation of the set of quantum-achievable distributions such that maximizing a suitable Bell functional over the approximated set one achieves the (algebraic) maximal value.
The Bell inequality that we use in our test is the one that is algebraically saturated by \(Q_0\).

We now obtain the four-player Bell functional, used to test the membership of \(Q_0\) in the quantum set, from a five-player Bell functional by making one player respond deterministically on both inputs. First, we reformulate the five-player Bell functional \(\mathcal S=\mathbb{P}(W_f)+\sum_{i=1}^5\mathbb{P}(W_i)\), expressed in terms of probabilities of outcome-parity conditions, in terms of correlators. For any five-player behavior \(P(A^{(1)}_t,A^{(2)}_u,A^{(3)}_v,A^{(4)}_w,A^{(5)}_y)\), for \(t,u,v,w,y\in\{X,Z\}\), we can define the correlators \(\langle \pA^{(i-1)}_X\pA^{(i)}_Z\pA^{(i+1)}_X\rangle\), with \(\pA_s^{(i)}\coloneqq(-1)^{A_{s}^{(i)}}\), corresponding to the three-player parity conditions, as the difference between the even parity and odd parity of the outcomes of the three players \(\mathsf{A}_{i-1},\mathsf{A}_i,\mathsf{A}_{i+1}\):
\begin{equation}\label{eq:Correlator_ThreePlayers}
\begin{split}
\langle\pA^{(i-1)}_X\pA^{(i)}_Z\pA^{(i+1)}_X\rangle &\coloneqq \mathbb{P}_P(A^{(i-1)}_X\oplus A^{(i)}_Z\oplus A^{(i+1)}_X=0)-\mathbb{P}_P(A^{(i-1)}_X\oplus A^{(i)}_Z\oplus A^{(i+1)}_X=1)\\
&=1-2\mathbb{P}_P(A^{(i-1)}_X\oplus A^{(i)}_Z\oplus A^{(i+1)}_X=1).
\end{split}
\end{equation}
Similarly, corresponding to the the sixth predicate which is a parity condition on the \(Z\)-outcomes of all five players, the full correlator \(\langle \pA^{(1)}_Z\pA^{(2)}_Z\cdots \pA^{(5)}_Z\rangle\) is defined as follows:
\begin{equation}
\langle \pA^{(1)}_Z\pA^{(2)}_Z\cdots \pA^{(5)}_Z\rangle\coloneqq 2\mathbb{P}_P(A^{(1)}_Z\oplus\cdots\oplus A^{(5)}_Z=0)-1.
\end{equation}
The Bell functional \(\mathcal S\) can then be expressed in terms of correlators as:
\begin{equation}\label{eq:ProbBellFunc_to_CorrBellFunc}
\mathcal{S} = \frac{1}{2}\sum_{i=1}^{5}\big(1-\langle \pA^{(i-1)}_X\pA^{(i)}_Z\pA^{(i+1)}_X\rangle\big) + \frac{1}{2} + \frac{1}{2}\langle\pA^{(1)}_Z\cdots \pA^{(5)}_Z\rangle = 3 + \frac{\mathcal C}{2},
\end{equation}
where \(\mathcal{C} = -\sum_{i=1}^{5}\langle\pA^{(i-1)}_X\pA^{(i)}_Z\pA^{(i+1)}_X\rangle+\langle \pA^{(1)}_Z\pA^{(2)}_Z\cdots \pA^{(5)}_Z\rangle\). Assuming equal probability for the six winning predicates, the winning probability is \(\omega=(6+\mathcal{C})/12\). The local maximum and algebraic maximum of \(\mathcal S\) is \(5\) and \(6\), respectively, whereas the local maximum and algebraic maximum of \(\mathcal C\) is \(4\) and \(6\), respectively. When one player, say \(\mathsf{A}_1\), behaves deterministically, for instance, by declaring outputs \(A^{(1)}_X=A^{(1)}_Z=1\), then the four-player Bell functional is obtained by substituting \(\pA^{(1)}_X=\pA^{(1)}_Z=1\) in the five-player Bell functional \(\mathcal C\), which is shown below with the relabelled questions (inputs) \(X\rightarrow 0, Z\rightarrow 1\):
\begin{equation}\label{eq:4PlayerBellIneq}
\begin{split}
\mathcal{C}'\equiv-\langle \pA^{(2)}_1\pA^{(3)}_0\rangle - \langle \pA^{(2)}_0\pA^{(3)}_1\pA^{(4)}_0\rangle - \langle \pA^{(3)}_0\pA^{(4)}_1\pA^{(5)}_0\rangle &- \langle \pA^{(4)}_0\pA^{(5)}_1\rangle - \langle \pA^{(2)}_0\pA^{(5)}_0\rangle\\
&+ \langle \pA^{(2)}_1\pA^{(3)}_1\pA^{(4)}_1\pA^{(5)}_1\rangle \overset{\mathcal L}{\le} 4.
\end{split}
\end{equation}
Denote the outcome-\(0\) projectors for \(\mathsf{A}_2,\mathsf{A}_3,\mathsf{A}_4,\mathsf{A}_5\) by \(\{F_0, F_1\}, \{G_0, G_1\}, \{R_0, R_1\}, \{S_0, S_1\}\), respectively. Let \(\mathcal{A}\coloneqq\{I,F_0,F_1,G_0,G_1,R_0,R_1,S_0,S_1\}\) 
be the alphabet of operators belonging to the four players. Denote by \(\mathcal{W}\) the set of words formed by the elements of \(\mathcal A\) of length at most \(k\).  In constructing the set \(\mathcal W\), we apply the conditions satisfied by the projection operators, namely, idempotence, $XX=X$, and commutation, \(XY=YX\), for \(X,Y\in\mathcal{A}\) not belonging to the same party. (We do not assume within-party operators to necessarily commute, for instance, \(F_0F_1\neq F_1F_0\).) \(\Gamma\) is the \(|\mathcal{W}|\times|\mathcal{W}|\) (moment) matrix with entries \(\Gamma_{u,v}=\mathrm{Tr}[u^{\dagger}v\rho]\), for \(u,v\in\mathcal{W}\). We can write the Bell expression \eqref{eq:4PlayerBellIneq} in terms of the elements of the moment matrix $\Gamma$, as follows. While the set $\mathcal A$ contains only outcome-0 projectors, the outcome-1 projectors can be obtained from these as orthogonal complements; i.e., subtraction from $I$ which is also in $\mathcal{A}$. The Bell expression, when written as a linear combination of probability terms, has terms like \(p(0,1,0,1|u,v,w,y)\) which can be expressed as \(\mathrm{Tr}[F_u(I-G_v)R_w(I-S_y)\rho]\). All probability terms, be it a full probability term or a marginal probability term, can be expressed like this. The Bell-functional \(\mathcal{C}'\) in ~\eqref{eq:4PlayerBellIneq} can then be re-expressed in terms of the moment matrix entries \(\Gamma_{i,j}\):
\begin{equation}\label{eq:4PlayerBellIneq_MomentMat}
\begin{split}
\mathcal{C}'(\Gamma) &= -4\Gamma_{I,G_1}-4\Gamma_{I,R_1}-4\Gamma_{F_1,G_0}+4\Gamma_{F_0,G_1}+4\Gamma_{F_1,G_1}+4\Gamma_{F_0,R_0}+4\Gamma_{G_1,R_0}+4\Gamma_{F_1,R_1}+4\Gamma_{G_0,R_1}\\ 
&+ 4\Gamma_{G_1,R_1} - 4\Gamma_{F_0,S_0} + 4\Gamma_{G_0,S_0} + 4\Gamma_{R_1,S_0} + 4\Gamma_{F_1,S_1} + 4\Gamma_{G_1,S_1} - 4\Gamma_{R_0,S_1} + 4\Gamma_{R_1,S_1} - 8\Gamma_{F_0,G_1R_0}\\ 
&- 8\Gamma_{F_1,G_1R_1} - 8\Gamma_{G_0,R_1S_0}-8\Gamma_{F_1,G_1S_1}-8\Gamma_{F_1,R_1S_1}-8\Gamma_{G_1,R_1S_1}+16\Gamma_{F_1G_1,R_1S_1}.
\end{split}
\end{equation}
Notice that the expression in~\eqref{eq:4PlayerBellIneq_MomentMat} is not unique due to the idempotence and inter-party commutation relations. For instance, \(\Gamma_{G_1,R_1S_1}=\Gamma_{G_1R_1,S_1}=\Gamma_{G_1R_1,R_1S_1}\). When implementing the NPA hierarchy, one needs to include the several equality conditions of the form \(\Gamma_{u,v}=\Gamma_{s,t}\), whenever \(u^\dagger v\sim s^\dagger t\), for \(u,v,s,t\in\mathcal{W}\). The expression in~\eqref{eq:4PlayerBellIneq_MomentMat} was written considering the second level (i.e., \(k=2\)) of the NPA hierarchy. It is valid for any higher level since each moment matrix entry in the expression is equal to a valid entry of the moment matrix that reduces to it. For instance, \(\Gamma_{F_1G_1,R_1S_1}=\Gamma_{F_1,G_1R_1S_1}\), where terms like \(\Gamma_{F_1,G_1R_1S_1}\) come into play only when we implement the third level of the hierarchy. 

The NPA procedure is to optimize \eqref{eq:4PlayerBellIneq_MomentMat}, subject to the constraint that $\Gamma$ is positive semidefinite with its entries satisfying the various identities described above. After executing the NPA heirarchy at the second level and finding an upper bound value of approximately 5.03, we executed the NPA hierarchy at the third level and found a numerical upper bound of \(4.828427127124\) which shows agreement up to eight places after decimal with \(2(1+\sqrt{2})\approx 4.828427124746\). We also increased the level-3 word list by adding the list \(\{F_1G_1R_1S_1, F_1G_1R_1S_0, F_1G_1R_0S_0, F_1G_1R_0S_1, F_1G_0R_0S_0, F_1G_0R_0S_1, F_1G_0R_1S_0, F_1G_0R_1S_1\}\) of operator words of length four, thereby obtaining a slightly tighter outer-approximation of the quantum set, for which we achieved the value \(4.828427125302\). While the NPA bounds apply only to an outer approximation of the quantum set, the value of~\eqref{eq:4PlayerBellIneq} gradually approaching \(2(1+\sqrt{2})\) indicates that the quantum maximal bound of~\eqref{eq:4PlayerBellIneq} is \(2(1+\sqrt{2})\). Translating from the value of the correlator to the value of the winning probability using the relationship $\omega=(6+\mathcal C)/12$ described earlier, we obtain the winning probability bound $(4+\sqrt{2})/6$ presented in the main text. In the next section, we show how a direct numerical search of explicit quantum strategies is used to confirm that quantum mechanics does in fact saturate the aforementioned bound.

\subsection{Membership via Numerical Procedure}
If the value of the Bell functional found from the NPA hierarchy is found equal to an explicit achievable quantum value---which could be the result of some calculation, then we can claim to have found the quantum maximal value. Here we describe a numerical approach to find the quantum maximal value of \(\mathcal{C}'\).
Since in our scenario all four players receive two questions each of which has two possible outcomes, their local measurements correspond to pairs of dichotomic observables. Invoking Jordan's lemma allows us to restrict the shared state to \((\mathbb{C}^{2})^{\otimes 4}\) without loss of generality, and by a suitable choice of local basis, we can parametrize the each player's observables strictly in terms of the Pauli \(\sigma_{\mathsf X}\) and \(\sigma_{\mathsf Z}\) matrices. Without loss generality, for player \(\mathsf{A}_i\), \(i\in\{2,\ldots,5\}\), we can consider the observable on input \(X\) (relabelled as `\(0\)') to be \(O^{(i)}_0=Z\) and the observable on input \(Z\) (relabelled as `\(1\)') to be \(O^{(i)}_1=(\cos\theta_i)\sigma_{\mathsf Z}+(\sin\theta_i)\sigma_{\mathsf X}\). Furthermore, we need not parametrize the state \(\rho\). We justify these two assumptions as follows:
Consider a generic pair of observables \(O_0, O_1\). By the spectral theorem, we can express \(O_0=|u_{+}\rangle\langle u_{+}|-|u_{-}\rangle\langle u_{-}|\), where \(\{|u_{+}\rangle,|u_{-}\rangle\}\) is an orthonormal basis satisfying \(|u_{+}\rangle\langle u_{+}|+|u_{-}\rangle\langle u_{-}|=I\) and \(\langle u_{+}|u_{-}\rangle=0\). We can construct a local unitary transformation \(U_0=|0\rangle\langle u_{+}|+|1\rangle\langle u_{-}|\) that rotates \(O_0\) to \(\sigma_{\mathsf Z}\). Let \(O'_1=U_0O_1U^\dagger_0\). Since \(O'_1\) is Hermitian (as a unitary transformation of a Hermitian operator) and satisfies \((O'_1)^2 = I\), it can be expressed as
\begin{equation}\label{e:wcrule}
O'_{1} = \begin{bmatrix}c & w\\ w^\ast & -c\end{bmatrix},\,\,\text{with }c^2+|w|^2=1.
\end{equation}
Therefore, we can take \(c=\cos\theta\) and \(w=e^{-i\phi}\sin\theta\), for \(\theta\in[0,\pi], \phi\in[0,2\pi]\). Now let \(U_1=DU_0\), where \(D=e^{i\phi/2}|0\rangle\langle 0|+e^{-i\phi/2}|1\rangle\langle 1|\). We have \(U_1O_0U^\dagger_1=DU_0O_0U^\dagger_0D^\dagger_0=D\sigma_{\mathsf Z}D^\dagger=\sigma_{\mathsf Z}\). And, \(U_1O_1U^\dagger_1=DU_0O_1U^\dagger_0D^\dagger=DO'_1D^\dagger=(\cos\theta)\sigma_{\mathsf Z}+(\sin\theta)\sigma_{\mathsf X}\). The existence of such a local unitary transformation \(U_1\) justifies the choice \(O_0=\sigma_{\mathsf Z}\) and \(O_1=(\cos\theta)\sigma_{\mathsf Z}+(\sin\theta)\sigma_{\mathsf X}\). All four players can adopt this strategy for constructing a local transformation leading to a global unitary transformation \(V=U_1\otimes U_2\otimes U_3\otimes U_4\). The Bell functional \(\mathcal{C}'\) is expressible as a linear combination of observable terms \(O^{(2)}_u\otimes O^{(3)}_v\otimes O^{(4)}_w\otimes O^{(5)}_y\). One can apply the same transformation \(V\) to both \(\mathcal{C}'\) and \(\rho\) and observe that \(\mathrm{Tr}[V\mathcal{C}'V^\dagger V\rho V^{\dagger}]=\mathrm{Tr}[\mathcal{C}'\rho]\). Hence, arbitrary qubit measurements on a state $\rho$ can always be re-expressed equivalently as $\sigma_{\mathsf Z}$ and $(\cos\theta)\sigma_{\mathsf Z}+(\sin \theta)\sigma_{\mathsf X}$ measurements on a different state $V\rho V^\dagger$. The problem of finding the maximum quantum value for \(\mathrm{Tr}[\mathcal{C}'\rho]\) is then
\begin{equation}\label{eq:MaxQVal_optimization}
\max_{\boldsymbol{\theta}\in[0,\pi]^4}\mathrm{Tr}[\mathcal{C}'(\boldsymbol{\theta})\rho] = \max_{\boldsymbol{\theta}\in[0,\pi]^4}\lambda_{\max}(\mathcal{C}'(\boldsymbol{\theta})),
\end{equation}
where \(\boldsymbol{\theta}\equiv(\theta_1,\theta_2,\theta_3,\theta_4)\), and $\lambda_{\max}(\cdot)$ denotes the maximal eigenvalue of the enclosed operator (well-defined when the operator is Hermitian and so has real eigenvalues). The equality in~\eqref{eq:MaxQVal_optimization} follows from considering the eigen-decomposition of \(\mathcal{C}'(\boldsymbol{\theta})\). Let \(\mathcal{C}'(\boldsymbol{\theta})=\sum_{k}\lambda_k(\boldsymbol{\theta})|w_k(\boldsymbol{\theta})\rangle\langle w_k(\boldsymbol{\theta})|\), where \(\{|w_k(\boldsymbol{\theta})\rangle\}\) is an eigenbasis. Then \(\mathrm{Tr}[\sum_{k}\lambda_k(\boldsymbol{\theta})|w_k(\boldsymbol{\theta})\rangle\langle w_k(\boldsymbol{\theta})|\rho]=\sum_{k}\lambda_k(\boldsymbol{\theta})\langle w_k(\boldsymbol{\theta})|\rho|w_k(\boldsymbol{\theta})\rangle\le\lambda_{\max}(\mathcal{C}'(\boldsymbol{\theta}))\). The inequality follows from the fact that \(\langle w_k(\boldsymbol{\theta})|\rho|w_k(\boldsymbol{\theta})\rangle\ge 0\) specifies a probability distribution and the average of all eigenvalues of \(\mathcal{C}'(\boldsymbol{\theta})\) cannot exceed its maximum eigenvalue. Now an optimization searching for the maximal eigenvalue no longer involves $\rho$ (though if the optimal eigenvalue is found, the corresponding eigenvector tells us a choice for an optimal $\rho$). This justifies the earlier claim that it is not necessary to parametrize $\rho$ to pursue the optimization.

Direct numerical optimization of~\eqref{eq:MaxQVal_optimization} is not guaranteed to converge to a globally optimal solution, but if a solver finds a quantum value in excess of the local bound of four, this conclusively demonstrates the existence of an explicit quantum strategy for beating the local bound. Our optimization yielded a quantum realisation beating the local bound with a numerical value \(4.82842712474\) appearing very close to \(2+2\sqrt{2}\). The measurement angles obtained from our numerical optimization are \(\boldsymbol{\theta}_\ast\equiv(1.57079625367, 1.57079627710, 1.57079633988, 1.57079627438)\), which appear to be very close to the value \(\pi/2\). By inspection with the aid of a computer algebra program we were able to translate this into an exact strategy as described below.

Suppose \(O_s^{(i)}\) is an observable for party \(\mathsf{A}_i\) corresponding to the measurement setting label \(s\in\{0,1\}\). As explained earlier in the subsection, we can consider \(O^{(i)}_0=\sigma_{\mathsf Z}\) and \(O^{(i)}_{1}(\theta_i)=(\cos\theta_i)\sigma_{\mathsf Z}+(\sin\theta_i)\sigma_{\mathsf X}\). Then from the numerical optimization, from which we inferred that \(\boldsymbol{\theta}_\ast\equiv(\pi/2,\pi/2,\pi/2,\pi/2)\), we see that the optimal measurement strategy for party \(\mathsf{A}_i\), for each \(i\), is to measure \(O^{(i)}_0=\sigma_{\mathsf Z}\) and \(O^{(i)}_1=\sigma_{\mathsf X}\). To find the optimal (pure) state, we need to inspect the eigenvector corresponding to the largest eigenvalue of the Bell operator corresponding to the Bell functional \(\mathcal{C}'\) in~\eqref{eq:4PlayerBellIneq}, which is shown below.
\begin{align}
\mathcal{O}(\boldsymbol{\theta}) =& -O^{(2)}_1(\theta_2)\otimes O^{(3)}_0\otimes I\otimes I - O^{(2)}_0\otimes O^{(3)}_1(\theta_3)\otimes O^{(4)}_0\otimes I - I\otimes O^{(3)}_0\otimes O^{(4)}_1(\theta_4)\otimes O^{(5)}_0\notag\\
&- I\otimes I\otimes O^{(4)}_0\otimes O^{(5)}_1(\theta_5)
- O^{(2)}_0\otimes I\otimes I\otimes O^{(5)}_0 + O^{(2)}_1(\theta_2)\otimes O^{(3)}_1(\theta_3)\otimes O^{(4)}_1(\theta_4)\otimes O^{(5)}_1(\theta_5),\notag\\
\Rightarrow \mathcal{O}(\boldsymbol{\theta}_\ast) =& - \sigma_{\mathsf X}\otimes\sigma_{\mathsf Z}\otimes I\otimes I - \sigma_{\mathsf Z}\otimes\sigma_{\mathsf X}\otimes\sigma_{\mathsf Z}\otimes I - I\otimes\sigma_{\mathsf Z}\otimes\sigma_{\mathsf X}\otimes\sigma_{\mathsf Z} - I\otimes I\otimes\sigma_{\mathsf Z}\otimes\sigma_{\mathsf X}-\sigma_{\mathsf Z}\otimes I\otimes I\otimes\sigma_{\mathsf Z}\notag\\
&+\sigma_{\mathsf X}\otimes\sigma_{\mathsf X}\otimes\sigma_{\mathsf X}\otimes\sigma_{\mathsf X}.
\end{align}
The six eigenvalues of \(\mathcal{O}(\pi/2,\pi/2,\pi/2,\pi/2)\) are \(\{2\pm 2\sqrt{2},-2\pm2\sqrt{2},\pm 2\}\). The pure state that is part of the optimal strategy is the (normalised) eigenvector corresponding to the largest eigenvalue \(2+2\sqrt{2}\) as shown below:
\begin{equation}\label{eq:OptimalPureState}
|\psi_\ast\rangle = \begin{bmatrix}-b & a & b & a & b & -a & b & a & a & -b & -a & -b & a & -b & a & b\end{bmatrix}^T,
\end{equation}
where \(a=\big(1+\sqrt{2}\big)/\big(4\sqrt{2+\sqrt{2}}\big)\) and \(b=1/\big(4\sqrt{2+\sqrt{2}}\big)\). The state in~\eqref{eq:OptimalPureState} can be re-expressed as:
\begin{equation}\label{e:4state}
\begin{split}
|\psi_\ast\rangle &= \frac{1}{2}\Big(|0\!-\!0\alpha\rangle + |0\text{+}1\beta\rangle + |1\text{+}0\gamma\rangle - |1\!-\!1\delta\rangle\Big),\\
\text{where }|\alpha\rangle &= R_y(\pi/4)|1\rangle = \begin{bmatrix}
    -\sin(\pi/8) & \cos(\pi/8)
\end{bmatrix}^T, \,|\beta\rangle = R_y(-\pi/4)|1\rangle = R_y(\pi/4)|+\rangle = \begin{bmatrix}
    \sin(\pi/8) & \cos(\pi/8)
\end{bmatrix}^T,\\
|\gamma\rangle &= R_y(-\pi/4)|0\rangle = R_y(\pi/4)|-\rangle  = \begin{bmatrix}
    \cos(\pi/8) & -\sin(\pi/8)
\end{bmatrix}^T,\,
|\delta\rangle = R_y(\pi/4)|0\rangle = \begin{bmatrix}
    \cos(\pi/8) & \sin(\pi/8)
\end{bmatrix}^T.
\end{split}
\end{equation}
Thus, the optimal quantum strategy achieving the violation of \(2+2\sqrt{2}\) is \(\{|\psi_\ast\rangle,\{O^{(i)}_0=\sigma_{\mathsf{Z}},O^{(i)}_1=\sigma_{\mathsf{X}}\}\}\). As described in the previous subsection, this violation value translates to the winning probability of $(4+\sqrt{2})/6$ cited in the main text, in excess of the local bound of $5/6$.

We briefly remark that the state in~\eqref{e:4state} yields the given Bell violation in a somewhat interesting manner. Two of the correlators yield a value of $-1$, while the remaining four yield $\pm1/\sqrt{2}$; this might seem to suggest alternation between an alignment game ($\pm 1$ correlators) and a distributed CHSH game (four $\pm 1/\sqrt{2}$ values), as can be discerned in, for example, Expression (3) of~\cite{mao:2022} (which pertains to a three-party scenario with a shared GHZ state, in which one party measures $\sigma_{\mathsf X}$ to swap a Bell state to the other two parties for them to violate CHSH, or alternatively measures $\sigma_{\mathsf Z}$ to align with another party). However any possible interpretation of the current example as containing an embedded CHSH scenario would necessarily be more complicated. To illustrate, if we apply the change of basis induced by the Hadamard matrix to the second qubit of the state in \eqref{e:4state}, it transforms as $\ket + \to \ket 0$ and $\ket - \to \ket 1$ while the corresponding measuring operators $\sigma_{\mathsf X}$ and $\sigma_{\mathsf Z}$ are interchanged. This allows re-writing of the state as
\begin{align}\label{e:newstate}
\ket{\psi_{\ast}} &= (1/2)(\ket{010\alpha} + \ket{001\beta} + \ket{100\gamma}-\ket{111\delta} )\nonumber\\
&= (1/2)(\ket{\gamma \text{++}1} + \ket{\alpha \text{+}\!-\!0} + \ket{\beta\!-\!\text{+}0}-\ket{\delta\!-\!-1}
\end{align}
where the measurements are now $\sigma_{\mathsf X}\sigma_{\mathsf X}II$, $II\sigma_{\mathsf Z}\sigma_{\mathsf X}$, $\sigma_{\mathsf Z}\sigma_{\mathsf Z}\sigma_{\mathsf Z}I$, $\sigma_{\mathsf Z}II\sigma_{\mathsf Z}$, $I\sigma_{\mathsf X}\sigma_{\mathsf X}\sigma_{\mathsf Z}$, and $\sigma_{\mathsf X}\sigma_{\mathsf Z}\sigma_{\mathsf X}\sigma_{\mathsf X}$. 
(Incidentally, all correlators can be observed with only four compatible all-party measurements comprising $\sigma_{\mathsf X}\sigma_{\mathsf X}\sigma_{\mathsf Z}\sigma_{\mathsf X}$, $\sigma_{\mathsf Z}\sigma_{\mathsf Z}\sigma_{\mathsf Z}\sigma_{\mathsf Z}$, $\sigma_{\mathsf Z}\sigma_{\mathsf X}\sigma_{\mathsf X}\sigma_{\mathsf Z}$, and $\sigma_{\mathsf X}\sigma_{\mathsf Z}\sigma_{\mathsf X}\sigma_{\mathsf X}$.) The two correlators yielding $-1$ are $\sigma_{\mathsf Z}\sigma_{\mathsf Z}\sigma_{\mathsf Z}I$ and $I\sigma_{\mathsf X}\sigma_{\mathsf X}\sigma_{\mathsf Z}$, as is made clearer with the two respective expressions in~\eqref{e:newstate}. 
And measuring either $\sigma_{\mathsf Z}\sigma_{\mathsf Z}$ or $\sigma_{\mathsf X}\sigma_{\mathsf X}$ on the middle two qubits projects the outer two qubits into a ``near" alignment for the compatible $\sigma_{\mathsf Z}II\sigma_{\mathsf Z}$ measurement to yield its value of $-1/\sqrt 2$; alternatively, $\sigma_{\mathsf X}\sigma_{\mathsf Z}$ or $\sigma_{\mathsf Z}\sigma_{\mathsf X}$ measurements of the middle two qubits yield the other three $\pm1/\sqrt{2}$ terms in the other correlators. So the toggling between alignment outcomes and $1/\sqrt 2$ outcomes thus here reflects an intricate interplay between the \textit{joint} measurements of the two central parties, and their interdependencies with the outer two parties.

\section{Impossibility of Svetlichny-type \texorpdfstring{\(2|3\)}{23} No-Signaling Models for the 5-Cycle Game}\label{s:svet}
The no-signaling Svetlichny model \cite{svetlichny:1987,Bancal13} for a five-partite behavior that is replicable with tripartite resources is as follows. For a globally shared classical random variable \(\lambda\), let \(S_\lambda\subset\{\mathsf{A,B,C,D,E}\}\) denote a \(\lambda\)-indexed collection of sets with \(|S_\lambda|=2\), such that the observed five-partite behavior has the form
\begin{equation}\label{e:Svetlichny3}
\sum_{\lambda} q(\lambda)
P_{S_\lambda}(O_I)P_{S^{\mathrm c}_\lambda}(O^{\mathrm c}_{I^{\mathrm c}}),
\end{equation}
where \(q(\lambda)\) is the probability distribution of \(\lambda\), \(P_{S_\lambda}\) (\(P_{S^{\mathrm c}_\lambda}\)) is the no-signaling behavior with \(O\) (\(O^{\mathrm c}\)) and \(I\) (\(I^{\mathrm c}\)) representing the outcome and input combination, respectively, of the set \(S_{\lambda}\) (\(S_{\lambda}^{\mathrm c}\)). A behavior is genuinely four-partite nonlocal according to the no-signaling-Svetlichny criterion~\cite{svetlichny:1987,Bancal13} if it does \textit{not} admit a \(2|3\) mixture decomposition of the form~\eqref{e:Svetlichny3}.
\begin{prop}\label{p:Svetlichny_NoBetterThan_Local}
For the \(5\)-cycle game, arbitrary bipartite and tripartite no-signaling resources arranged according to Svetlichny-type \(2|3\) resource partitions provide no advantage over local deterministic strategies.
\end{prop}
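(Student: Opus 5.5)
By the linearity of the winning probability in the behavior, it suffices to bound a single factoring term $P_{S}(O_I)P_{S^{\mathrm c}}(O^{\mathrm c}_{I^{\mathrm c}})$ of~\eqref{e:Svetlichny3} by $5/6$. The convex sum over $\lambda$ then obeys the same bound. The value $5/6$ is attained by a local deterministic strategy, so this bound means there is no advantage. Fix the two-element set $S=\{\mathsf A_u,\mathsf A_v\}$ and an arbitrary tripartite no-signaling behavior $P_{S^{\mathrm c}}$. By the result of~\cite{barrett:2005}, every $(2,2,2)$ no-signaling behavior is a convex combination of local deterministic behaviors and the eight PR boxes (local relabelings of $a\oplus b=xy$). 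Using linearity again, I reduce to two cases: $P_S$ is local deterministic, or $P_S$ is a PR box. By the cyclic symmetry of~\eqref{e:5game}, in each case I only need to treat the adjacent pair $(\mathsf A_1,\mathsf A_2)$ and the non-adjacent pair $(\mathsf A_1,\mathsf A_3)$.

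\emph{Deterministic case.} Both players in $S$ answer deterministically on input $X$, and the remaining behavior is no-signaling. This is exactly the hypothesis of Proposition~\ref{p:2DetPlayers}, whose proof only chains parity events from jointly measurable contexts via Lemma~\ref{l:Lemma_for_Prop1}. Hence $\mathcal S\le 5$, i.e.\ $\omega\le 5/6$.

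\emph{PR-box case.} Here I use a counting argument rather than chaining. In any PR box each single party's output is uniformly distributed for either input. Moreover, in the product form, the pair's outputs are independent of $S^{\mathrm c}$'s outputs. So for any predicate whose active set $I(Q_i)$ contains exactly one output from $S$, that output is a uniform bit independent of the other active outputs, and the predicate's parity holds with probability exactly $1/2$. The task then reduces to checking, via Table~\ref{tab:BPGame_Questions_ActiveSet}, that each pair type has at least two such predicates.
\begin{itemize}
\item For $S=\{\mathsf A_1,\mathsf A_2\}$: $Q_3$ contains only $A^{(2)}_X$ from $S$, and $Q_5$ contains only $A^{(1)}_X$ from $S$.
\item For $S=\{\mathsf A_1,\mathsf A_3\}$: $Q_1$ contains only $A^{(1)}_Z$ from $S$, and $Q_4$ contains only $A^{(3)}_X$ from $S$.
\end{itemize}
In both cases two predicates have success probability $1/2$ and the other four are at most $1$. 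This gives $\mathcal S\le 4+\tfrac12+\tfrac12=5$.

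Combining the two cases gives $\omega\le 5/6$ for every extremal piece, hence for every Svetlichny-type $2|3$ mixture. I expect the main point needing care to be the deterministic case. There I must make sure Proposition~\ref{p:2DetPlayers} indeed applies when the three non-fixed parties share an arbitrary tripartite no-signaling box. I would verify this by noting that every combination step in its proof uses only events from a single jointly measurable context, which is well defined under no-signaling. The extremal decomposition of~\cite{barrett:2005} is assumed.
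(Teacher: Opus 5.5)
Your proposal is correct and follows the paper's proof essentially step for step: decompose the bipartite factor into local deterministic vertices and PR boxes, invoke Proposition~\ref{p:2DetPlayers} for the former, and for the latter use uniform PR-box marginals with the product structure to fix each predicate containing exactly one output from $S$ at $1/2$. The only difference is that for the non-adjacent pair $\{\mathsf A_1,\mathsf A_3\}$ you use just $Q_1$ and $Q_4$, whereas the paper notes that $Q_1,Q_3,Q_4,Q_5$ each contain exactly one output from $S$; this gives the sharper bound $\omega\le 2/3$ in that case, which the statement does not need.
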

\begin{proof}
Any no-signaling 2-party behavior \(P_{S^\lambda}\) in~\eqref{e:Svetlichny3} can be expressed as convex combination of behaviors which are either local deterministic or PR boxes~\cite{barrett:2005}. We can assume, without loss of generality, that each \(P_{S^\lambda}\) is one of these, by absorbing the convex weight over extremal no-signaling behaviors into \(\lambda\). 
Now if \(P_{S_\lambda}(O_I)\) is local deterministic, \(P_{S_\lambda}(O_I)P_{S^{\mathrm c}_\lambda}(O^{\mathrm c}_{I^{\mathrm c}})\) is a distribution where two of the parties have fixed output, and so the winning probability is bounded by \(5/6\) by Proposition~\ref{p:2DetPlayers}. On the other hand, if \(P_{S_\lambda}(O_I)\) is a PR box, such a bipartite behavior has uniform output marginals, which, combined with the independence of the \(\lambda\)-indexed distribution across the \(S_\lambda|S^\mathrm{c}_\lambda\) split due to the factoring structure of the constituent 5-party distribution, makes it impossible to satisfy the predicate of~\eqref{e:5game} that contains parties in both \(S_\lambda\) and \(S^\mathrm{c}_\lambda\), at least with probability more than 5/6. The score is linear, so the result then extends to their convex mixtures and subsequently to the average over the shared randomness \(\lambda\). 

We explicitly prove the above claim about PR box case for the partitions \(S|S^{\mathrm c}=\{\mathsf{A}_1,\mathsf{A}_2\}|\{\mathsf{A}_3,\mathsf{A}_4,\mathsf{A}_5\}\) and \(\{\mathsf{A}_1,\mathsf{A}_3\}|\{\mathsf{A}_2,\mathsf{A}_4,\mathsf{A}_5\}\). This effectively covers all distinct cases via the cyclic symmetry of the predicates in \eqref{e:5game}. Let us start with the $S _\lambda=\{\mathsf{A}_1,\mathsf{A}_2\}$ case. The effective behavior of the adjacent players \(\mathsf{A}_1,\mathsf{A}_2\) is a PR box, which is defined, up to relabelling of inputs \(t,u\in\{0,1\}\) and outcomes \(a,b\in\{0,1\}\), as \(\mathrm{PR}(A^{(1)}_t=a,A^{(2)}_u=b)=[\![a\oplus b = t\cdot u]\!]/2\).) Since a PR box has uniform single-party marginals,
\begin{equation}
\mathbb{P}(A^{(1)}_t = 0) = \mathbb{P}(A^{(1)}_t = 1) = \frac{1}{2},\quad \mathbb{P}(A^{(2)}_u = 0) = \mathbb{P}(A^{(2)}_u = 1) = \frac{1}{2}.
\end{equation}
Of the five cyclic predicates, we find two where one party belongs to the \(\mathsf{A}_1\mathsf{A}_2\)-block and the remaining two parties belong to the \(\mathsf{A}_3\mathsf{A}_4\mathsf{A}_5\)-block. They are \(A^{(2)}_X\oplus A^{(3)}_Z\oplus A^{(4)}_X = 1\) and \(A^{(4)}_X\oplus A^{(5)}_Z\oplus A^{(1)}_X=1\). Since \(A^{(2)}_X\) is uniform and independent of the \(\mathsf{A}_3\mathsf{A}_4\mathsf{A}_5\)-block, and \(A^{(1)}_X\) is uniform and independent of the \(\mathsf{A}_3\mathsf{A}_4\mathsf{A}_5\)-block, we have:
\begin{equation}
\mathbb{P}(A^{(2)}_X\oplus A^{(3)}_Z\oplus A^{(4)}_X = 1) = \frac{1}{2},\quad
\mathbb{P}(A^{(4)}_X\oplus A^{(5)}_Z\oplus A^{(1)}_X = 1) = \frac{1}{2}.
\end{equation}
Each of the other four predicates has success probability at most \(1\). Therefore, the sum of the success probabilities $\mathcal S$ satisfies
\(\mathcal{S}\le 4\times 1 + 2\times\frac{1}{2} = 5\), or equivalently, the win probability $\omega$ satisfies \(\omega\le 5/6\).

Next, we prove the same result for the non-adjacent players \(\mathsf{A}_1,\mathsf{A}_3\). We observe that four of the five cyclic predicates have one player belonging to the \(\mathsf{A}_1\mathsf{A}_3\)-block and the remaining two belonging to the \(\mathsf{A}_2\mathsf{A}_4\mathsf{A}_5\)-block. They are \(A^{(2)}_X\oplus A^{(3)}_Z\oplus A^{(4)}_X=1,A^{(3)}_X\oplus A^{(4)}_Z\oplus A^{(5)}_X=1, A^{(4)}_X\oplus A^{(5)}_Z\oplus A^{(1)}_X=1, A^{(5)}_X\oplus A^{(1)}_Z\oplus A^{(2)}_X=1\). Again, the \(2\)-party distribution is a PR box, whereby \(A^{(1)}_t\) is uniform and independent of the \(\mathsf{A}_2\mathsf{A}_4\mathsf{A}_5\)-block, for \(t\in\{X,Z\}\), and \(\mathsf{A}^{(3)}_v\) is uniform and independent of the \(\mathsf{A}_2\mathsf{A}_4\mathsf{A}_5\)-block for \(v\in\{X,Z\}\). We now have:
\begin{equation}
\begin{split}
\mathbb{P}(A^{(2)}_X\oplus A^{(3)}_Z\oplus A^{(4)}_X=1)=\frac{1}{2},\quad \mathbb{P}(A^{(3)}_X\oplus A^{(4)}_Z\oplus A^{(5)}_X=1)=\frac{1}{2},\\
\mathbb{P}(A^{(4)}_X\oplus A^{(5)}_Z\oplus A^{(1)}_X=1)=\frac{1}{2},\quad \mathbb{P}(A^{(5)}_X\oplus A^{(1)}_Z\oplus A^{(2)}_X=1)=\frac{1}{2}.
\end{split}
\end{equation}
Each of the remaining two predicates have success probability at most \(1\). So \(\mathcal{S}\le 4\times\frac{1}{2}+2\times 1=4\), and \(\omega\le 2/3\).
\end{proof}

\section{Chained-Bell Approach Generating \texorpdfstring{\(\epsilon\)}{eps}-close Pseudotelepathy Games}\label{a:ChainBell}

\subsection{Chained-Bell Quantities and Monogamy Relationships}

The well-known chained Bell inequality bounds a quantity that can be expressed in the form:
\begin{equation}\label{e:chainbell}
\begin{split}
\mathsf{CB}=
&\prbP(A_0 = B_0)+\prbP(A_1 = B_0)+\prbP(A_1 = B_1)+\prbP(A_2 = B_1)+\prbP(A_2 = B_2)+\prbP(A_3 = B_2)+ \cdots\\
&\cdots + \prbP(A_{N-2} = B_{N-2})+ \prbP(A_{N-1} = B_{N-2}) + \prbP(A_{N-1} = B_{N-1})+ \prbP(A_{0} \ne B_{N-1}).
\end{split}
\end{equation}
There are $2N$ terms, with $2N-1$ of them of the form $\prbP(A_i=B_j)$ with $i=j$ or $i+1=j$ and one special term $\prbP(A_0\ne B_{N-1})$. It is well known that for a local hidden variable theory, the quantity $\mathsf{CB}$ in \eqref{e:chainbell} is bounded above by $2N-1$, and that for arbitrarily small $\epsilon>0$, there exists a sufficiently large $N$ for which there is a quantum strategy that can achieve a value of $\mathsf{CB}$ exceeding $2N - \epsilon$, where $2N$ is the algebraic maximum corresponding to all probabilities $\prbP(\cdot)$ equalling one.

We first derive a monogamy relationship for the chained Bell quantity $\mathsf{CB}$, which we will enlist in our main argument later. Observe that applying Lemma \ref{l:Lemma_for_Prop1} with the parameters $\gamma=1$ and $\delta=1$, then using the fact that $\prbP(S\oplus T=1)=\prbP(S\ne T) = 1-\prbP(S=T)$ and moving that term to the other side of the inequality, one obtains the valid expression \eqref{e:pqrprob1} 
\begin{align}
\prbP(R\ne S)&\ge \prbP(R\ne T)+P(T = S)-1\label{e:pqrprob1}\\
\prbP(R\ne S)&\ge \prbP(R= T)+P(T \ne S)-1\label{e:pqrprob2},
\end{align}
while \eqref{e:pqrprob2} follows from \eqref{e:pqrprob1} by permutation of the variables $R\leftrightarrow S$. Introducing a third party $\mathsf C$ with setting $0$, we can write
\begin{equation*}
\begin{split}
1 &= \prbP(A_0=C_0) + \prbP(A_0\ne C_0)\\
  &\ge \prbP(A_0=C_0) + [-1+\prbP(A_0=B_0) +\prbP(B_0\ne C_0)]\\
  &\ge \prbP(A_0=C_0) -1+\prbP(A_0=B_0) +[-1+\prbP(B_0=A_1)+\prbP(A_1\ne C_0)]\\
  & \;\;\vdots
\end{split}
\end{equation*}
where the vertical dots indicate repeated replacement of terms, via \eqref{e:pqrprob2}, of the alternating form $\prbP(A_i\ne C_0) \to  \prbP(A_i=B_{i}) +\prbP(B_{i}\ne C_0) -1$ and $\prbP(B_i\ne C_0) \to  \prbP(B_i=A_{i+1}) +\prbP(A_{i+1}\ne C_0) -1$. Each iteration of this process generates a new term found in the quantity $\mathsf{CB}$ in \eqref{e:chainbell}, starting with the first term and moving one term to the right for the next iteration. Let us use the symbol ``$\Xi$" to denote the $2N-1$ terms of $\mathsf{CB}$, so that $\Xi$ is everything in \eqref{e:chainbell} \textit{except} $\prbP(A_0\ne B_{N-1})$. Then the result of the iterative process after $2N-1$ steps is
\begin{equation*}
1 \ge \prbP(A_0=C_0) + (2N-1)\times(-1) + \Xi + \prbP(B_{N-1}\ne C_0).
\end{equation*}
Now we can apply \eqref{e:pqrprob1} to $\prbP(B_{N-1}\ne C_0)$ to replace it with the not-greater quantity $\prbP(B_{N-1}\ne A_0) +\prbP(A_0= C_0)-1$, thereby generating the last remaining term of $\mathsf{CB}$. Moving the $2N$ copies of ``$-1$" to the left side of the inequality yields our desired monogamy relationship, applicable to any behavior obeying the no-signaling principle:
\begin{equation}\label{e:cbmono}
    2N+1\ge \mathsf{CB} + 2\prbP(A_0=C_0).
\end{equation}
We also require a modified form of \eqref{e:chainbell} and \eqref{e:cbmono}. Starting with \eqref{e:chainbell} and relabeling each $B_j\to B_{N-1-j}$, and then for all $A_i$ variables performing a swap-relabeling of their outputs so that $\prbP(A_i=B_j)$ terms become $\prbP(A_i\ne B_j)$ terms and vice-versa, we obtain the modified quantity 
\begin{multline}\label{e:chainbellprime}
\mathsf{CB}'= \prbP(A_0 \ne B_{N-1})+\prbP(A_1 \ne B_{N-1})+\prbP(A_1 \ne B_{N-2})+\prbP(A_2 \ne B_{N-2})+ \cdots\\
\cdots + \prbP(A_{N-2} \ne B_{1})+ \prbP(A_{N-1} \ne B_{1}) + \prbP(A_{N-1} \ne B_{0})+ \prbP(A_{0} = B_{0})
\end{multline}
for which the standard chained Bell bound $\mathsf{CB'}\le 2N-1$ will still hold since we have merely re-labeled variables and outcomes. Furthermore, a similarly relabeled monogamy inequality \eqref{e:cbmono} would now end with the term $2\prbP(A_0\ne C_0)$ due to the swap-relabeling of $A$'s outputs, but if we also swap-relabel $C$'s outputs, we get the following valid monogamy relationship:
\begin{equation}\label{e:cbmonoprime}
    2N+1\ge \mathsf{CB}' + 2\prbP(A_0=C_0).
\end{equation}

\subsection{A Chained-Bell Inspired Family of Arbitrarily Near-Pseudotelepathic Games}\label{s:cbinflationproof}

Let us consider the game depicted in Figure \ref{fig:ChainBellGHZ}, in which three parties $\mathsf{A},\mathsf{B},\mathsf{C}$ measure a GHZ state $(\ket{000}+\ket{111})/\sqrt{2}$. $\mathsf{C}$ has two measurement settings: $C_0=C_Z$ in which the Pauli measurement $Z$ collapses the $\mathsf A$-$\mathsf B$ state into $\ket{00}$ or $\ket{11}$, and $C_1=C_X$ in which the Pauli measurement $X$ collapses the $\mathsf A$-$\mathsf B$ state into one of the two Bell states $\ket{\Phi^+}=(|00\rangle + |11\rangle)/\sqrt 2$ or $\ket{\Phi^-}=(|00\rangle - |11\rangle)/\sqrt 2$. When $C_Z$ is selected, all three players can align their outcomes if $\mathsf A$ and $\mathsf B$ measure $Z$; when $C_X$ is selected, Alice and Bob can use their shared Bell state to play a chained Bell game and obtain the optimal quantum value. To accommodate this we give Alice $N$ settings $\{0,1,2,...,N-1\}$, where $A_0=A_Z$ is simultaneously both a chained Bell measurement and an alignment measurement, and give Bob the $N+1$ settings $\{\emptyset\}\cup\{0, 1, 2,...,N-1\}$ comprising a separate alignment measurement $B_\emptyset = B_Z$ appended to the $N$ chained Bell measurements.

Strategies utilizing at-most-bipartite nonlocal resources will be unable to perform well at both the alignment task and the chained Bell task simultaneously. This type of tradeoff for the special case of the CHSH game is observed in the argument of \cite{Chao17} as well as the more recent works of \cite{coiteux:2021,cao:2022,mao:2022}. Our inflation treatment is inspired by these more recent works, but the generalization is not immediate and requires a modified approach, due to the fact that while there is a single set of measurement configurations that can be applied to either one of $\ket{\Phi^+}$ or $\ket{\Phi^-}$ to obtain a maximal CHSH violation (albeit for a different symmetry of the inequality depending on the measured state), this does not cleanly extend to higher-order chained Bell scenarios where there turns out to not be a single set of measurement configurations that lead to chained Bell violations for the two different states while \textit{also} including---crucially---the computational basis setting $Z$ as one of Alice's settings. (It is possible to implement a set of chained Bell measurements such that either $\ket{\Phi^+}$ or $\ket{\Phi^-}$ will result in a chained Bell violation, by measuring along the equator of the Bloch sphere as in for example \cite{tan:2017}---but this omits the computational basis measurement). Our strategy to deal with this complication is to include the settings pairs that appear in both $\mathsf{CB}$ and $\mathsf{CB}'$ within predicates for our game; then, if the $C_X$ outcome is the ``wrong" one for the chained Bell setting pair of Alice and Bob, the win condition defaults to accepting any Alice/Bob output result.

Keeping this general strategy in mind, let us now derive an inequality with the inflation technique. We will use the same inflation as in Figure 2 of \cite{coiteux:2021}
, but include more measurement settings for Alice and Bob to accommodate the chained Bell game; see Figure~\ref{fig:Inflation_NearPTGame} below for a depiction of the inflation. 
 The ``original" variables appear on the top half of Figure~\ref{fig:Inflation_NearPTGame}; let us distinguish the ``cut" inflation variables which appear on the bottom half (where it is the $\mathsf A$-$\mathsf C$ link that is cut) with an asterisk: $A^*$, $B^*$, $C^*$. Now let $\mathsf{CBSQ}^*$ stand for the right side of \eqref{e:cbmono} and $\mathsf{CB}'\mathsf{SQ}^*$ stand for the right side of \eqref{e:cbmonoprime}, with a single change that $C_0$ is replaced with $C_0^*$ while all the $A$/$B$ variables are unstarred. Here $\mathsf{SQ}^*$ indicates a reference to the ``sameness quantity" $2\prbP(A_0=C_0^*)$. 

In the argument below, we will have to be precise in our use of conditional probabilities, and so will write them out explicitly. Where we have written, for instance, $\prbP(A_1=B_0)$, this is really a conditional probability that Alice and Bob have the same outcome, conditioned on Alice choosing setting $1$ and Bob choosing setting $0$. Let us write this now as $\prbP(A=B|T=1,U=0)$, using the convention that the settings associated to $A/B/C$  are $T/U/V$, respectively, and starred setting variables like $V^*$ will be the setting of the corresponding starred parties (so $C^*$, for $V^*$). 

With this notation in hand, we now let $\left.\mathsf{CBSQ}^*\right|_{C_1=+1}$ represent the $\mathsf{CBSQ}^*$ quantity where all probabilities are additionally conditioned on the event that $C$ chooses setting $1$ \textit{and} observes outcome +1; let $\left.\mathsf{CB}'\mathsf{SQ}^*\right|_{C_1=-1}$ be defined analogously. So for example the first term of $\left.\mathsf{CB}'\mathsf{SQ}^*\right|_{C_1=-1}$ is $\prbP(A=B|T=0,U=N-1,V=1,C=-1)$ and the last term is $\prbP(A=C^*|T=0,V^*=0, Z=1,C=-1)$. Note we are using a different output set $\{+1,-1\}$ for the $C_1$ variable; this is to serve as a better visual reminder of how $C_1=C_X$ projects the Alice/Bob state into one of $|\Phi^+\rangle$ or $|\Phi^-\rangle$ depending on the measurement's outcome. In the degenerate case that the probability of the given $C_1$ outcome is zero, one can re-define the corresponding conditional $\mathsf{CBSQ}^*$ quantity as zero, which will maintain consistency in the remainder of the argument. The bounds \eqref{e:cbmono} and \eqref{e:cbmonoprime} will still apply to these quantities, because the distribution conditioned on the setting and outcome of $C_1$ is still a nonsignaling distribution, and only the nonsignaling property was required to derive these monogamy bounds. (That a setting-and-outcome-conditional distribution of the other parties is nonsignaling is justified in \cite{coiteux:2021} by appealing to the space-like separation of the parties; this can also be formally derived from the standard no-signaling principal \cite{bierhorst:2024}.) 

\begin{figure}[!tbh]
\centering
\includegraphics[width=0.48\linewidth]{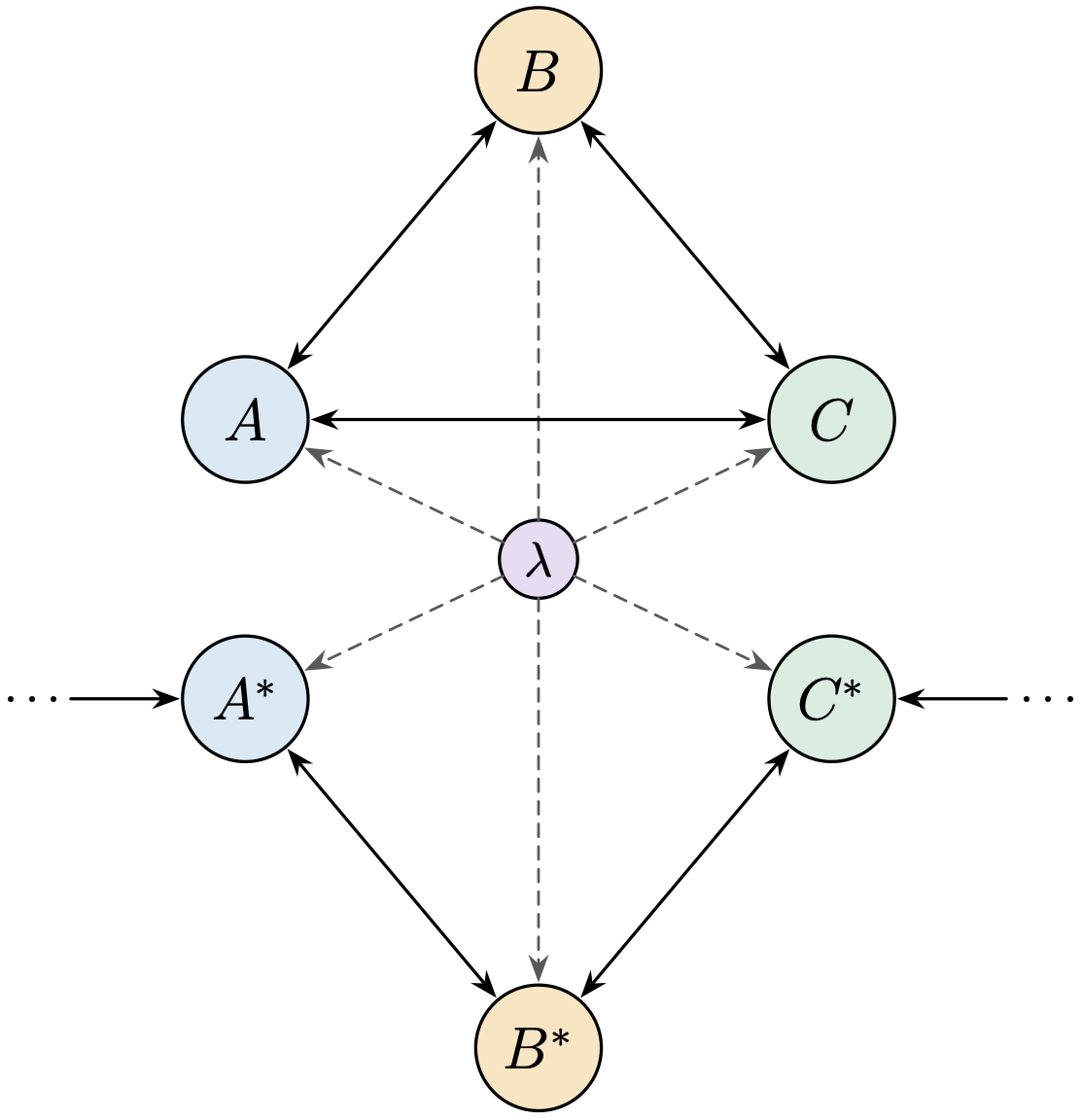}
\caption{Schematic depiction of the (cut) inflation, inspired by Figure 2 of~\cite{coiteux:2021}, applied to the chained-Bell-inspired family of games exhibiting arbitrarily-close-to-pseudotelepathic behavior. The original variables appear in the top half, and the cut inflation variables (denoted \(A^\ast,\,B^{\ast},\,C^{\ast}\)) appear in the bottom half. Solid bidirectional arrows between any two variable-nodes represent bipartite resources shared between the corresponding parties. The dashed arrows incident into each variable-node from the node labelled \(\lambda\) represent globally shared classical randomness.}
\label{fig:Inflation_NearPTGame}
\end{figure}

Now a key manipulation, following the last page of the supplementary material of \cite{cao:2022}, uses the fact that if we have two quantities less than or equal to $2N+1$, then any convex conbination of the two quantities obeys this bound as well. So even if we don't know the value of $\prbP(C=+1|V=1)$, we do know that $\prbP(C=-1|V=1)=1-\prbP(C=+1|V=1)$, and so the quantity $\left.\mathsf{CBSQ}^*\right|_{C_1=+1}\prbP(C=+1|V=1)+\left.\mathsf{CB}'\mathsf{SQ}^*\right|_{C_1=-1}\prbP(C=-1|V=1)$ represents such a convex combination, and so is bounded above by $2N+1$. This has the effect of moving the outcome of $C$ from the conditioner, into the argument of the conditional probability terms in these quantities, noticing that the no-signaling principle allows us to add settings choices for $A$,$B$, and $C^*$ at will to the conditioner of $\prbP(C=-1|V=1)$ to enable the needed manipulation. (We also note that the $2N+1$ bound on the convex combination holds for the degenerate case where $C_1$ takes one value with probability one and the other with probability zero, and so all conditional probabilities are now well-defined with $C_1$ removed from the conditioner.) We can write the bounded quantity as
\begin{multline*}
\prbP(A=B,C=+1|T=0,U=0,V=1)+\prbP(A=B,C=+1|T=1,U=0,V=1)\\
+\prbP(A=B,C=+1|T=1,U=1,V=1)+\prbP(A=B,C=+1|T=2,U=1,V=1)\\
+\cdots +\prbP(A=B, C=+1|T=N-1, U= N-1, V=1)+\prbP(A\ne B, C=+1|T=0,U=N-1,V=1)\\
+\prbP(A\ne B,C=-1|T=0,U=N-1,V=1)+\prbP(A\ne B,C=-1|T=1,U=N-1,V=1)\\
+\prbP(A\ne B,C=-1|T=1,U=N-2,V=1)+\prbP(A\ne B,C=-1|T=2,U=N-2,V=1)\\
+\cdots +\prbP(A\ne B, C=-1|T=N-1,U=0,V=1)+\prbP(A= B, C=-1|T=0,U=0,V=1)\\
+2\prbP(A=C^*, C=+1|T=0,V^*=0,V=1)+2\prbP(A=C^*, C=-1|T=0,V^*=0,V=1),
\end{multline*}
where the top three lines correspond to $\mathsf{CB}$ terms, the next three lines $\mathsf{CB}'$ terms, and the last line collects the ``sameness" quantities from the monogamy expressions. Note there are precisely two settings configurations that occur in the first three lines while also re-occurring in the next three lines; these are $T=0,U=0,V=1$ and $T=0,U=N-1,V=1$ and so each pair of matching terms can be summed together into $\prbP(A=B|T=0,U=0,V=1)=\prbP(A=B|T=0,U=0)$ and $\prbP(A\ne B|T=0,U=N-1,V=1)=\prbP(A\ne B|T=0,U=N-1)$. Since the two ``sameness" quantities can similarly be combined together, we can write the expression more compactly as 
\begin{equation*}
\begin{split}
&\prbP(A=B|X=0,Y=0)+\prbP(A\ne B|T=0,U=N-1)
+\sum_{i=1}^{N-1}\sum_{j\in{0,1}}\prbP(A=B,C=+1|T=i,U=i-j,V=1) \\
& + \sum_{i=1}^{N-1}\sum_{j\in{0,1}}\prbP(A\ne B,C=-1|T=i,U=N-i-j,V=1)
+\prbP(A=C^*|T=0,V^*=0).
\end{split}
\end{equation*}
The above quantity is bounded above by $2N+1$. So if we add $2(N-1)$ copies of $\prbP(C=+1|V=1)+\prbP(C=-1|V=1)$, each of which equals one, the following is bounded above by $4N-1$:
\begin{equation*}
\begin{split}
&\prbP(A=B|T=0,U=0)+\prbP(A\ne B|T=0,U=N-1) \\
& + \sum_{i=1}^{N-1}\sum_{j\in{0,1}}\prbP(\{A=B \text{ and }C=+1\}\text{ or }\{C=-1\}|T=i,U=i-j,V=1) \\
& + \sum_{i=1}^{N-1}\sum_{j\in{0,1}}\prbP(\{A\ne B \text{ and }C=-1\}\text{ or }\{C=+1\}|T=i,U=N-i-j,V=1) + 2\prbP(A=C^*|T=0,V^*=0).
\end{split}
\end{equation*}
We now finish the argument with an inflation technique approach in the spirit of~\cite{coiteux:2021,cao:2022,mao:2022}, following the arrangement of parties depicted in Figure~\ref{fig:Inflation_NearPTGame}, using the $\sim$ symbol to indicate the equality of marginal distributions for subsets of parties that share isomorphic resource ancestry:
\begin{align}\label{e:cbinf}
    \prbP(A=C^*|T=0,V^*=0)&\sim
    \prbP(A^*=C^*|T^*=0,V^*=0)\nonumber\\
    &=    \prbP(A^*=C^*|T^*=0,U^*=\emptyset, V^*=0)\nonumber\\
    &\ge
\prbP(A^*=B^*|T^*=0,U^*=\emptyset, V^*=0)+\prbP(B^*=C^*|T^*=0,U^*=\emptyset, V^*=0)-1\nonumber\\
&\sim
\prbP(A=B|T=0,U=\emptyset, Z=0)+\prbP(B=C|T=0,U=\emptyset, V=0)-1\nonumber\\
&=
\prbP(A=B|T=0,U=\emptyset)+\prbP(B=C|U=\emptyset, V=0)-1.
\end{align}
Above, the inequality is an application of \eqref{e:pqrprob1} after relabeling $R$'s outcomes to convert $R\ne T$  and $R\ne S$ to equality conditions, and the equalities on the second and fifth line follow from the no-signaling principle. 
All together, we obtain our final result, which is that the following is bounded above by $4N+1$:
\begin{equation}\label{e:cbquant}
\begin{split}
&\prbP(A=B|T=0,U=0)+\prbP(A\ne B|T=0,U=N-1)\\
+&\sum_{i=1}^{N-1}\sum_{j\in{0,1}}\prbP(\{A=B \textnormal{ and }C=+1\}\textnormal{ or }\{C=-1\}|T=i,U=i-j,V=1)\\
+&\sum_{i=1}^{N-1}\sum_{j\in{0,1}}\prbP(\{A\ne B \textnormal{ and }C=-1\}\textnormal{ or }\{C=+1\}|T=i,U=N-i-j,V=1)\\
+&\,2\prbP(A=B|T=0,U=\emptyset)+2\prbP(B=C|U=\emptyset, V=0).
\end{split}
\end{equation}
Expression \eqref{e:cbgame} in the main text is obtained by extracting the predicates from the above probabilities, relabeling the $C_X$ outcomes to follow the binary $\{0,1\}$ convention used throughout the rest of the paper. We show in the next section that when these predicates are queried equiprobably, appropriate quantum measurements of a GHZ state can lead to an expected winning probability of
\begin{equation}\label{e:cbquantwinprob}
\frac{3N + 2 + N\cos\left(\pi/2N\right)} {4N + 2} \ge \frac{4N + 2 - \frac{\pi^2}{8N}} {4N + 2},
\end{equation}
where the expression on the right follows from the inequality $\cos(x)\ge 1-x^2/2$. This is asymptotically $\epsilon$-close to pseudotelepathic compared to the LOSR-GMNL bound of $(4N+1)/(4N+2)$ that must be obeyed by networks of underlying bipartite resources, in a sense that will be made precise in Section \ref{s:limitingbehavior}.

\subsection{\label{ss:QuantumVal_ChainedBell}Calculation of Quantum Value for Chained Bell Quantities \texorpdfstring{\(\mathsf{CB}\)}{CB} and \texorpdfstring{\(\mathsf{CB}'\)}{CB1}}


We start by reviewing a well known expression (see for instance \cite{CHAINED}) for the correlator values that result from Alice and Bob measuring Bell states $\ket{\Phi^{\pm}} = (\ket{00}\pm\ket{11})/2$ with Pauli observables in the $X$/$Z$ plane:
\begin{equation}\label{e:measure}
\langle A_\alpha, B_\beta\rangle =\bra{\Phi^{\pm}} M_\alpha \otimes M_\beta \ket{\Phi^{\pm}} = \cos(\alpha\mp \beta),\text{ where } M_\theta = \cos\theta Z+\sin \theta X=\begin{bmatrix}\cos \theta & \sin \theta \\ \sin \theta & -\cos \theta\end{bmatrix}
\end{equation}
Above, $\theta$ is the angle on the Bloch sphere extending down from the $Z$ axis, towards the $X$ axis. The identity \eqref{e:measure} can be obtained directly by writing out and computing the matrix algebra and applying sum/difference of cosine formulas. Since $\langle A_\alpha, B_\beta\rangle = \prbP(A_\alpha = B_\beta) - \prbP(A_\alpha \ne B_\beta)$, \eqref{e:measure} leads to the following expressions:
\begin{equation}\label{e:cosprobs}
\prbP_{\ket {\Phi^\pm}}(A_\alpha = B_\beta)= \frac{1+\cos(\alpha\mp \beta)}{2},\quad\quad
\prbP_{\ket {\Phi^\pm}}(A_\alpha \ne B_\beta)= \frac{1-\cos(\alpha\mp \beta)}{2}
\end{equation}
Accordingly, we choose the following $N$ measurement bases for Alice and Bob, which are depicted in Figure~\ref{fig:BlochSphereAngles}: 
\begin{equation}\label{e:cbangles}
A_i = M_\alpha  \text{ with } \alpha=i\left(\frac{\pi}{N}\right),\quad\quad
B_j = M_\beta \text{ with } \beta=j\left(\frac{\pi}{N}\right)+\frac{\pi}{2N}
\end{equation}
Then in computing the chained Bell quantity $\mathsf{CB}$, when the state is $\ket{\Phi^+}$, we find that for $i=0,...,N-2$ we have
\begin{equation*}
\prbP(A_i=B_i)=\prbP(A_{i+1}=B_i)=\frac{1+\cos(\pi/2N)}{2}=\prbP(A_{N-1}=B_{N-1})=\prbP(A_0\ne B_{N-1})
\end{equation*}
by direct computation from \eqref{e:cosprobs}, enlisting the elementary identities $\cos(x)=\cos(-x)$ and $\cos(\pi-x)=-\cos(x)$. This leads to a $\mathsf{CB}$ value of $2N$ times the above quantity, or 
\begin{equation}\label{e:cbvalue}
\mathsf{CB}=N+N\cos\left(\frac{\pi}{2N}\right)\ge 2N-\frac{\pi^2}{8N}
\end{equation}
using the inequality $\cos(x)\ge 1-x^2/2$.

Now, if the state is $\ket{\Phi^-}$, we can use the same Alice and Bob measurements as given in \eqref{e:cbangles}, but we need to examine the different pairings of these measurements that appear in $\mathsf{CB}'$. The sum of probabilities also comes out to equal the value displayed on the right side of \eqref{e:cbvalue}; this follows after observing that all terms of $\mathsf{CB}'$ in \eqref{e:chainbellprime}, other than the last term $\prbP(A_0=B_0)$, are of the form $\prbP(A_x\ne B_{N-x})$ or $\prbP(A_x\ne B_{N-x-1})$; referring to \eqref{e:cosprobs} we see that $\cos(\alpha+\beta)$ in the expression for $\prbP_{\ket{\phi^-}}(A_\alpha \ne B_\beta)$ will always be equal to  $\cos(\pi\pm(\pi/2N))=-\cos(\pi/2N)$ when we use the angles in \eqref{e:cbangles}, leading $\prbP_{\ket{\phi^-}}(A_\alpha \ne B_\beta)$ to equal $(1+\cos(\pi/2N))/2$ as desired. The last term $\prbP_{\ket{\phi^-}}(A_0=B_0)$ is also immediately seen to equal this quantity upon substituting the angles in \eqref{e:cbangles}.

\begin{figure}[!tbh]
\centering
\includegraphics[width=0.48\linewidth]{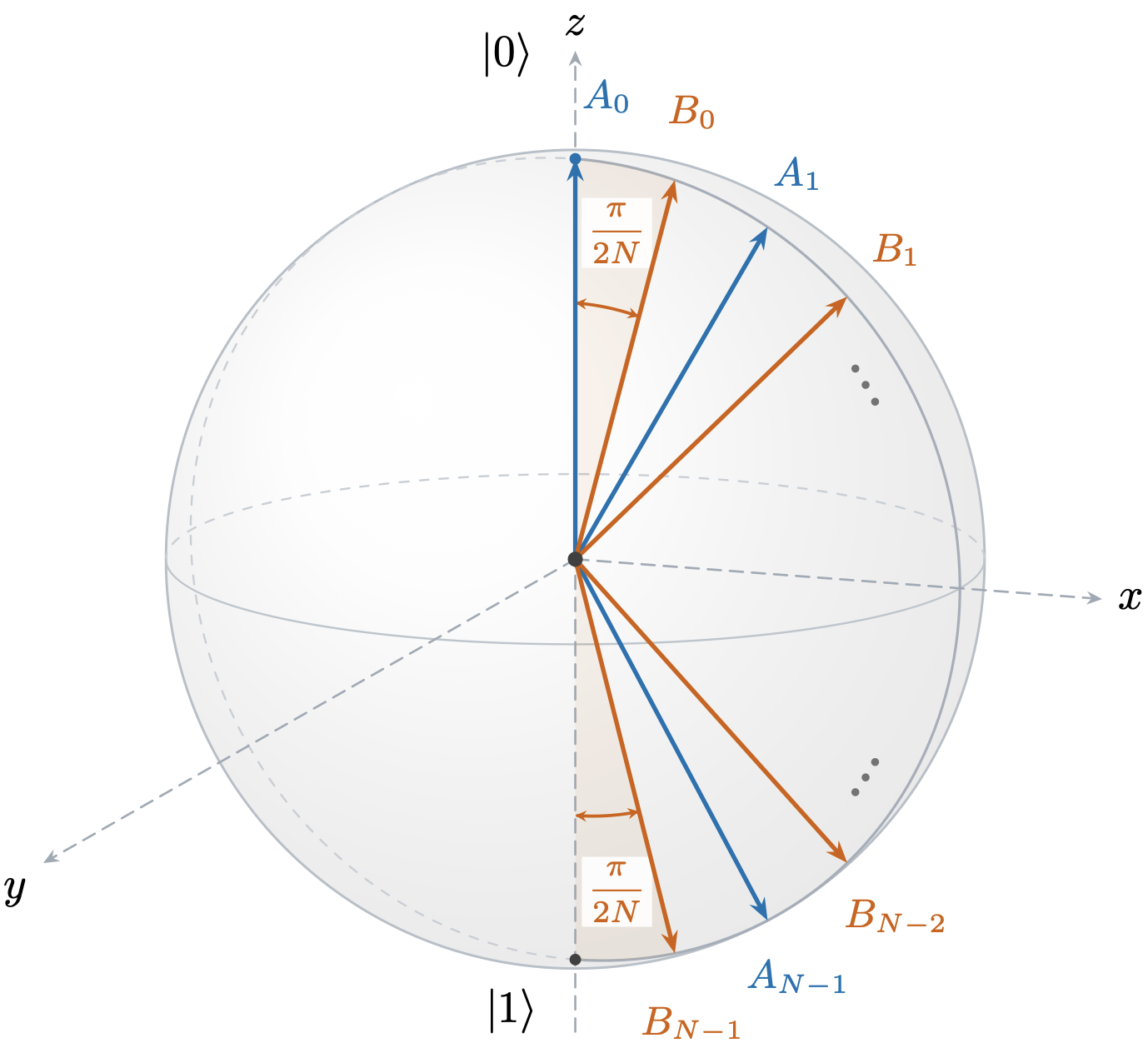}
\caption{\label{fig:BlochSphereAngles}Bloch-sphere geometry of the interleaved qubit observables \(A_i=M_{\alpha_i}\) and \(B_j=M_{\beta_j}\), where \(M_\theta=\cos\theta\,\sigma_Z+\sin\theta\,\sigma_X\),  \(\alpha_i=i\pi/N\), and \(\beta_j=(j+\tfrac12)\pi/N\), for \(i,j=0,\ldots,N-1\), as in~\eqref{e:cbangles}. The displayed vectors are the \(+1\) eigenstate Bloch directions. Alice's settings (blue) and Bob's settings (orange) alternate in the \(x\)-\(z\) plane with angular spacing \(\pi/(2N)\). \(A_0=B_{\emptyset}\) points along `\(+z\)', corresponding to \(|0\rangle\), while \(B_{N-1}\) remains an angle \(\pi/(2N)\) short of the `\(-z\)' direction \(|1\rangle\). Intermediate settings are omitted for clarity.}
\end{figure}

Now we can calculate the quantum value of \eqref{e:cbquant}. The last line of this expression, written compactly as $\prbP(A_0=B_\emptyset)+2\prbP(B_\emptyset=C_Z)$, is equal to four since all parties are measuring the $GHZ$ state in the $Z$ basis. For the first line $\prbP(A_0=B_0)+\prbP(A_0\ne B_{N-1})$, this could be calculated directly from the GHZ state, but also can be calculated by considering that if $C$ measures $X$, the Alice-Bob state is steered to one of the two Bell states $\ket {\phi^\pm}$ which in either case yields $(1+\cos(\pi/2N))/2$ for each of these probabilities as discussed above, and so these two probabilities must also each equal this value when not conditioned on $C_X$. Finally, the remaining chained Bell probabilities can also be calculated by the law of total probability as for example (for any $1\le i\le N-1$ and $j\in \{0,1\}$)
\begin{align*}
&\prbP(\{ C_X=+1; A_i=B_{i-j}\}\cup\{C_X=-1; \textnormal{ any }A_iB_{i-j}  \})\\
=&\,\prbP(A_i=B_{i-j}|C_X=+1)\prbP(C_X=+1)+P(\textnormal{any value of }A_i,B_{i-j}|C_X=-1)\prbP(C_X=-1)  \\
=&\frac{1+\cos(\pi/2N)}{2}\cdot\frac{1}{2}+1\cdot\frac{1}{2} \\
=&\frac{3+\cos(\pi/2N)}{4}.
\end{align*}
There are $4(N-1)$ such terms as above, leading to an overall quantum value of \eqref{e:cbquant} of $3N+2+N\cos(\pi/2N)$, which yields the expression in \eqref{e:cbquantwinprob}.

\subsection{Limiting Behavior of the Three-Party GHZ-Chained-Bell Game as \texorpdfstring{\(N,t\to\infty\)}{NtInfty}}\label{s:limitingbehavior}

The expression \eqref{e:cbquantwinprob} gives the winning probability of a single trial of the GHZ-chain-Bell game. We simply exponentiate this quantity by $t$ to obtain the expression for the probability that in an experiment of $t$ trials, the honest (quantum) implementation observes $t$ successes, and thus exhibits pseudotelepathic behavior. This probability is (at least)
\begin{equation}\label{e:completeness}
\left(\frac{4N + 2 - \frac{\pi^2}{8N}} {4N + 2}\right)^t
=\left(1-\frac{\pi^2 }{32N^2 + 16N}\right)^t>\left(1-\frac{\pi^2 }{32N^2}\right)^t.
\end{equation}
Given a completeness parameter $\epsilon_c>0$, one can meet the request that the probability of protocol success exceeds $1-\epsilon_c$ by choosing appropriate values of $t$ and $N$ in \eqref{e:completeness} -- informally, $t$ must grow more slowly than $N^2$ to ensure the value of \eqref{e:completeness} tends to 1. Turning to the probability of protocol success under the \textit{foil} theory, which we want to keep small (less than $\epsilon_s$) -- this is effectively the p-value for rejecting the null hypothesis of an underlying at-most-bipartite network, given $t$ successes in $t$ trials -- this is given by 
\begin{equation}\label{e:soundness}
\left(\frac{4N + 1} {4N + 2}\right)^t=\left(1-\frac{1} {4N + 2}\right)^t\le \left(1-\frac{1} {6N }\right)^t,
\end{equation}
which we can make arbitrarily small by ensuring $t$ grows more quickly than $N$. If we set $t=N^{3/2}$, we can write
\begin{equation*}
\lim_{N\to \infty}\left(1-\frac{\pi^2 }{32N^2}\right)^t=1 \qquad\quad\textnormal{ and }\qquad\quad\lim_{N\to \infty}\left(1-\frac{1} {6N }\right)^t=0.
\end{equation*}
This means that given two values of $\epsilon_c,\epsilon_s>0$, there exist choices of $N$ and $t$ such that for the $N$th member of our family of experiments, an experiment of $t$ trials will observe $t$ successes with probability exceeding $1-\epsilon_c$ according to the noiseless quantum prediction, and supply a p-value below $\epsilon_s$ for rejecting the null hypothesis of an underlying bipartite-only network. 

\subsection{Generalization To \texorpdfstring{\(n|(n-1)\)}{nnminus1} Separation} \label{s:nsep}

By distributing the role of Charlie to multiple Charlie-like parties and measuring a higher order GHZ state, one can adapt the construction of the above sections to a family of $n$-party quantum behaviors with $\epsilon$-close-to-perfect winning probabilities, for which the behavior cannot be replicated by an underlying network of $(n-1)$-partite resources according to the LOSR-GMNL definition. A similar adaptation strategy for the CHSH case can be found in the earlier work of \cite{Chao17} and the more recent inflation technique arguments of \cite{coiteux:2021a,cao:2022,mao:2022}. We outline how this strategy can be applied to generalize our GHZ-chained-Bell game as follows:
retain the same measurements for $\mathsf A$ and $\mathsf B$, and replace $\mathsf C$ with a collection of $n-2$ parties $\mathsf C^1,...,\mathsf C^{n-2}$, each measuring $Z$ and $X$. Let the measured state be the $n$-party GHZ state $(\ket 0^{\otimes n}+\ket 1^{\otimes n})/\sqrt 2 $. When all ``Charlies" measure $X$, the $\mathsf A$-$\mathsf B$ state is once again steered to one of $\ket {\Phi^+}$ or $\ket {\Phi^-}$, with the outcome now depending on whether an even or odd number of $C^i_X$ outcomes are $-1$. Therefore in the argument of Section \ref{s:cbinflationproof}, we accordingly replace the conditional $\mathsf{CB}'\mathsf{SQ}^*$ quantities with expressions of the form
$\left.\mathsf{CB}'\mathsf{SQ}^*\right|_{\langle \vec{C}_X\rangle =+1}$ and $\left.\mathsf{CB}'\mathsf{SQ}^*\right|_{\langle \vec{C}_X\rangle =-1}$, with $\langle \vec{C}_X\rangle =\pm 1$ representing the parity condition for the collective of Charlie parties, and the role of the starred Charlie party appearing in the ``sameness quantity" terms is assumed by the final Charlie party $\mathsf {C}^{n-2}$. We replace the 3-party cut inflation used in Section \ref{s:cbinflationproof} with an $n$-party version that can be seen in Figure 2 of the longer companion paper of Coiteux-Roy, Wolfe, and Renou \cite{coiteux:2021a}, for which we order the parties as $\mathsf{A}, \mathsf{B}, \mathsf {C}^1,...,\mathsf{C}^{n-2}$ such that adjacent pairs of parties share the full resource heritage spanning all $(n-1)$-partite resources relevant to both of them, but $\mathsf{A}$ and $\mathsf{C}^{n-2}$ are causally disconnected. The argument continues as before with all instances of $C_1=C_X$ replaced with $\langle \vec{C}_X\rangle$ conditions, up to the inflation argument of \eqref{e:cbinf}; now the single application of the bound \eqref{e:pqrprob1} is replaced by $n-2$ iterated applications (note here the parallel to expression (21) of \cite{coiteux:2021a}), to relate the probability of $A^*=C^{n-2,\ast}$ to the sum of the probabilities of $A^*=B^*, B^*=C^{1,\ast}, C^{1,\ast}=C^{2,\ast}, \ldots, C^{n-3,\ast}=C^{n-2,\ast}$, all of which are isomorphic to the un-starred versions of these pairwise conditions. The final result is an adapted version of the predicates in \eqref{e:cbgame} that substitutes instances of $C_X$ with $\langle \vec{C}_X\rangle$, and replaces the final ``sameness" term $B_\emptyset =C_Z$ with $n-2$ corresponding equality terms $B_\emptyset =C^1_Z, C^1_Z =C^2_Z,\ldots, C^{n-3}_Z =C^{n-2}_Z$, spanning the $n-2$ different Charlie parties.

\end{document}